\newtheorem{theorem}{Theorem}[section]
\newtheorem{definition}[theorem]{Definition}
\newtheorem{lemma}[theorem]{Lemma}
\newtheorem{corollary}[theorem]{Corollary}
\newtheorem{proposition}[theorem]{Proposition}
\newtheorem{claim}[theorem]{Claim}
\newcommand{\size}{\mathtt{size}}
\newcommand{\vitem}{\vspace*{-0.05in}\item}
\def\nat{{\mathbb N}}
 \def\real{{\mathbb R}}
\begin{document}

\title{Polynomial Time Algorithms for\\
Branching Markov Decision Processes and\\
Probabilistic Min(Max) Polynomial Bellman Equations} 
\author{Kousha Etessami\\U. of Edinburgh\\{\tt kousha@inf.ed.ac.uk}
\and 
Alistair Stewart\\U. of Edinburgh\\{\tt stewart.al@gmail.com} 
\and
Mihalis Yannakakis\\Columbia U.\\{\tt mihalis@cs.columbia.edu}}

\date{}

\maketitle

\begin{abstract}
  We show that one can approximate 
 the least fixed point solution
for a multivariate system of monotone probabilistic
max (min) polynomial equations, 
referred to as maxPPSs (and minPPSs, respectively),
in time  polynomial in both the encoding size of the system of
equations and in 
$\log(1/\epsilon)$, where $\epsilon > 0$ is the desired additive error 
bound of the solution.  (The model of 
computation is the standard Turing machine model.)
We establish this result using a generalization of Newton's method
which applies to maxPPSs and minPPSs,
even though the underlying functions are only piecewise-differentiable.
This generalizes our recent work which provided a P-time 
algorithm for purely probabilistic PPSs.

These equations form the Bellman optimality equations 
for several important classes of {\em infinite-state} Markov Decision
Processes (MDPs).  Thus,
as a corollary, we obtain the
first polynomial time algorithms for computing
to within arbitrary desired precision the {\em optimal value} vector
for several classes of infinite-state MDPs
which arise as extensions of 
classic, and heavily studied, purely stochastic processes.
These include both the problem of maximizing and
mininizing the {\em termination (extinction) probability} of  multi-type branching MDPs,
stochastic context-free MDPs, and 1-exit Recursive MDPs.

Furthermore, we also show that we can compute in P-time 
an $\epsilon$-optimal  
policy for both maximizing and minimizing 
branching, context-free, and 1-exit-Recursive MDPs,
for any given desired $\epsilon > 0$.
This is despite the fact that actually computing optimal strategies is
{\em Sqrt-Sum}-hard and {\em PosSLP}-hard in this setting.

We also derive, as an easy consequence of these results, an FNP 
upper bound on the complexity
of computing the value (within arbitrary desired precision) of 
branching simple stochastic games (BSSGs) and related 
infinite-state turn-based stochastic
game models. 
\end{abstract}

\section{Introduction}
Markov Decision Processes (MDPs) are a fundamental model for stochastic dynamic
optimization and optimal control, with applications in many fields.  
They extend purely stochastic processes (Markov chains) with a
controller (an agent) who can partially affect the evolution of
the process, and seeks to optimize some objective.
For many important classes of MDPs, the task of
computing the {\em optimal value} of the objective, 
starting at any state of the MDP,
can be rephrased as the problem of solving
the associated {\em Bellman optimality equations}
for that MDP model.   In particular, for finite-state
MDPs where, e.g., the objective is to maximize (or minimize) the probability
of eventually reaching some target state, the associated Bellman
equations are {\em max-(min-)linear} equations, and we know how
to solve such equations in P-time using linear programming 
(see, e.g., \cite{Puterman94}).  The same holds for a number of other
classes of finite-state MDPs.

In many important settings however, the state space of the processes
of interest, both for purely stochastic processes, as well as for
controlled ones (MDPs), is not finite, even though the processes can
be specified in a finite way.  For example, consider {\em
  multi-type branching processes} (BPs) \cite{KolSev47,Harris63}, a
classic probabilistic model with applications in many areas
(biology, physics, etc.).  A BP models the stochastic evolution of a
population of entities of distinct types.  In each generation, every
entity of each type $T$ produces a set of entities of various types in
the next generation according to a given probability distribution on
offsprings for the type $T$.  In a {\em Branching Markov Decision Process}
(BMDP) \cite{pliska76, rotwhit82}, there is a controller who can  take
actions that affect the probability distribution for the sets of
offsprings for each entity of each type.  
For both BPs and BMDPs, 
the state space consists of all possible populations, given by the number
of entities of the various types, so there are an infinite number of states.
From the computational point of view,
the usefulness of such infinite-state models 
hinges on whether their analysis remains tractable.

In recent years there has been a body of research
aimed at studying the computational complexity of key
analysis problems associated
with MDP extensions (and, more general stochastic game extensions)
of important classes of finitely-presented but
{\em countably infinite-state} stochastic processes, including
controlled extensions of
classic multi-type branching processes (i.e., BMDPs),
and {\em stochastic context-free grammars}, 
and discrete-time {\em quasi-birth-death processes}.
In \cite{rmdp} a model called {\em recursive Markov decision
processes} (RMDP) was studied 
that is in a precise sense 
more general than all of these, and
forms the MDP extension of {\em recursive Markov chains} \cite{rmc}
(and equivalently, {\em probabilistic pushdown systems} \cite{EKM}),
or it can be  viewed alternatively as the extension
of finite-state MDPs with recursion.

A central analysis problem for all of these models,
which  forms the key to a number of other analyses, is 
the problem of computing their 
{\em optimal termination (extinction) probability}.
For example, in the setting of multi-type Branching MDPs (BMDPs), 
these key quantities are 
the maximum (minimum) probabilities, over all control strategies (or policies), 
that starting from a single entity of a given type, the process will eventually
reach extinction (i.e., the state where no entities have survived).
From these quantities, one can compute the  optimum probability for
any initial population, as well as other quantities of interest.

One can indeed form Bellman optimality equations
for the  optimal extinction probabilities of BMDPs,
and for a number of related important infinite-state MDP models.  
However, it turns out that these optimality equations are no longer 
max/min {\em linear} but rather are max/min {\em polynomial} equations (\cite{rmdp}).
Specifically, the Bellman equations for BMDPs
with the objective of maximizing (or minimizing) extinction probability
are multivariate systems of monotone probabilistic max (or  min) 
polynomial equations, which we call {\bf max/minPPSs}, of the form
$x_i=P_i(x_1,\ldots,x_n)$, $i=1,\ldots,n$, 
where each $P_i(x) \equiv \max_j q_{i,j}(x)$ 
(respectively $P_i(x) \equiv \min_j q_{i,j}(x)$) is the max (min) over
a finite number of probabilistic polynomials, $q_{i,j}(x)$.
A {\em probabilistic polynomial}, $q(x)$, is a multi-variate polynomial
where the monomial coefficients and constant term of $q(x)$ 
are all non-negative and sum to $\leq 1$.
We write these equations in vector form as $x = P(x)$.
Then $P(x)$ defines  a mapping $P:[0,1]^n \rightarrow [0,1]^n$
that is monotone, and thus (by Tarski's theorem) 
has a {\em least fixed point} in
$[0,1]^n$.
The equations $x=P(x)$, can have more than one solution, but it turns
out that the 
optimal value vector for the corresponding BMDP is precisely the
least fixed point (LFP) solution vector $q^* \in [0,1]^n$,
i.e., the (coordinate-wise) least non-negative solution 
(\cite{rmdp}).

Already for pure stochastic multi-type branching processes (BPs),
the extinction probabilities may be irrational values.
The problem of {\em deciding}
whether the extinction probability of a BP is $\geq p$, 
for a given probability $p$ is in PSPACE (\cite{rmc}), and
likewise, deciding whether the optimal extinction probability
of a BMDP is $\geq p$ is in PSPACE (\cite{rmdp}).
These PSPACE upper bounds appeal to decision procedures
for the existential theory of reals for solving the associated 
(max/min)PPS equations.
However, already for BPs,
it was shown in \cite{rmc} that this quantitative {\em decision} problem is already
at least as hard as the {\em square-root sum} problem,
as well as a (much) harder and more fundamental 
problem called {\em PosSLP}, which captures the power of
unit-cost exact rational arithmetic.  It is 
a long-standing open problem whether
either of these decision problems is in NP,
or even in the polynomial time hierarchy (see \cite{ABKM06,rmc} for more information on these problems).   
Thus, such {\em quantitative  decision problems} are unlikely
to have P-time algorithms, even in the purely stochastic setting,
so we can certainly not expect to find P-time algorithms for the extension
of these models to the MDP setting.
On the other hand, it was shown in \cite{rmc} and \cite{rmdp},
that for both BPs and BMDPs the {\em qualitative} decision problem
of deciding whether the optimal extinction 
probability $q^*_i = 0$ or whether $q^*_i = 1$, can be solved
in polynomial time.

Despite decades of theoretical and practical
work on computational problems like extinction relating to 
multi-type branching processes, and 
equivalent termination problems related to stochastic context-free grammars,
until recently it was not even known whether one could
obtain {\em any} non-trivial
{\em approximation} of the extinction probability of a purely stochastic
multi-type branching processes (BP) in P-time.
The extinction probabilities of pure BPs are the LFP of a system of 
probabilistic polynomial equations (PPS), without max or min.
In recent work \cite{ESY12}, we provided the first polynomial
time algorithm for computing (i.e., approximating) to within
any desired additive error $\epsilon > 0$ the LFP of a given PPS,
and hence the extinction probability vector $q^*$
for a given pure stochastic BP, 
in time polynomial in both the encoding
size of the PPS (or the BP) and in $\log(1/\epsilon)$.
The algorithm works in the standard Turing model of computation.
Our algorithm was based on an approach using 
Newton's method that was first introduced and studied in \cite{rmc}.
In \cite{rmc} the approach was studied for more
general systems of {\em monotone} polynomial equations (MPSs), and
it was subsequently further studied in \cite{lfppoly}. 

Note that unlike PPSs and MPSs, the min/maxPPSs that define
the Bellman equations for BMDPs
are no longer differentiable 
functions (they are only piecewise differentiable).
Thus, a priori, it is not even clear how one could apply
a Newton-type method toward solving them.

In this paper we extend the results of \cite{ESY12}, and 
provide the first polynomial time algorithms for approximating
the LFP of both maxPPSs and minPPSs, and thus the first polynomial
time algorithm for computing (to within any desired additive error) 
the optimal value vector for
BMDPs with the objective of maximizing or minimizing their extinction probability.

Our approach is based on a {\em generalized Newton's method} (GNM), that 
extends Newton's method in a natural way to the setting of max/minPPSs,
where each iteration requires the computation of the 
least (greatest) solution of a 
max- (min-) linear system of equations, both of which we show can be solved using linear
programming.   Our approach also makes crucial use
of the P-time algorithms in \cite{rmdp} for {\em qualitative}
analysis of max/min BMDPs, which allow us to remove 
variables $x_i$  where the LFP is $q^*_i = 1$ or where $q^*_i =0$.
The algorithms themselves have the nice feature that they
are relatively simple, although the analysis of their correctness and
time complexity is rather involved.

We furthermore show that we can compute $\epsilon$-optimal
(pure) strategies (policies) for both
maxPPSs and minPPSs, for {\em any} given desired $\epsilon > 0$, in 
time polynomial in both the encoding size of the max/minPPS and
in $\log(1/\epsilon)$.
This result is at first glance rather surprising, because
there are only a bounded number of distinct pure policies
for a max/minPPS, and computing an optimal policy is PosSLP-hard.
The proof of this result involves an intricate analysis of
bounds on the norms of certain matrices associated with (max/min)PPSs.

Finally, we consider
{\em Branching simple stochastic games} (BSSGs),
which are two-player turn-based stochastic games,
where one player wants to maximize, and the other wants to minimize,
the extinction probability (see \cite{rmdp}).   
The {\em value} of these games (which are determined) 
is characterized by the LFP solution of associated min-maxPPSs
which combine both min and max operators (see \cite{rmdp}).
We observe that 
our results easily imply a FNP upper bound for $\epsilon$-approximating the {\em value}
of BSSGs and computing $\epsilon$-optimal strategies for them.

\medskip

\noindent {\bf Related work:} 
We have already mentioned some of the important relevant results.
BMDPs and related processes have been studied previously in both
the operations research (e.g. \cite{pliska76,rotwhit82,denrot05a}) 
and computer science literature (e.g. \cite{rmdp,EGKS08,BBFK08}), but no 
efficient algorithms
were known for the (approximate) computation of the relevant optimal
probabilities and policies; the best known upper bound was PSPACE \cite{rmdp}.

In \cite{rmdp} we introduced Recursive Markov Decision Processes (RMDPs), 
a recursive extension of MDPs. 
We showed that for general RMDPs, the problem of computing the optimal termination probabilities,
even within any nontrivial approximation, is undecidable. However, 
we showed for the
important class of 1-exit RMDPs (1-RMDP), the optimal probabilities can be expressed 
by min (or max) PPSs, and in fact the problems of computing (approximately) the
LFP of a min/maxPPS and the termination probabilities of  a max/min 1-RMDP, 
or BMDP, are all polynomially equivalent. We furthermore showed 
in \cite{rmdp} that there 
are always pure, memoryless
optimal policies for both maximizing and minimizing 1-RMDPs (and for
the more general turn-based stochastic games). 

In \cite{EWY08}, 1-RMDPs with a different objective were studied, 
namely optimizing the total expected reward in a setting
with positive rewards. In that setting, things are much simpler:
the Bellman equations turn out to be max/min-linear, 
the optimal values are rational, and they
can be computed {\em exactly} in P-time using linear programming.

A work that is more closely related to this paper is \cite{EGKS08}
by
Esparza, Gawlitza, Kiefer, and Seidl. 
They studied more general monotone
min-maxMPSs, i.e., systems of monotone polynomial equations that
include both min and max operators, and they presented two 
different iterative 
analogs of Newton's methods
for approximating the LFP of a min-maxMPS, $x=P(x)$.
Their methods are related to ours, but differ in key respects.
Both of their methods use certain piece-wise linear functions 
to approximate the min-maxMPS 
in each iteration, which is also what one does to
solve  each
iteration of our generalized Newton's method.
However, the precise nature of their piece-wise linearizations,
as well as how they solve them, differ 
in important ways from ours, even when they are applied in the
specific context of maxPPSs or minPPSs.  
They show, working in the unit-cost {\em exact} arithmetic model,
that using their methods one can compute
$j$ ``valid bits'' of the LFP (i.e., compute
the LFP within relative error
at most $2^{-j}$)  
in $k_P + c_P\cdot j$ 
iterations, where $k_P$ and $c_P$ are
terms that depend in {\em some} way on the input system, $x=P(x)$.
However, they give no 
constructive upper bounds on $k_P$, and their upper bounds 
on $c_P$ are exponential in the number
$n$ of variables of $x=P(x)$.
Note that MPSs are more difficult:
even without the min and max operators, we know that it is PosSLP-hard
to approximate their LFP within any nontrivial constant additive error $c<1/2$,
even for pure MPSs that arise from Recursive Markov Chains \cite{rmc}.

Another subclass of RMDPs,  called
{\em one-counter MDPs} (a controlled extension of one-counter Markov chains
and Quasi-Birth-Death processes \cite{EWY10}) has been studied,
and the approximation of their  optimal termination probabilities 
was recently shown to be computable,
but only in {\em exponential time}  (\cite{BBEK11}).
This subclass is incomparable with 1-RMDPs and BMDPs, and 
does not have min/maxPPSs as Bellman equations.

\section{Definitions and Background}
For an $n$-vector of variables $x = (x_1,\ldots,x_n)$, and 
a vector $v \in \nat^n$,  we use the shorthand notation $x^v$ to denote
the monomial $x_1^{v_1}\ldots x^{v_n}_{n}$.
Let $\langle \alpha_r \in \nat^n \mid r \in R \rangle$ be a multi-set
of $n$-vectors of natural numbers, indexed by the set $R$.
Consider a multi-variate polynomial $P_i(x) = \sum_{r \in R} p_r x^{\alpha_r}$,
for some rational-valued coefficients $p_r$, $r \in R$.
We shall call $P_i(x)$ a {\bf\em monotone polynomial} if  
$p_r \geq 0$ for all $r \in R$.
If in addition, we also have   $\sum_{r \in R} p_r \leq 1$,
then we shall call $P_i(x)$ a {\bf\em probabilistic polynomial}.

\begin{definition} A {\bf probabilistic} (respectively, {\bf monotone}) {\bf polynomial system of equations}, 
$x = P(x)$, which we shall call
a  {\bf PPS} (respectively, a {\bf MPS}),  
is a system of $n$ equations, $x_i = P_i(x)$,  in $n$ variables $x = (x_1,x_2,...,x_n)$,
  where for all $i \in \{ 1,2,...n\}$,  $P_i(x)$ is a probabilistic (respectively, monotone) polynomial.

A {\bf maximum-minimum probabilistic polynomial system of equations}, $x=P(x)$,
called a  {\bf max-minPPS} 
is a system of $n$ equations in $n$ variables $x= (x_1,x_2,\ldots,x_n)$,  
where for all $i \in \{1,2,\ldots,n\}$, either: 
\begin{itemize}
\item {\tt Max-polynomial}: $P_i(x) = \max  \{q_{i,j}(x): j \in \{1,...,m_i\}\}$,   Or: 

\item {\tt Min-polynomial}: $P_i(x) = \min  \{q_{i,j}(x): j \in \{1,...,m_i\}\}  \ $
\end{itemize}

\noindent where each $q_{i,j}(x)$ is a probabilistic polynomial, for every 
$j \in \{1,\ldots,m_i\}$.

We shall call such a system a {\bf maxPPS} (respectively, a {\bf minPPS})
if for every $i \in \{1,\ldots,n\}$, $P_i(x)$ is a {\tt Max-polynomial} (respectively, a {\tt Min-polynomial}).

Note that we can view a PPS in n variables 
as a maxPPS, or as a minPPS, where $m_i =1$ for every $i \in \{1,\ldots,n\}$.
\end{definition}

For computational purposes we assume that all the coefficients are rational.
We assume that the polynomials in a system are given in sparse form,
i.e., by listing only the nonzero terms, with the coefficient and the nonzero exponents 
of each term given in binary. 
We let $|P|$ denote the total bit encoding length of a system $x=P(x)$
under this representation.

We use  {\bf max/minPPS} to refer to a system of equations, $x=P(x)$, 
that is either a maxPPS or a minPPS. 
While \cite{rmdp} 
also considered systems of equations containing both max and min equations
(which we refer to as {\bf max-minPPSs}), our primary focus will 
be on systems that contain just one or the other.
(But we shall also obtain results about max-minPPSs as a corollary.) 

As was shown in \cite{rmdp}, any max-minPPS, $x=P(x)$, 
has a {\bf least fixed point} ({\bf LFP}) 
solution, $q^* \in [0,1]^n$, i.e.,  $q^* = P(q^*)$ and if $q = P(q)$ for some $q \in [0,1]^n$ then $q^* \leq q$
(coordinate-wise inequality).  As observed in \cite{rmc, rmdp},  $q^*$ may in general
contain irrational values, even in the case of PPSs.
The central results of this paper  yield P-time algorithms for computing
$q^*$ to within arbitrary precision, both in the case of maxPPSs and minPPSs.
As we shall explain,
our P-time upper bounds for computing (to within any desired accuracy) the 
least fixed point of maxPPSs and minPPSs will also yield, as corollaries,
FNP upper bounds for computing approximately the LFP of max-minPPSs.

\begin{definition} We define a  {\bf policy} 
for a max/minPPS, $x=P(x)$, to be a function 
$\sigma: \{1,...n\} \rightarrow \mathbb{N}$ 
such that $1 \leq \sigma(i) \leq m_i$.
\end{definition}

Intuitively, for each variable, $x_i$, a policy selects 
one of the probabilistic polynomials, $q_{i,\sigma(i)}(x)$,
that appear on the RHS of the equation $x_i = P_i(x)$, and
which $P_i(x)$ is the maximum/minimum over.

\begin{definition} Given a max/minPPS $x=P(x)$ over $n$ variables, and a policy $\sigma$ for $x=P(x)$, 
we define the PPS $x=P_\sigma(x)$ by:
$$(P_\sigma)_i(x) = q_{i,\sigma(i)}$$
for all $i \in \{1,\ldots,n\}$.
\end{definition}

Obviously, since a PPS is a special case of a max/minPPS, 
every PPS also has a unique LFP solution (this was established earlier in \cite{rmc}). 
Given a max/minPPS, $x =P(x)$, and a policy, $\sigma$,
we use $q^*_\sigma$ to denote the LFP solution vector 
for the PPS $x=P_\sigma(x)$.

\begin{definition} For a maxPPS, $x=P(x)$, a policy $\sigma^*$ 
is called {\bf optimal} if for all other policies $\sigma$, 
$q^*_{\sigma^*} \geq q^*_\sigma$.
For a minPPS $x=P(x)$ a policy $\sigma^*$ is optimal if for all other policies 
$\sigma$, $q^*_{\sigma^*} \leq q^*_\sigma$.
A policy $\sigma$ is {\bf $\epsilon$-optimal} for $\epsilon >0$ if
$||q^*_{\sigma} -q^*||_{\infty} \leq \epsilon$.
\end{definition}

\noindent A non-trivial fact is that optimal policies always exist, and 
furthermore that they actually attain the
LFP $q^*$ of the max/minPPS:

\begin{theorem}[\cite{rmdp}, Theorem 2] \label{optexist} 
For any max/minPPS, $x=P(x)$, 
there always exists an optimal policy $\sigma^*$,
and furthermore $q^* = q^*_{\sigma^*}$.\footnote{Theorem 2 
of \cite{rmdp} is stated in the more general context of 
1-exit Recursive Simple
Stochastic Games and shows that also for max-minPPSs, both
the max player and the min player have optimal policies that 
attain the LFP $q^*$.} 
\end{theorem}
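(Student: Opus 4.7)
The plan is to handle the minPPS case first, which is straightforward, and then the maxPPS case, which is considerably more subtle. For a minPPS, every policy $\sigma$ satisfies $P_\sigma(x) \geq P(x)$ coordinatewise on $[0,1]^n$, since $P_i(x) = \min_j q_{i,j}(x) \leq q_{i,\sigma(i)}(x) = (P_\sigma)_i(x)$. Kleene iteration starting from $0$ then yields $q^*_\sigma \geq q^*$ for every policy $\sigma$. For the matching upper bound, choose $\sigma^*(i) \in \arg\min_j q_{i,j}(q^*)$ for each $i$; then $P_{\sigma^*}(q^*) = P(q^*) = q^*$, so $q^*$ is a fixed point of $P_{\sigma^*}$, forcing $q^*_{\sigma^*} \leq q^*$ because $q^*_{\sigma^*}$ is by definition the least fixed point. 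Combining the two inequalities yields $q^*_{\sigma^*} = q^*$, and optimality follows from $q^*_\sigma \geq q^* = q^*_{\sigma^*}$ for all $\sigma$.

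For a maxPPS, the dual inequality $P_\sigma(x) \leq P(x)$ gives $q^*_\sigma \leq q^*$ for every $\sigma$; the delicate task is to realize equality. The naive choice $\sigma^*(i) \in \arg\max_j q_{i,j}(q^*)$ can fail in the presence of ties---for instance, for $P(x_1) = \max(x_1, 1/2)$ the identity action achieves the max at $q^* = 1/2$ but has LFP $0$---so I would argue instead by policy iteration. Start from any policy $\sigma_0$ and form $q^*_{\sigma_0}$. Since $P(q^*_{\sigma_k}) \geq P_{\sigma_k}(q^*_{\sigma_k}) = q^*_{\sigma_k}$, the LFP of the current policy is always a pre-fixed point of $P$; if moreover $P(q^*_{\sigma_k}) = q^*_{\sigma_k}$, then $q^*_{\sigma_k}$ is itself a fixed point of $P$, forcing $q^* \leq q^*_{\sigma_k} \leq q^*$ and exhibiting $\sigma_k$ as optimal. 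Otherwise, on every coordinate $i$ admitting a strictly better action $j$ (i.e.\ $q_{i,j}(q^*_{\sigma_k}) > (q^*_{\sigma_k})_i$), switch $\sigma_k(i)$ to such a $j$, and keep $\sigma_k(i)$ unchanged elsewhere, producing $\sigma_{k+1}$.

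The main obstacle is to show that this improvement step strictly increases the LFP in some coordinate, so that the finiteness of the policy space forces termination in at most $\prod_i m_i$ rounds. The subtlety is that $q^*_{\sigma_{k+1}}$ is by definition the least fixed point of $P_{\sigma_{k+1}}$, not the Kleene limit of iterating $P_{\sigma_{k+1}}$ from $q^*_{\sigma_k}$, and a priori these could differ. To bridge this gap, I would first invoke the polynomial-time qualitative decomposition of \cite{rmdp} to peel off variables with $q^*_i = 0$ (where any action sending $0$ to $0$ suffices) and $q^*_i = 1$ (for which an explicit witnessing policy falls out of the qualitative analysis). On the remaining coordinates, where $0 < q^*_i < 1$, the Jacobian of $P_{\sigma_{k+1}}$ at an interior fixed point can be controlled to have spectral radius at most $1$, which licenses a monotone iteration starting from $q^*_{\sigma_k}$ that converges precisely to $q^*_{\sigma_{k+1}}$ and certifies $q^*_{\sigma_{k+1}} \geq q^*_{\sigma_k}$ with strict inequality at the switched coordinate. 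Finiteness of the policy space then produces the claimed optimal $\sigma^*$ with $q^*_{\sigma^*} = q^*$.
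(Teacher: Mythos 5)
First, note that this theorem is not proved in the paper at all: it is imported verbatim from \cite{rmdp} (Theorem~2 there, proved in the more general setting of 1-exit recursive simple stochastic games), so there is no in-paper proof to compare against. Judged on its own terms, your minPPS half is correct and complete: the Kleene comparison gives $q^*\le q^*_\sigma$ for every $\sigma$, and the greedy choice $\sigma^*(i)\in\arg\min_j q_{i,j}(q^*)$ makes $q^*$ a fixed point of $P_{\sigma^*}$, hence $q^*_{\sigma^*}\le q^*$, and equality follows. You also correctly diagnose why the dual greedy choice fails for maxPPSs.

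The maxPPS half, however, has a genuine gap at exactly the step you flag and then do not close: the claim that the improvement step yields $q^*_{\sigma_{k+1}}\ge q^*_{\sigma_k}$ with a strict increase somewhere. What you actually have is that $z:=q^*_{\sigma_k}$ is a \emph{pre}-fixed point of the new map, $P_{\sigma_{k+1}}(z)\ge z$; iterating $P_{\sigma_{k+1}}$ from $z$ therefore converges monotonically upward to \emph{some} fixed point dominating $z$, but nothing you say shows that this fixed point is the \emph{least} one. Your proposed bridge --- that a spectral-radius bound on the Jacobian ``licenses a monotone iteration starting from $q^*_{\sigma_k}$ that converges precisely to $q^*_{\sigma_{k+1}}$'' --- is circular: if $q^*_{\sigma_{k+1}}\not\ge q^*_{\sigma_k}$, then the iteration from $q^*_{\sigma_k}$ necessarily converges to a fixed point strictly above the LFP in some coordinate, so asserting convergence to the LFP presupposes the very inequality you are trying to prove. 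Moreover, the spectral-radius facts available (Lemma~\ref{invnonneg}, Theorem 3.6 of \cite{ESY12}) apply at points known to lie \emph{below} the LFP of the system in question, and $q^*_{\sigma_k}$ is not known to lie below $q^*_{\sigma_{k+1}}$ --- again, that is the claim at issue. The paper itself warns (after Lemma~\ref{minattained}) that the analogous policy-improvement argument ``might fail to work for maxPPSs''; while that remark concerns the linearized systems, it reflects the same asymmetry you have run into. Without a correct proof of the monotone-improvement step (or a different argument altogether, as in \cite{rmdp}), the policy-iteration process is not shown to be monotone or even terminating, and the maxPPS case remains open in your write-up. The termination clause of your argument (if $P(q^*_{\sigma_k})=q^*_{\sigma_k}$ then $\sigma_k$ is optimal) is fine; it is reaching that clause that is unproved.
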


Probabilistic polynomial systems can be used to capture central probabilities 
of interest for several basic stochastic models, 
including Multi-type Branching Processes (BP), 
Stochastic Context-Free Grammars (SCFG) and the class of 1-exit Recursive Markov Chains (1-RMC) \cite{rmc}. 
Max- and minPPSs can be similarly used to capture the central optimum probabilities of corresponding stochastic optimization models:
(Multi-type) Branching Markov Decision processes (BMDP), Context-Free MDPs (CF-MDP), 
and 1-exit Recursive Markov Decision Processes (1-RMDP) \cite{rmdp}.
We now define BMDPs and 1-RMDPs.

\medskip

A {\bf Branching Markov Decision Process} (BMDP) consists of a finite set 
$V=\{T_1, \ldots, T_n\}$ of types, a finite set $A_i$ of actions for each type,
and a finite set $R(T_i,a)$ of probabilistic rules for each type $T_i$ 
and action $a_i \in A_i$. Each rule $r \in  R(T_i,a)$ has the form
$T_i \stackrel{p_r}{\rightarrow} \alpha_r$, where $\alpha_r$ is a finite multi-set 
whose elements are in $V$, $p_r \in (0,1]$ is the probability of the rule,
and the sum of the probabilities of all the rules in $R(T_i,a)$ is equal to 1:
$ \sum_{r \in R(T_i,a) } p_r = 1 $. 

Intuitively, a BMDP describes the stochastic evolution of entities of given types
in the presence of a controller that can influence the evolution. Starting from an
initial population (i.e. set of entities of given types) $X_0$ at time (generation) 0,
a sequence of populations $X_1, X_2, \ldots$ is generated, where $X_k$ is obtained from
$X_{k-1}$ as follows. 
First the controller selects for each entity of $X_{k-1}$ an available action for the type of the entity; 
then a rule is chosen independently and simultaneously for every entity of $X_{k-1}$ probabilistically 
according to the probabilities of the rules for the type of the entity and the selected action, 
and the entity is replaced by a new set of entities with the types specified by the right-hand side
of the rule. The process is repeated as long as the current population $X_k$ is nonempty, 
and terminates if and when $X_k$ becomes empty. 
The objective of the controller is either to minimize the probability of termination (i.e., extinction of
the population), in which case the process is a minBMDP, or to maximize the termination probability, 
in which case it is a maxBMDP. 
At each stage, $k$, the controller is allowed in principle to select the actions for the entities of $X_k$ 
based on the whole past history, may use randomization (a mixed strategy) and may make different choices 
for entities of the same type. 
However, it turns out that these flexibilities do not increase the controller's power, 
and there is always an optimal pure, 
memoryless strategy that always chooses the same action 
for all entities of the same type (\cite{rmdp}). 

For each type $T_i$ of a minBMDP (respectively, maxBMDP), let $q^*_i$ 
be the minimum (resp. maximum) probability of termination 
if the initial population consists of a single entity of type $T_i$. 
From the given minBMDP (maxBMDP) we can construct
a minPPS (resp. maxPPS) $x=P(x)$ whose LFP is precisely the vector $q^*$ of optimal 
termination (extinction) probabilities (see Theorem 20 in the full version of \cite{rmdp}): 
The min/max polynomial $P_i(x)$ for each type $T_i$ contains one polynomial
$q_{i,j}(x)$ for each action $j \in A_i$, with 
$q_{i,j}(x) = \sum_{r \in R(T_i,j)} p_r x^{\alpha_r}$.

\medskip

A {\bf 1-exit Recursive Markov Decision Process} (1-RMDP) consists of
a finite set of components $A_1, \ldots , A_k$, where each component $A_i$
is essentially a finite-state MDP augmented with the ability to
make recursive calls to itself and other components.
Formally, each component $A_i$ has a finite set $N_i$ of nodes, which are partitioned
into probabilistic nodes and controlled nodes, and a finite set $B_i$ of "boxes"
(or supernodes), where each box is mapped to some component.
One node $en_i$ is specified as the entry of the component $A_i$ and one node $ex_i$
as the exit of $A_i$.\footnote{The restriction to having only one entry node is not important;
any multi-entry RMDP can be efficiently transformed to an 1-entry RMDP.
The restriction to 1-exit is very important: 
multi-exit RMDPs lead to undecidable termination problems, 
even for any non-trivial approximation of the optimal values \cite{rmdp}.}
The exit node has no outgoing edges.
All other nodes and the boxes have outgoing edges; the edges
out of the probabilistic nodes and boxes are labelled with probabilities,
where the sum of the probabilities out of the same node or box is equal to 1.

Execution of a 1-RMDP starts at some node,
for example, the entry $en_1$ of component $A_1$.
When the execution is at a probabilistic node $v$, then an edge out of $v$
is chosen randomly according to the probabilities of the edges out of $v$.
At a controlled node $v$, an edge out of $v$ is chosen by a controller who seeks
to optimize his objective. When the execution reaches a box $b$ of $A_i$ mapped to
some component $A_j$, then the current component is suspended and
a recursive call to $A_j$ is initiated at its entry node $en_j$; if the call to $A_j$
terminates, i.e. reaches eventually its exit node $ex_j$, 
then the execution of component $A_i$ resumes from box $b$ following an edge out of $b$ 
chosen according to the probability distribution of the outgoing edges of $b$.
Note that a call to a component can make further recursive calls,
thus, at any point there is in general a stack of suspended recursive calls,
and there can be an arbitrary number of such suspended calls;
thus, a 1-RMDP induces generally an infinite-state MDP.
The process terminates when the execution reaches the exit of the component
of the initial node and there are no suspended recursive calls.

There are two types of 1-RMDPs with a termination objective:
In a min 1-RMDP (resp. max 1-RMDP) the objective of the controller is
to minimize (resp. maximize) the probability of termination.
In principle, a controller can use the complete past history of the process
and also use randomization (i.e. a mixed strategy) to select at each point
when the execution reaches a controlled node which edge to select out of the node.
As shown in \cite{rmdp} however, there is always an optimal strategy
that is pure, stackless and memoryless, i.e., selects deterministically one edge
out of each controlled node, the same one every time, 
independent of the stack and of the past history
(including the starting node).
From a given min or max 1-RMDP we can construct efficiently 
a minPPS or maxPPS, whose LFP yields the
minimum or maximum termination probabilities for all the
different possible starting vertices of the 1-RMDP \cite{rmdp}.
Conversely, from any given min/max PPS, we can efficiently construct a 1-RMDP
whose optimal termination probabilities yield the LFP of the min/max PPS. 
The system of equations for a 1-RMDP has a particularly simple form.
All max/minPPS can be put in that form.

It is convenient to put max/minPPS in the following simple form.

\begin{definition} A maxPPS in {\bf simple normal form (SNF)}, $x= P(x)$, 
is a system of $n$ equations in $n$ variables $x_1,x_2,...x_n$  where each $P_i(x)$ for $i = 1,2,...n$ 
is in one of three forms:
\begin{itemize}
\item {\tt Form L}: $P(x)_i = a_{i,0} + \sum_{j=1}^n a_{i,j} x_j$,
where  $a_{i,j} \geq 0$ for all $j$, and  such that $\sum_{j=0}^n a_{i,j} \leq 1$
\item {\tt Form Q}:  $P(x)_i = x_j x_k$ for some $j,k$
\item {\tt Form M}: $P(x)_i =\max  \{ x_j ,  x_k \}$ for some $j,k$
\end{itemize}
We define {\bf SNF form} for minPPSs analogously: only the definition of ``{\tt Form M}'' changes, 
replacing $\max$ with $\min$.
\end{definition}

In the setting of a max/minPPS in SNF form, for simplicity in notation, when 
we talk about a policy, if $P_i(x)$ has form $M$, say $P_i(x) \equiv \max \{ x_j, x_k\}$,
then when it is clear from the context we will use $\sigma(i) = k$ to mean that the 
policy $\sigma$ chooses $x_k$ among the two choices $x_j$ and $x_k$ 
available in $P_i(x) \equiv \max \{x_j,x_k\}$.

\begin{proposition}[cf. Proposition 7.3 \cite{rmc}]
\label{prop:snf-form}
Every max/minPPS, $x = P(x)$, can be transformed in P-time
to an ``{\em equivalent}''  max/minPPS ,  $y=Q(y)$  in SNF form,
such that  $|Q| \in O( |P|  )$.
More precisely, the variables $x$ are 
a subset of the variables $y$, the LFP of  $x = P(x)$
is the projection of the LFP of  $y=Q(y)$,
and an optimal policy (respectively, $\epsilon$-optimal
policy) for  $x = P(x)$ can be obtained in P-time from an
optimal (resp., $\epsilon$-optimal) policy of  $y=Q(y)$.
\end{proposition}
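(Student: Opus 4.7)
The plan is to reduce $x = P(x)$ in stages, introducing fresh auxiliary variables at each stage and showing that each stage preserves the LFP and the correspondence between policies. Throughout I keep the invariant that every equation introduced is either {\tt Form L}, {\tt Form Q}, or {\tt Form M}, and that every auxiliary variable is uniquely determined by the original variables at any fixed point, so that by back-substitution the fixed points of $y = Q(y)$ project exactly onto the fixed points of $x = P(x)$.

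First I would handle the max/min operators. For each equation $x_i = \max\{q_{i,1}(x),\ldots,q_{i,m_i}(x)\}$ with $m_i \geq 2$, I introduce fresh variables $u_{i,2},\ldots,u_{i,m_i-1}$ and replace the equation by the chain $x_i = \max\{q_{i,1}(x), u_{i,2}\}$, $u_{i,2} = \max\{q_{i,2}(x), u_{i,3}\}$, $\ldots$, $u_{i,m_i-1} = \max\{q_{i,m_i-1}(x), q_{i,m_i}(x)\}$. Then for each probabilistic polynomial $q_{i,j}(x)$ appearing here I introduce a fresh variable $y_{i,j}$ with the equation $y_{i,j} = q_{i,j}(x)$, so every max-equation is now of the form $\max\{y, y'\}$ of two variables, i.e.\ {\tt Form M}. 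The case of min equations is symmetric. If $m_i = 1$, I simply replace $P_i(x) = q_{i,1}(x)$ directly by a single polynomial equation; this retains its {\tt Form M}/{\tt Form L}/{\tt Form Q} classification once reduced below (we can also force it into {\tt Form M} by $x_i = \max\{y_{i,1},y_{i,1}\}$ if needed).

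Second I would reduce each probabilistic-polynomial equation $y = q(x)$ to SNF. Write $q(x) = p_0 + \sum_{r \in R} p_r x^{\alpha_r}$ (with $p_r \geq 0$, $p_0 + \sum_r p_r \leq 1$). For each monomial $x^{\alpha_r}$ of degree $d_r \geq 2$, chain it as a product of two variables at a time via fresh {\tt Form Q} variables: for a degree-$d$ monomial $x_{i_1}x_{i_2}\cdots x_{i_d}$, set $z_1 = x_{i_1}x_{i_2}$, $z_2 = z_1 x_{i_3}$, $\ldots$, $w_r = z_{d-2} x_{i_d}$. This adds $d_r - 1$ new variables, each of {\tt Form Q}. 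Replacing each monomial of degree $\geq 2$ by its abbreviation $w_r$ leaves the equation $y = p_0 + \sum_{r:\,d_r=1} p_r x_{i_r} + \sum_{r:\,d_r \geq 2} p_r w_r$, which is {\tt Form L}, and remains probabilistic since the coefficients sum to $p_0 + \sum_r p_r \leq 1$. Since the sparse binary encoding already charges for each exponent and coefficient, the total number of new variables and bits is $O(|P|)$, giving $|Q| \in O(|P|)$.

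For correctness of the LFP: because the auxiliaries $u, y_{i,j}, z_\ell, w_r$ can each be eliminated by back-substitution to recover the original equations, the map $y \mapsto y|_x$ from fixed points of $y = Q(y)$ to fixed points of $x = P(x)$ is a bijection, and is visibly monotone in both directions, hence sends LFP to LFP. Policies transfer directly: every {\tt Form M} equation of $y = Q(y)$ sits inside the chain that implements one original $\max$ (or $\min$), and a choice of two-sided branches along the chain picks out exactly one index $\sigma(i) \in \{1,\ldots,m_i\}$ of $x = P(x)$, and vice versa; because the $x$-coordinates of the two LFPs coincide, $\epsilon$-optimality is preserved in both directions. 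The main bookkeeping obstacle is simply certifying the $O(|P|)$ size bound and the polynomial-time policy translation; both follow the pattern of Proposition 7.3 of \cite{rmc}, extended straightforwardly to accommodate the $\max/\min$ operators via the chaining in the first step.
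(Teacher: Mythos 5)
Your overall strategy is the same as the paper's: introduce fresh variables for each $q_{i,j}$, chain $m$-ary max/min into binary {\tt Form M} equations, abbreviate monomials, and argue that back-substitution gives a bijection between fixed points (and between policies). However, there is one genuine gap, in your second stage. The paper's encoding convention gives the nonzero exponents of each monomial \emph{in binary}, so a single monomial such as $x_1^{\,2^k}$ has encoding length $O(k)$ but degree $2^k$. Your reduction writes a degree-$d$ monomial out as a product $x_{i_1}x_{i_2}\cdots x_{i_d}$ and chains it with $d-1$ fresh {\tt Form Q} variables; for binary-encoded exponents this produces exponentially many new variables in $|P|$, so the claims $|Q|\in O(|P|)$ and ``P-time'' both fail. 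Your justification that ``the sparse binary encoding already charges for each exponent'' is exactly backwards: it charges only $\log_2 l$ bits for an exponent $l$. The paper closes this gap with a repeated-squaring step: for each variable $x_i$ occurring with exponent $>1$ it introduces $x_{i_1}=x_i^2$, $x_{i_2}=x_{i_1}^2$, \dots\ up to the logarithm of the highest exponent, and then replaces $x_i^l$ by the product of those $x_{i_j}$ whose corresponding bit of $l$ is $1$; only after that does it chain the resulting (now short) products into binary {\tt Form Q} equations. With that modification, and noting that the repeated-squaring auxiliaries are still uniquely determined by the original variables at any fixed point (so your back-substitution argument for the LFP and policy correspondence goes through unchanged), your proof matches the paper's.
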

\begin{proof}
We can easily convert, in P-time, any max/minPPS  into SNF form, 
using the following procedure.
\begin{itemize}
\item For each equation $x_i = P_i(x) = \text{max } \{p_1(x), \ldots ,p_m(x)\}$, 
for each $p_j(x)$ on the right-hand-side that is not a variable, add a new variable 
$x_k$, replace $p_j(x)$ with $x_k$ in $P_i(x)$, 
and add the  new equation $x_k = p_j(x)$. 
Do similarly if $P_i(x) = \min \{p_1(x), \ldots, p_m(x) \}$.

\item If $P_i(x) = \text{max } \{x_{j_1},...,x_{j_m}\}$ with $m >2$, then add $m-2$
new variables $x_{i_1}, \ldots, x_{i_{m-2}}$, 
set $P_i(x) =  \text{max } \{x_{j_1},x_{i_1}\}$,
and add the equations $x_{i_1} = \text{max } \{x_{j_2},x_{i_2}\}$, 
$x_{i_2} = \text{max } \{x_{j_3},x_{i_3}\}$, $\ldots$, 
$x_{i_{m-2}} = \text{max } \{x_{j_{m-1}},x_{j_m}\}$. 
Do similarly if $P_i(x) = \min \{x_{j_1},...,x_{j_m}\}$ with $m >2$.

\item For each equation $x_i =P_i(x) = \sum_{j=1}^m p_j x^{\alpha_j}$,
where $P_i(x)$ is a probabilistic polynomial that is not just a 
constant or a single monomial, 
replace every monomial $x^{\alpha_j}$ on the right-hand-side
that is not a single variable by a new variable $x_{i_j}$ 
and add the equation  $x_{i_j}=x^{\alpha_j}$.

\item For each variable $x_i$ that occurs in some polynomial with
exponent higher than 1, introduce new variables 
$x_{i_1}, \ldots, x_{i_k}$ where $k$ is the logarithm of
the highest exponent of $x_i$ that occurs in $P(x)$,
and add equations $x_{i_1} = x_i^2$,  $x_{i_2} = x_{i_1}^2$, $\ldots$,
$x_{i_k} = x_{i_{k-1}}^2$.
For every occurrence of a higher power $x_i^l$, $l>1$, of $x_i$ in $P(x)$,
if the binary representation of the exponent $l$ is $a_k \dots a_2 a_1 a_0$,
then we replace $x_i^l$ by the product of the variables $x_{i_j}$  such that the
corresponding bit $a_j$ is 1, and $x_i$ if $a_0=1$.
After we perform this replacement for all the higher powers of
all the variables, every polynomial of total degree >2 is just a product of variables.

\item If a polynomial $P_i(x) = x_{j_1} \cdots x_{j_m}$ in the current system
is the product of $m>2$ variables, then add $m-2$
new variables $x_{i_1}, \ldots, x_{i_{m-2}}$, 
set $P_i(x) = x_{j_1}x_{i_1}$,
and add the equations $x_{i_1} = x_{j_2} x_{i_2}$, 
$x_{i_2} = x_{j_3} x_{i_3}$, $\ldots$, 
$x_{i_{m-2}} = x_{j_{m-1}} x_{j_m}$. 
\end{itemize}
Now all equations are of the form L, Q, or M.

The above procedure allows us to 
convert any max/minPPS into one in SNF form by introducing $O(|P|)$ new variables and 
blowing up the size of $P$ by a constant factor $O(1)$. 
Furthermore, there is an obvious (and easy to compute) bijection between  
policies for the resulting SNF form max/minPPS and the original max/minPPS.
\end{proof}

Thus from now on, and for the
rest of this paper {\em we assume, without loss of generality, that all max/minPPSs are in 
SNF normal form. }

We now summarize some of the main previous results on PPSs and max/minPPSs.

\begin{proposition}[\cite{rmdp}]
\label{prob1-ptime-scfg-prop}
There is a P-time algorithm that, given a minPPS or maxPPS, $x=P(x)$, over $n$ variables,
with LFP $q^* \in \real^n_{\geq 0}$,
determines for every $i = 1,\ldots,n$ whether $q^*_i = 0$ or $q^*_i = 1$
or $0< q^*_i <1$.
\end{proposition}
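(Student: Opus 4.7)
The plan is to reduce to the polynomial-time qualitative-analysis algorithm of \cite{rmdp}, which itself generalizes the pure-PPS algorithm of \cite{rmc}. By Proposition~\ref{prop:snf-form} we may assume $x=P(x)$ is in SNF. The procedure decomposes into two independent saturations: first identify $\{i : q^*_i = 0\}$, and then, on the residual system, identify $\{i : q^*_i = 1\}$.

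For the zero set, I would compute $Z^+ := \{i : q^*_i > 0\}$ by the following monotone saturation. Initialize $Z^+ := \emptyset$, and iteratively add $i$ to $Z^+$ based on the form of $P_i$: for Form L, if $a_{i,0} > 0$ or some $j$ with $a_{i,j}>0$ already lies in $Z^+$; for Form Q ($P_i = x_j x_k$), if both $j,k \in Z^+$; for Form M in a maxPPS, if $j \in Z^+$ or $k \in Z^+$; for Form M in a minPPS, if both $j,k \in Z^+$. A straightforward induction on the Kleene iterates $P^k(0) \uparrow q^*$ shows that after at most $n$ rounds the set of positive coordinates of $P^k(0)$ is exactly what the saturation computes, and that it stabilizes at $\{i : q^*_i > 0\}$. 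Each round is polynomial, so the whole computation runs in polynomial time, and its complement is $\{i : q^*_i = 0\}$.

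Next I would substitute $x_j := 0$ for every $j$ with $q^*_j = 0$ and delete the corresponding equations; the result is still an SNF max/minPPS whose LFP agrees with $q^*$ on the retained coordinates, all of which lie in $(0,1]$. On this reduced system, the task is to identify $U := \{i : q^*_i = 1\}$. By Theorem~\ref{optexist}, in the maxPPS case $q^*_i = \max_\sigma (q^*_\sigma)_i$ and in the minPPS case $q^*_i = \min_\sigma (q^*_\sigma)_i$, so the test reduces to, respectively, whether some policy $\sigma$ has $(q^*_\sigma)_i = 1$, or whether every policy does. I would compute $U$ by a second iterative saturation over the dependency graph of the equations, greedily constructing (for maxPPS) a policy that pushes variables into $U$, or dually (for minPPS) ruling out variables by detecting a policy that certifies a positive escape probability. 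In both cases each round reduces to a reachability computation in the dependency graph and therefore is polynomial, and only polynomially many rounds are needed.

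The main obstacle is the correctness of the ``ones'' saturation: one must argue that the combinatorial certificate used to declare $i \notin U$ exactly matches the analytic condition $q^*_i < 1$. One direction requires showing that when an escape structure (a cycle/reachability pattern in the dependency graph, together with a policy choice at M-nodes) is present, the LFP of the induced pure PPS is strictly below $\mathbf{1}$; this essentially uses the pure-PPS argument from \cite{rmc} after fixing the policy. The converse, that $q^*_i < 1$ forces the existence of such a structure under the witnessing policy, uses Theorem~\ref{optexist} to reduce to a pure-PPS question and again invokes the \cite{rmc} analysis. Combining these directions with the two saturations above yields the claimed polynomial-time algorithm.
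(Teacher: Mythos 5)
This proposition is not proved in the paper at all; it is quoted from \cite{rmdp}, so the relevant question is whether your sketch actually constitutes a proof. The first half does: your saturation for $\{i: q^*_i>0\}$ (constant term positive for Form L, both factors positive for Form Q, ``or'' for max-nodes, ``and'' for min-nodes) is correct, is justified by the Kleene iterates exactly as in Lemma~\ref{detect0}, and runs in polynomial time. The second half, however, has a genuine gap. Deciding whether $q^*_i=1$ is \emph{not} a combinatorial property of the dependency graph, so no saturation whose rounds are ``reachability computations in the dependency graph'' can be correct. Concretely, the one-variable PPSs $x=\frac{1}{2}x^2+\frac{1}{2}$ and $x=\frac{3}{4}x^2+\frac{1}{4}$ have identical dependency structure (and identical SNF skeletons up to coefficients), yet the first has $q^*=1$ and the second has $q^*=\frac{1}{3}$. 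Even for a pure PPS the almost-sure test hinges on a numerical condition --- for a strongly connected system with $q^*>0$ one has $q^*=\mathbf{1}$ iff $\rho(P'(\mathbf{1}))\leq 1$ (Theorem 8.1 of \cite{rmc}, quoted in the appendix of this paper) --- so any correct algorithm must examine the coefficients, e.g.\ via eigenvalue tests or linear programming on the bottom SCCs, not just the edge relation.

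The max/min layer on top of this is also not as routine as you suggest. For a maxPPS, ``$q^*_i=1$ iff some policy $\sigma$ has $(q^*_\sigma)_i=1$'' is true but there are exponentially many policies, and your greedy construction is unspecified and unproven; the paper itself warns that this case ``is not so simple'' and that the P-time algorithm of \cite{rmdp} first produces a \emph{randomized} partial policy (Lemma 12 of \cite{rmdp}), from which a pure one is then extracted by a separate argument invoking Theorem~\ref{optexist}. For a minPPS one must certify that \emph{every} policy terminates with probability $<1$ from $x_i$, which again is a universally quantified statement over exponentially many PPSs, each requiring a spectral test. So while your overall decomposition (strip the zeros, then detect the ones) matches the structure of the cited algorithms, the core of the proposition --- the polynomial-time almost-sure analysis --- is precisely the part your sketch does not supply, and the specific mechanism you propose for it (graph reachability) is provably insufficient.
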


Thus, given a max/minPPS we can find in P-time all the variables $x_i$ such that
 $q^*_i = 0$ or $q^*_i = 1$, remove them and their corresponding equations
$x_i = P_i(x)$, and substitute  their values on the RHS of the remaining equations.
This yields a new max/minPPS, $x' = P'(x')$, where its LFP solution, 
$q'^*$, is ${\textbf 0} < q'^* < {\textbf 1}$, which corresponds to 
the remaining coordinates of $q^*$. 
Thus, it suffices to focus our attention to systems whose LFP is strictly between 0 and 1.

The decision problem of determining whether a coordinate $q^*_i$ of the LFP
is $\geq 1/2$ (or whether $q^*_i \geq r$ for any other given bound $r \in (0,1)$) is at least as hard as
the Square-Root-Sum and the PosSLP problems even for PPS (without the min
and max operator) \cite{rmc} and hence it is highly unlikely that it can be solved in P.

The problem of approximating efficiently the LFP of a PPS was solved recently in \cite{ESY12},
by using Newton's method  after elimination of the variables with value 0 and 1.

\begin{definition} For a PPS $x=P(x)$ we use $P'(x)$ to denote the Jacobian matrix of partial derivatives of $P(x)$,
i.e., $P'(x)_{i,j} := \frac{\partial P_i(x)}{\partial x_j}$. 
For a point $x \in \mathbb{R}^n$, if $(I-P'(x))$ is 
non-singular, then we define one Newton iteration at $x$ via the operator:
$$\mathcal{N}(x) = x + (I-P'(x))^{-1}(P(x) -x)$$
Given a max/minPPS, x=P(x), and a policy $\sigma$,  
we use $\mathcal{N}_\sigma(x) $ to denote the Newton operator
of the PPS $x=P_{\sigma}(x)$; 
i.e., if $(I-P'_\sigma(x))$ is 
non-singular at a point $x \in \mathbb{R}^n$,
then $\mathcal{N}_\sigma(x) = x + (I-P'_\sigma(x))^{-1}(P_\sigma(x) -x)$.
\end{definition}

\begin{theorem}[Theorem 3.2 and Corollary 4.5 of \cite{ESY12}] Let $x=P(x)$ be a PPS 
with rational coefficients in SNF form
which has least fixed point $0 < q^* < 1$.
If we conduct iterations
of  Newton's method as follows: $x^{(0)} := 0$, and for $k \geq 0$: 
$x^{(k + 1)} := \mathcal{N}(x^{(k)})$, 
then the Newton operator $\mathcal{N}(x^{(k)})$ is defined for all $k \geq 0$, and for any $j>0$:
$$\|q^* - x^{(j + 4|P|)}\|_\infty \leq 2^{-j} $$
where $|P|$ is the total bit encoding length of the system  $x=P(x)$.

Furthermore, there is an algorithm (based on 
suitable rounding of Newton iterations)
which, given a PPS, $x=P(x)$, and given a positive integer $j$,
computes a rational vector $v \in [0,1]^n$, such
that $||q^*-v ||_\infty \leq 2^{-j}$, 
and which runs in time polynomial in $|P|$ and $j$
in the standard Turing model of computation.
\end{theorem}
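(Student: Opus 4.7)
The plan is to mimic the classical analysis of Newton's method for monotone systems, specialized to PPSs in SNF form, where the nonlinearity is purely quadratic.

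First, I would check that the Newton operator is well defined along the trajectory. Since $P$ is a probabilistic polynomial map, for $x$ with $0 \leq x \leq q^* < \mathbf{1}$ the Jacobian $P'(x)$ is a non-negative matrix whose row sums are strictly less than one at $q^*$ (since $q^* < \mathbf{1}$ component-wise and the coefficients of $P$ are non-negative and sum-bounded by $1$). Hence $\rho(P'(x)) < 1$, the matrix $I - P'(x)$ is non-singular and $(I - P'(x))^{-1} = \sum_{k \geq 0} P'(x)^k$ is entry-wise non-negative. A standard convexity/monotonicity argument for monotone Newton iteration then shows that $x^{(0)} = \mathbf{0} \leq x^{(1)} \leq x^{(2)} \leq \cdots \leq q^*$.

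Second, I would derive the fundamental error recursion. Because $P$ is in SNF, every equation is of Form L (linear) or Form Q (a product $x_j x_k$); there are no terms of degree higher than $2$. A direct Taylor-style calculation then gives
\[
P(q^*) - P(x^{(k)}) \;=\; P'(x^{(k)})(q^* - x^{(k)}) \;+\; R_k,
\]
where $R_k \geq \mathbf{0}$ has entries of the form $(q^*_j - x^{(k)}_j)(q^*_l - x^{(k)}_l)$ for Form Q rows and $0$ for Form L rows. Substituting $P(q^*) = q^*$ and rearranging the definition of $\mathcal{N}$, one obtains the clean identity
\[
q^* - x^{(k+1)} \;=\; (I - P'(x^{(k)}))^{-1} R_k,
\]
which is the engine of the whole argument; in particular $\|R_k\|_\infty \leq \|q^* - x^{(k)}\|_\infty^2$.

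Third, and this is the main technical hurdle, I would bound $\|(I - P'(x^{(k)}))^{-1}\|_\infty$ by $2^{O(|P|)}$ uniformly in $k$. The idea is to use monotonicity $P'(x^{(k)}) \leq P'(q^*)$ to reduce to bounding $(I - P'(q^*))^{-1}$, then interpret this matrix as the matrix of expected visit counts in the subcritical substochastic chain naturally associated with $P'(q^*)$. The rationality of the coefficients, together with the hypothesis $0 < q^* < \mathbf{1}$, then yields a quantitative lower bound of $2^{-O(|P|)}$ on the "escape slack'' $\mathbf{1} - q^*$, which in turn bounds the entries of the inverse by $2^{O(|P|)}$. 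Combined with the previous step this gives
\[
\|q^* - x^{(k+1)}\|_\infty \;\leq\; c_P \cdot \|q^* - x^{(k)}\|_\infty^2, \qquad c_P \leq 2^{O(|P|)},
\]
so once $\|q^* - x^{(k)}\|_\infty \leq 1/(2c_P)$ the error halves at each further iteration. A separate argument (again using monotonicity plus the substochastic interpretation) shows that the first $O(|P|)$ iterations reach this "good'' regime, which tuned carefully yields the stated bound with the constant $4$.

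For the algorithmic statement, I would repeat the above analysis with a rounding operator inserted at the end of each Newton step, rounding down (to stay below $q^*$) to $\mathrm{poly}(|P|, j)$ bits of precision. The linear-algebra step inside each iteration (solving $(I-P'(x^{(k)}))y = P(x^{(k)}) - x^{(k)}$) is performed in polynomial time by Gaussian elimination over the rationals. A perturbation analysis, again leveraging the $2^{O(|P|)}$ bound on $(I - P'(\cdot))^{-1}$, shows that a per-step rounding budget of $2^{-(j+O(|P|))}$ is enough to preserve both monotonicity and the convergence guarantee, so only polynomially many bits need to be carried. The main obstacle, as noted, is the uniform bound in the third step: every subsequent estimate and the whole rounding analysis are calibrated against $c_P$, and it is precisely there that the hypothesis $0 < q^* < \mathbf{1}$ (which lets us eliminate degenerate coordinates via Proposition~\ref{prob1-ptime-scfg-prop}) is indispensable.
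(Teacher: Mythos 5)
Your overall architecture is the classical quadratic-convergence analysis of Newton's method: the error identity $q^*-x^{(k+1)}=(I-P'(x^{(k)}))^{-1}R_k$ with $\|R_k\|_\infty\le\|q^*-x^{(k)}\|_\infty^2$, combined with an explicit $2^{O(|P|)}$ bound on $\|(I-P'(\cdot))^{-1}\|_\infty$. That is a genuinely different route from the one this theorem is actually proved by in \cite{ESY12}: there the engine is the ``halving lemma'' (Lemma \ref{ppshalf}), which works in the cone direction $\mathbf{1}-q^*$ and shows that $q^*-x\le\lambda(\mathbf{1}-q^*)$ implies $q^*-\mathcal{N}(x)\le\frac{\lambda}{2}(\mathbf{1}-q^*)$. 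In that derivation the factor $(I-P'(x))^{-1}$ cancels against $(I-P'(x))$ (the cancellation is visible in the chain of inequalities proving Lemma \ref{halfineq}), so no norm bound on the inverse is ever needed, and the base case $q^*-x^{(0)}=q^*\le(\mathbf{1}-q^*)_{\min}^{-1}(\mathbf{1}-q^*)$ with $(\mathbf{1}-q^*)_{\min}\ge 2^{-4|P|}$ (Lemma \ref{1-qbound}) immediately yields the $j+4|P|$ iteration count.

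There are two concrete problems with your version. First, your well-definedness argument is wrong as stated: the row sums of $P'(q^*)$ are \emph{not} strictly less than $1$ in general. For a Form Q row $P_i(x)=x_jx_k$ the $i$-th row sum of $P'(q^*)$ is $q^*_j+q^*_k$, which can be close to $2$. The true statement $\rho(P'(y))<1$ for $0\le y\le q^*<\mathbf{1}$ (Theorem \ref{thm:spec-full}, i.e., Theorem 3.6 of \cite{ESY12}) needs the SCC decomposition and Perron--Frobenius argument of Lemma \ref{invnonneg}, not a row-sum bound. Second, and more seriously, the quadratic recursion $e_{k+1}\le c_Pe_k^2$ is vacuous from the starting point $x^{(0)}=\mathbf{0}$, where $e_0=\|q^*\|_\infty$ may be close to $1$ while $c_P\gg 1$; everything therefore hinges on your asserted ``separate argument'' that $O(|P|)$ iterations reach the regime $e_k\le 1/(2c_P)=2^{-\Theta(|P|)}$, and you give no indication of how to prove it. That initial-phase claim is precisely the hard content of the theorem, and the way it is actually established is (a version of) the halving lemma itself --- at which point the quadratic phase becomes redundant for the stated $j+4|P|$ bound. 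The rounding analysis inherits the same gap, since it too is calibrated entirely against $c_P$ and the unproven initial phase; the paper's corresponding analysis (cf.\ the proof of Theorem \ref{mptime}) instead re-runs the halving lemma with an additive $2^{-h}$ perturbation per step and sums a geometric series.
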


The proof of the theorem involves a number of technical lemmas on PPS and Newton's method, 
several of which we will also need in this paper, some of them in strengthened form.

\begin{lemma}{(c.f., Lemma 3.6 of \cite{ESY12})}
\label{newtbounds} 
Given a PPS, $x=P(x)$, with LFP $q^* > 0$, 
if $0 \leq y \leq q^*$, and if $(I-P'(y))^{-1}$ exists and is non-negative
(in which case clearly $\mathcal{N}(y)$ 
is defined),   
then $\mathcal{N}(y) \leq q^*$ holds.\footnote{Note that the Lemma does
not claim that $\mathcal{N}(y) \geq 0$ holds.  Indeed, it may not.}
\end{lemma}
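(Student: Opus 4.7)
The plan is to exploit the standard ``convexity-like'' inequality for monotone polynomials, combined with the assumption that $(I - P'(y))^{-1}$ exists and is non-negative. Specifically, I would first establish that for a monotone polynomial $P$ (i.e., one with non-negative coefficients) and any vectors $y, z$ with $0 \leq y \leq z$, we have
\[
P(z) \;\geq\; P(y) + P'(y)(z - y).
\]
Once this is in hand, applying it with $z = q^*$ (which is allowed since $y \leq q^*$) and using $P(q^*) = q^*$ yields
\[
q^* \;\geq\; P(y) + P'(y)(q^* - y), \qquad \text{i.e.,} \qquad (I - P'(y))(q^* - y) \;\geq\; P(y) - y.
\]
Since $(I - P'(y))^{-1}$ exists and is non-negative, multiplying the componentwise inequality on the left by this matrix preserves the direction, giving
\[
q^* - y \;\geq\; (I - P'(y))^{-1}(P(y) - y),
\]
which rearranges to $\mathcal{N}(y) \leq q^*$, as desired.

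The only nontrivial step is the monotone-polynomial inequality above. I would prove it by reducing to the one-dimensional case: fix $0 \leq y \leq z$ and consider the univariate polynomial
\[
f(t) \;=\; P(y + t(z - y)), \qquad t \in [0,1].
\]
Expanding each monomial $c\, x^\alpha$ (with $c \geq 0$) of $P$ into a polynomial in $t$ via the multinomial expansion, every coefficient of $t^k$ in $f(t)$ is a sum of products of non-negative quantities (entries of $y$, entries of $z - y$, and the coefficient $c$), hence is non-negative. Therefore all Taylor coefficients of $f$ at $t = 0$ are non-negative, and in particular
\[
P(z) \;=\; f(1) \;=\; \sum_{k \geq 0} \frac{f^{(k)}(0)}{k!} \;\geq\; f(0) + f'(0) \;=\; P(y) + P'(y)(z - y).
\]

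This inequality is the main conceptual step; everything else is bookkeeping. The potential obstacle one should flag is that the hypothesis gives neither $\mathcal{N}(y) \geq 0$ nor $\mathcal{N}(y) \geq y$ (as the footnote already warns), so one cannot, for instance, try to argue via monotone convergence from $y$ upward; the proof must go directly through the linearization inequality and the non-negativity of $(I - P'(y))^{-1}$, as above.
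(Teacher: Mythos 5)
Your proof is correct. It follows the same overall strategy as the paper—compare $q^*$ with the linearization of $P$ at $y$ and then push the resulting componentwise inequality through the non-negative matrix $(I-P'(y))^{-1}$—but the key technical ingredient is different. The paper invokes an \emph{exact} identity from Lemma 3.4 of \cite{ESY12},
\[
q^* - \mathcal{N}(y) \;=\; (I-P'(y))^{-1}\,\tfrac{P'(q^*)-P'(y)}{2}\,(q^*-y),
\]
which is a closed-form second-order remainder valid because SNF systems are (at most) quadratic, and then reads off non-negativity of each factor. You instead prove the one-sided convexity bound $P(z)\geq P(y)+P'(y)(z-y)$ for arbitrary monotone polynomials via the univariate reduction $f(t)=P(y+t(z-y))$ with non-negative Taylor coefficients, and apply it at $z=q^*$. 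Your route is self-contained (it does not import the cited identity, nor does it rely on the SNF/degree-2 structure), and it generalizes verbatim to monotone polynomial systems of any degree; the paper's route buys an exact expression for the gap $q^*-\mathcal{N}(y)$, which is reused elsewhere for the quantitative convergence analysis, whereas your inequality suffices only for the one-sided bound needed here. Your closing remark—that one cannot argue via monotone iteration from $y$ because $\mathcal{N}(y)$ need not dominate $y$ or even $0$—is an accurate reading of why the argument must go through the linearization.
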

\begin{proof}

In Lemma 3.4 of \cite{ESY12} it was established that when 
$(I-P'(y))$ is non-singular,
i.e., $(I-P'(y))^{-1}$ is defined, and thus $\mathcal{N}(y)$ 
is defined, then 
\begin{equation}\label{eqn:newt-diff}
q^* - {\mathcal N}(y) =  (I - P'(y))^{-1} \frac{P'(q^*)-P'(y)}{2} (q^*-y)
\end{equation}
Now, since all polynomials in $P(x)$ have non-negative coefficients,
it follows that the Jacobian $P'(x)$ is monotone in $x$, and thus 
since $y \leq q^*$,
we have that $P'(q^*) \geq P'(y)$.
Thus $(P'(q^*) - P'(y)) \geq 0$, and by assumption $(q^*-y) \geq 0$.
Thus, by the assumption that $(I - P'(y))^{-1} \geq 0$, we have 
by equation (\ref{eqn:newt-diff}) that 
$q^*- {\mathcal N}(y) \geq 0$, i.e., that $q^* \geq {\mathcal N}(y)$.
\end{proof} 

\noindent We also need the following, which is a less immediate consequence of
results in \cite{ESY12}:

\begin{lemma} \label{invnonneg} Given a PPS, $x= P(x)$, with LFP $q^* >0$,  if  $0 \leq y \leq q^*$, and $y < 1$, 
then $(I - P'(y))^{-1}$ exists and is non-negative. 
\end{lemma}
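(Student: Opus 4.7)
The plan is to prove the stronger statement that $\rho(P'(y)) < 1$; since $P'(y) \geq 0$, this will give $(I-P'(y))^{-1} = \sum_{k \geq 0} P'(y)^k \geq 0$. I would partition the indices as $T := \{i : q^*_i = 1\}$ and $S := \{i : q^*_i < 1\}$. A short case analysis on the SNF form of $P_i$ for $i \in T$, using the probabilistic constraint together with $q^*_j \leq 1$, shows that $P_i(x)$ involves only variables indexed by $T$: for an $L$-row, $q^*_i = a_{i,0} + \sum_j a_{i,j} q^*_j = 1$ combined with $a_{i,0} + \sum_j a_{i,j} \leq 1$ forces $a_{i,j}(1-q^*_j) = 0$, so $a_{i,j} = 0$ for $j \in S$; for a $Q$-row $P_i = x_j x_k$, the condition $q^*_j q^*_k = 1$ forces $j,k \in T$. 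Hence $P'(y)$ is block upper-triangular with diagonal blocks $A_{SS}$ and $A_{TT}$, and $\rho(P'(y)) = \max\{\rho(A_{SS}),\rho(A_{TT})\}$; it suffices to bound each separately.

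For the $S$-block, form the restricted system $\tilde P(x_S) := P(x_S, \mathbf{1})$. One checks that $\tilde P$ is a genuine PPS whose LFP is exactly $q^*_S$ and satisfies $0 < q^*_S < \mathbf{1}$. The corresponding lemma of \cite{ESY12} (which covers PPSs whose LFP lies strictly inside $(0,1)^n$) then gives $(I - \tilde P'(y_S))^{-1} \geq 0$, i.e.\ $\rho(\tilde P'(y_S)) < 1$. Because each partial derivative $\partial P_i/\partial x_j$ for $i,j \in S$ is monotone non-decreasing in $x_T$ and $y_T \leq \mathbf{1}$, we have $A_{SS} \leq \tilde P'(y_S)$ entrywise, and hence $\rho(A_{SS}) \leq \rho(\tilde P'(y_S)) < 1$.

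The hard step is bounding $\rho(A_{TT})$: since $q^*_T = \mathbf{1}$, the ESY12 lemma no longer applies directly and the strict contraction must be extracted from the hypothesis $y_T < \mathbf{1}$. The plan is to decompose the dependency graph of the $T$-block into strongly connected components and bound $\rho(A_{TT,CC})$ for each SCC $C$. If every row in $C$ is of form $L$, then $A_{TT,CC}$ is independent of $y$; iterating the self-contained linear system on $C$ (with downstream $T$-variables fixed at their LFP value $1$) from $0$ must converge to $\mathbf{1}$, which forces a nonzero affine input to reach every row of $C$, and a Collatz--Wielandt argument on the resulting irreducible substochastic block yields $\rho(A_{TT,CC}) < 1$. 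If $C$ contains a $Q$-row, then the mean-matrix block $M_C := P'(\mathbf{1})_{CC}$ is irreducible with $\rho(M_C) \leq 1$ (critical or subcritical branching), and $A_{TT,CC} \leq M_C$ with strict inequality at the $Q$-row entries because $y_T < \mathbf{1}$; testing the strictly positive Perron eigenvector of $M_C$ against $A_{TT,CC}$ and propagating the resulting strict slack by a few powers of $A_{TT,CC}$ produces, via a standard Perron--Frobenius strict-monotonicity argument, $\rho(A_{TT,CC}) < \rho(M_C) \leq 1$. Taking the maximum over SCCs gives $\rho(A_{TT}) < 1$, and combining with the $S$-block completes the proof.
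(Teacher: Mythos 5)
Your proposal is correct, and it organizes the argument differently from the paper. Both proofs reduce the claim to $\rho(P'(y))<1$ and both ultimately rest on the same two ingredients: Theorem 3.6 of \cite{ESY12} for the part of the system where $q^*<1$, and the criticality bound $\rho(P'(\mathbf{1}))\leq 1$ for strongly connected critical subsystems (Lemma 6.5 / Theorem 8.1 of \cite{rmc}). The difference is in the reduction. The paper works with the full matrix $A=P'(y)$: after assuming w.l.o.g.\ $y>0$, it takes a nonnegative eigenvector $v$ for $\rho(A)$, locates the topologically last SCC on which $v$ is strictly positive, shows that this single SCC already realizes $\rho(A)$, and then applies the strongly connected case to that SCC with the remaining variables frozen at $y$. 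You instead partition the coordinates into $T=\{i: q^*_i=1\}$ and $S=\{i:q^*_i<1\}$, prove the (nice, and not stated in the paper) structural fact that $T$ is closed under dependency, obtain block triangularity for free, dispatch the $S$-block by a monotone comparison with the PPS obtained by setting the $T$-variables to $\mathbf{1}$, and dispatch each SCC of the $T$-block by the strict domination $P'(y)_{CC}\leq P'(\mathbf{1})_{CC}$ with inequality somewhere (using $y<\mathbf{1}$), together with the standard Perron--Frobenius fact that entrywise domination by an irreducible matrix with $A\neq B$ forces $\rho(A)<\rho(B)$. Your route avoids the eigenvector-support argument and the reduction to $y>0$; the cost is that you must verify a few facts you currently only assert: that the restricted systems (the $S$-block with $x_T:=\mathbf{1}$, and each $T$-SCC with its downstream variables set to $\mathbf{1}$) are genuine PPSs whose LFPs are exactly the corresponding restrictions of $q^*$ (needed both to invoke Theorem 3.6 of \cite{ESY12} and to get $\rho(M_C)\leq 1$ from the criticality criterion), and the all-linear-SCC subcase where the affine term must be shown nonzero. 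All of these are true and routine, but they should be written out for a complete proof.
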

The proof of this lemma is more involved and is given in the appendix.
To prove the polynomial-time upper bounds in \cite{ESY12}, an inductive step of the 
following form was used:
\begin{lemma}[Combining Lemma 3.7 and Lemma 3.5 of \cite{ESY12}] 
\label{ppshalf} 
Let $x=P(x)$ be a PPS in SNF with $0 < q^* < 1$. For any
$0 \leq x \leq q^*$ and 
$\lambda > 0$, the operator $\mathcal{N}(x)$ is defined,
 $\mathcal{N}(x) \leq q^*$, and if
$q^* - x \leq \lambda(1-q^*)$
then
$q^* - \mathcal{N}(x) \leq \frac{\lambda}{2}(1-q^*)$.
\end{lemma}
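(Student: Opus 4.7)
The plan is to first dispose of the two easy claims, then attack the contraction inequality using the exact identity \eqref{eqn:newt-diff} together with a row-by-row analysis enabled by SNF form.

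First I would invoke Lemma \ref{invnonneg}: since $0 \leq x \leq q^* < 1$, the inverse $(I-P'(x))^{-1}$ exists and is non-negative, so $\mathcal{N}(x)$ is well-defined. Then Lemma \ref{newtbounds} immediately yields $\mathcal{N}(x) \leq q^*$. So the content of the lemma is really the contraction inequality.

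For that, I would rewrite identity \eqref{eqn:newt-diff} as
\[
q^* - \mathcal{N}(x) \;=\; (I-P'(x))^{-1}\,\tfrac{P'(q^*)-P'(x)}{2}\,(q^*-x),
\]
and, since $(I-P'(x))^{-1} \geq 0$, it would be enough to prove the stronger inequality
\[
\tfrac{P'(q^*)-P'(x)}{2}\,(q^*-x) \;\leq\; \tfrac{\lambda}{2}\,(I-P'(x))\,(1-q^*),
\]
so that multiplying both sides by $(I-P'(x))^{-1}$ collapses the right-hand side to $\tfrac{\lambda}{2}(1-q^*)$. I would check this row by row, using the SNF form. For a row of \texttt{Form L}, $P_i$ is affine so the $i$-th row of the LHS is zero, while the $i$-th coordinate of $(I-P'(x))(1-q^*)$ equals $1-P_i(1) \geq 0$, making the inequality trivial. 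The interesting case is \texttt{Form Q}, where $P_i(x)=x_j x_k$: then the $i$-th entry of the LHS equals $(q^*_j - x_j)(q^*_k - x_k)$, and a direct expansion gives
\[
(I-P'(x))(1-q^*)\,|_i \;=\; (1-x_k)(1-q^*_j) + (q^*_j - x_j)(1-q^*_k).
\]

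The main obstacle, and the crux of the argument, is showing
\[
(q^*_j - x_j)(q^*_k - x_k) \;\leq\; \tfrac{\lambda}{2}\bigl[(1-x_k)(1-q^*_j) + (q^*_j - x_j)(1-q^*_k)\bigr].
\]
The trick I would use is an averaging step: applying the hypothesis $q^*-x \leq \lambda(1-q^*)$ to one factor at a time yields the two bounds $(q^*_j - x_j)(q^*_k - x_k) \leq \lambda(1-q^*_j)(q^*_k - x_k)$ and $(q^*_j - x_j)(q^*_k - x_k) \leq \lambda(q^*_j - x_j)(1-q^*_k)$; averaging produces the desired right-hand side except that the first term has $(q^*_k - x_k)$ in place of $(1-x_k)$, and here one exploits $q^*_k \leq 1$ to replace $(q^*_k-x_k)$ by the larger $(1-x_k)$. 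This completes the row-by-row verification and hence the lemma. The delicate point worth flagging is precisely the need to upper-bound $(q^*_k-x_k)$ by $(1-x_k)$ rather than by $\lambda(1-q^*_k)$: using the latter would cost an extra factor of $\lambda$ and wreck the contraction when $\lambda$ is not small.
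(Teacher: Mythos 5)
Your proof is correct. The paper itself does not prove Lemma \ref{ppshalf} (it is imported from \cite{ESY12}), but it proves the strengthened variant Lemma \ref{halfineq}, whose specialization $y:=q^*$ recovers \ref{ppshalf}; comparing against that, your skeleton is identical --- start from the identity $q^*-\mathcal{N}(x)=(I-P'(x))^{-1}\tfrac{1}{2}(P'(q^*)-P'(x))(q^*-x)$, use non-negativity of the inverse, and reduce to showing $\tfrac{1}{2}(P'(q^*)-P'(x))(q^*-x)\le\tfrac{\lambda}{2}(I-P'(x))(1-q^*)$. Where you diverge is in how that middle inequality is established. The paper does it in two matrix-level steps: first $(P'(q^*)-P'(x))(q^*-x)\le\lambda(P'(q^*)-P'(x))(1-q^*)$ by pushing the hypothesis through the non-negative matrix $P'(q^*)-P'(x)$, and then $(P'(q^*)-P'(x))(1-q^*)\le(I-P'(x))(1-q^*)$, which rests on the separate fact $P'(q^*)(1-q^*)\le(1-q^*)$ (Lemma 3.5 of \cite{ESY12}, one of the two cited ingredients). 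Your row-by-row SNF verification is an unrolled, self-contained version of exactly this: your ``averaging'' of the two one-factor bounds in the Form Q case is the coordinate-wise content of the first matrix step, and your replacement of $(q^*_k-x_k)$ by $(1-x_k)$ gives up precisely the slack $(1-q^*_j)(1-q^*_k)\ge 0$, which is the Form Q instance of $P'(q^*)(1-q^*)\le(1-q^*)$ (just as your $1-P_i(1)\ge 0$ is the Form L instance). What your route buys is that it needs no separate lemma and makes the source of the factor $\tfrac12$ completely explicit; what it costs is generality --- the matrix formulation extends verbatim to the setting of Lemma \ref{halfineq} with an arbitrary $y$ satisfying $y\le P(y)\le 1$ in place of $q^*$, which the paper needs for the minPPS analysis, whereas your computation uses $q^*_i=q^*_jq^*_k$ and would have to be redone with $y_i \le P_i(y)$ there.
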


If we knew an optimal policy $\tau$ for a max/minPPS, $x=P(x)$, 
then we would be able to solve the problem 
of computing the LFP for a max/minPPS using the algorithm in \cite{ESY12} 
for approximating $q^*_\tau$, 
because we know $q^*_\tau=q^*$.  
Unfortunately, we do not know which policy is optimal. 
There are exponentially many policies, so it would be inefficient to run 
this algorithm using every policy.   (And even if we did
do so for each possible policy, we would only be able
to $\epsilon$-approximate 
the values $q^*_{\sigma}$ for each policy $\sigma$ using the results of \cite{ESY12},
for say, $\epsilon = 2^{-j}$ for some chosen $j$, and therefore
we could only be sure  that a particular policy that
yields the best result is, say, $(2\epsilon)$-optimal, but it may not not 
necessarily be optimal.)
In fact, as we will see, it is probably impossible to identify an optimal
policy in polynomial time.

Our goal instead will be to find an iteration $I(x)$ for max/minPPS,
that has similar properties to the Newton operator for PPS,
i.e., that can be computed efficiently for a given $x$ and for which we can 
prove a similar property to Lemma \ref{ppshalf}, i.e. such that if 
$q^* - x \leq \lambda(1-q^*)$, then $q^* - I(x) \leq \frac{\lambda}{2}(1-q^*)$. 
Once we do so, we will be able to adapt 
and extend results from \cite{ESY12} to get a polynomial time 
algorithm for the problem of approximating the LFP $q^*$ of a max/minPPS.

\section{Generalizing Newton's method using linear programming}
 
If a max/minPPS, $x=P(x)$, has no equations of form Q, then it amounts
to precisely the Bellman equations for an ordinary finite-state Markov
Decision Process with the objective of maximizing/minimizing
reachability probabilities.  It is well known that we
can compute the exact (rational) optimal values for such finite-state 
MDPs, and thus the 
exact LFP, $q^*$, for such a max(min)-linear systems, using linear
programming (see, e.g., \cite{Puterman94,CY98}).  

Computing the LFP of max/minPPSs is clearly a generalization of this finite-state MDP
problem to the infinite-state setting of branching and recursive MDPs.
If
we have no equations of form M, we have a PPS, which we can solve
in P-time 
using Newton's method, as shown recently in \cite{ESY12}. 
An iteration of Newton's method works by approximating
the system of equations by a linear system. For a maxPPS(or minPPS),
we will define an analogous ``approximate'' system of equations
that we have to  solve in each iteration of
{\bf ``Generalized Newton's Method''} (GNM) which has both linear equations and
equations involving the max (or min) function. We will show
that we can solve the
equations that arise from each iteration of GNM using linear programming.
 We will then show that a polynomial (in fact, linear) number of iterations
are enough to approximate the desired LFP solution, and that it suffices
to carry out the computations with polynomial precision.

The rest of this Section is organized as follows.
In Section 3.1 we define a linearization of a max/minPPS and prove some
basic properties. 
In 3.2 we define the operator for an iteration of the
Generalized Newton's method and show that it can be computed by Linear Programming.
In Section 3.3 we analyze the operator for maxPPS and in Section 3.4 for minPPS.
Finally in Section 3.5 we put everything together and show that the algorithm approximates
the LFP within any desired precision in polynomial time in the Turing model.

\subsection{Linearizations of max/minPPSs and their properties}

We begin by expressing the max/min linear equations 
that should be solved by one iteration of 
what will eventually become the ``Generalized Newton's Method''  (GNM),
applied at a point $y$.  
Recall that we assume w.l.o.g. throughout that max/minPPS and PPS are in SNF.

\begin{definition}  For a max/minPPS, $x=P(x)$, with $n$ variables,
the {\bf linearization of $P(x)$ at a point ${\mathbf y} \in \mathbb{R}^n$},   
is a system of max/min linear functions denoted by $P^{y}(x)$, which has the
following form:

if $P(x)_i$ has form L or M,  then $P^y_i(x) = P_i(x)$, and

if $P(x)_i$ has form Q, i.e., $P(x)_i =x_jx_k$ for some $j$,$k$,
then $$P^y_i(x) = y_jx_k + x_jy_k - y_jy_k$$
\end{definition}

We can consider the linearization of a PPS, $x=P_\sigma(x)$, 
obtained as the result of fixing a policy, $\sigma$, for a 
max/minPPS, $x =P(x)$.

\begin{definition} $P^y_\sigma(x) := (P_\sigma)^y(x)$.\end{definition}

Note than the linearization $P^y(x)$ only changes equations of
form Q, and using a policy $\sigma$ only changes equations of form M, so these
operations are independent in terms of the effects they
have on the underlying equations, and thus
$P^y_\sigma(x) \equiv (P_\sigma)^y(x) = (P^y)_\sigma(x)$.

\begin{lemma} \label{lem:deriv-lin-orig-equal}
Let $x=P(x)$ be any PPS.
For any $y \in \real^n$,
let $(P^y)'(x)$ denote the Jacobian matrix of $P^y(x)$.  
Then for any $x \in \mathbb{R}^n$, we have $(P^y)'(x) = P'(y)$.
\end{lemma}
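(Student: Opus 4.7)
The plan is to prove the equality row by row, exploiting the fact that $x=P(x)$ is in SNF, so each equation $P_i$ has either form L or form Q (there are no form M equations in a PPS). Since both the linearization $P^y$ and the Jacobian act row-wise, it suffices to check the identity $(P^y)'_i(x) = P'_i(y)$ separately for each of the two forms. The argument is then essentially the observation that the linearization is just the first-order Taylor expansion of $P$ around $y$, but phrased in the specific SNF normal form where this is a one-line calculation.

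For a row $i$ of form L, the function $P_i$ is affine, so its gradient is constant in $x$; in particular $P'_i(x) = P'_i(y)$ for every $x$. Since $P^y_i(x) = P_i(x)$ by definition of the linearization, we also have $(P^y)'_i(x) = P'_i(x) = P'_i(y)$, as required. For a row $i$ of form Q, write $P_i(x) = x_j x_k$, so that by definition $P^y_i(x) = y_j x_k + x_j y_k - y_j y_k$. This is linear in $x$, and its only nonzero partials (when $j \neq k$) are $\partial P^y_i/\partial x_k = y_j$ and $\partial P^y_i/\partial x_j = y_k$; these match the partials of $x_j x_k$ evaluated at $y$, namely $\partial P_i/\partial x_k|_{x=y} = y_j$ and $\partial P_i/\partial x_j|_{x=y} = y_k$.

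The one small thing to be careful about is the degenerate sub-case $j = k$, i.e.\ $P_i(x) = x_j^2$. There the linearization collapses to $P^y_i(x) = 2 y_j x_j - y_j^2$, whose partial with respect to $x_j$ is $2y_j$, matching $P'_i(y)_{\cdot, j} = 2 y_j$ coming from differentiating $x_j^2$ and evaluating at $y$. All other partials vanish on both sides. Combining the form L and form Q cases gives $(P^y)'_i(x) = P'_i(y)$ for every row $i$ and every $x$, which is precisely the claimed matrix identity $(P^y)'(x) = P'(y)$. There is no real obstacle: the lemma is essentially a sanity check that the chosen linearization has the right Jacobian, and the SNF reduction is exactly what makes the computation trivial.
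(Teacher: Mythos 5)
Your proof is correct and follows essentially the same route as the paper's: treat each row according to its SNF form, note that form L (and more generally any non-Q row) is unchanged by the linearization and has constant gradient, and compute the two partials directly for form Q. Your explicit treatment of the degenerate sub-case $j=k$ is a small extra care the paper's proof glosses over, but it does not change the argument.
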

\begin{proof}
We need to show that the Jacobian
  $(P^y)'(x)$ of $P^y(x)$, evaluated anywhere, is equal to $P'(y)$.
  If $x_i = P_i(x)$ is not of form Q, then, for any $x \in
  \mathbb{R}^n$, $P_i(x) = P^y_i(x)$. So for any $x_j$,
  $\frac{\partial P^y_i(x)}{\partial x_j} = \frac{\partial
    P_i(x)}{\partial x_j}$.  Otherwise, $x_i=P_i(x)$ has form Q, that
  is $P_i(x) = x_jx_k$ for some variables $x_j$,$x_k$. Then $P^y_i(x)
  = y_jx_k + x_jy_k - y_jy_k$. In this case $\frac{\partial
    P^y_i(x)}{\partial x_j} =y_k$ and $\frac{\partial
    P^y_i(x)}{\partial x_k} =y_j$.  But when $x=y$, $\frac{\partial
    P_i(x)}{\partial x_j} =y_k$ and $\frac{\partial P_i(x)}{\partial
    x_k} =y_j$. Furthermore, clearly for any $x_l$, with $l \not= j$
  and $l \not= k$, $\frac{\partial P_i(x)}{\partial x_l} =0$ and
  $\frac{\partial P^y_i(x)}{\partial x_l} =0$. We have thus
  established that $(P^y)'(x) = P'(y)$ for any $x \in \mathbb{R}^n$.
\end{proof}

\begin{lemma}\label{lem:lin-form-deriv-equation} 
If $x=P(x)$ is any PPS, 
then for any $x,y \in \mathbb{R}^n$,
 $P^y(x) = P(y) + P'(y)(x-y)$.
\end{lemma}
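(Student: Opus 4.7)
The plan is to observe that $P^y(x)$ is an affine function of $x$ (for fixed $y$), and then invoke the fact that an affine function equals its own first-order Taylor expansion around any point. First I would check linearity of $P^y(x)$ in $x$ coordinate-wise: since we assume SNF, each $P_i(x)$ is either form L, in which case $P^y_i(x) = P_i(x)$ is already linear in $x$, or form Q, in which case $P^y_i(x) = y_j x_k + x_j y_k - y_j y_k$ which, with $y$ held fixed, is affine in $x$.

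Next I would verify that $P^y(y) = P(y)$ by cases: for form L this is immediate since $P^y_i = P_i$, and for form Q we compute $P^y_i(y) = y_j y_k + y_j y_k - y_j y_k = y_j y_k = P_i(y)$.

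Then, since $P^y$ is affine, for any $x, y \in \mathbb{R}^n$ we have the exact identity
\[
P^y(x) = P^y(y) + (P^y)'(y)\,(x - y).
\]
Applying the two facts just established (namely $P^y(y) = P(y)$) together with Lemma \ref{lem:deriv-lin-orig-equal} (which says $(P^y)'(x) = P'(y)$ for every $x$, and in particular at $x=y$), the right-hand side becomes $P(y) + P'(y)(x-y)$, giving the claim.

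There is no real obstacle here; the entire content is the case analysis over forms L and Q combined with the derivative computation already done in Lemma \ref{lem:deriv-lin-orig-equal}. If anything, the only thing to be careful about is not to accidentally treat $y$ as a variable when differentiating $P^y$: the lemma is a statement about the $x$-Jacobian of $P^y$ with $y$ held as a parameter, which is exactly what we need in the Taylor identity above.
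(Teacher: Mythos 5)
Your proposal is correct and follows essentially the same route as the paper's proof: both exploit that $P^y$ is affine in $x$, check $P^y(y)=P(y)$ by cases, and reduce the claim to the Jacobian identity $(P^y)'(\cdot)=P'(y)$ from Lemma \ref{lem:deriv-lin-orig-equal}. No issues.
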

\begin{proof} 
  Firstly, note that $P^y(x) = P^y(y) + (P^y)'(x)(x-y)$, since the
  functions $P^y_i(x)$ are all linear in $x$.  Next, observe that
  $P_i(y) = P^y_i(y)$, for all $i$, and thus that $P(y) = P^y(y)$.
  Thus, to show that $P^y(x) = P^y(y) + P'(y)(x-y) = P(y) +
  P'(y)(x-y)$, all we need to show is that the Jacobian
  $(P^y)'(x)$ of $P^y(x)$, evaluated anywhere, is equal to $P'(y)$.
But this was established in Lemma \ref{lem:deriv-lin-orig-equal}.
\end{proof}

An iteration of Newton's method on $x= P_\sigma(x)$ at a point $y$
solves a system of linear equations that can be expressed in terms of
$P^y_\sigma(x)$.  
The next lemma establishes this basic fact in part {\em (i)}. 
In part {\em (ii)} it provides us with conditions 
under which we are guaranteed to be doing ``at least as well'' as one such
Newton iteration.

\begin{lemma} \label{fpstrat} Suppose that the matrix inverse 
$(I-P'_\sigma(y))^{-1}$
  exists and is non-negative, for some policy $\sigma$, 
and some $y \in  \mathbb{R}^n$. Then
\begin{itemize}
\item[(i)] $\mathcal{N}_\sigma(y)$ is defined, and is equal to the unique point $a \in \real^n$ 
such that  $P^y_\sigma(a) = a$.
\item[(ii)] For any vector $x \in \real^n$:\\
If $P^y_\sigma(x) \geq x$, then $x \leq \mathcal{N}_\sigma(y)$.\\
If $P^y_\sigma(x) \leq x$, then $x \geq \mathcal{N}_\sigma(y)$.
\end{itemize}
\end{lemma}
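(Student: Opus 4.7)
The plan is to reduce everything to two facts: the identity $P^y_\sigma(x) = P_\sigma(y) + P'_\sigma(y)(x-y)$ from Lemma \ref{lem:lin-form-deriv-equation}, and the standard fact that multiplication by a non-negative matrix preserves coordinate-wise inequalities.

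For part (i), I would substitute the identity into the fixed point condition $P^y_\sigma(a) = a$. Writing $z := a - y$, this becomes $P_\sigma(y) + P'_\sigma(y) z = y + z$, which rearranges to $(I - P'_\sigma(y))\, z = P_\sigma(y) - y$. Because $(I - P'_\sigma(y))^{-1}$ exists by hypothesis, this linear system has the unique solution $z = (I - P'_\sigma(y))^{-1}(P_\sigma(y) - y)$, so the unique fixed point is $a = y + z = \mathcal{N}_\sigma(y)$, and in particular $\mathcal{N}_\sigma(y)$ is well-defined.

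For part (ii), I would do exactly the same algebra but with inequalities. Assume $P^y_\sigma(x) \geq x$; substituting the identity and rearranging gives $(I - P'_\sigma(y))(x - y) \leq P_\sigma(y) - y$. Now I left-multiply by $(I - P'_\sigma(y))^{-1}$; the key point is that since this matrix is non-negative by hypothesis, the inequality is preserved, yielding $x - y \leq (I - P'_\sigma(y))^{-1}(P_\sigma(y) - y) = \mathcal{N}_\sigma(y) - y$, i.e. $x \leq \mathcal{N}_\sigma(y)$. The case $P^y_\sigma(x) \leq x$ is entirely symmetric, reversing the inequality but again preserved by the non-negative matrix multiplication.

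No step is really an obstacle; the only thing one must be careful about is the direction of the inequality when moving terms involving $P'_\sigma(y)$ across, but since we invert the whole matrix $(I - P'_\sigma(y))$ only once and this inverse is assumed non-negative, the order-preservation is immediate. The proof is essentially two lines once Lemma \ref{lem:lin-form-deriv-equation} is invoked.
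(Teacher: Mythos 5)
Your proof is correct and follows essentially the same route as the paper: part (i) is the same reversible rearrangement via Lemma \ref{lem:lin-form-deriv-equation}, and part (ii) is the same algebra with the inequality preserved by left-multiplication with the non-negative inverse (the paper merely packages it as the identity $\mathcal{N}_\sigma(y) = x + (I-P'_\sigma(y))^{-1}(P^y_\sigma(x)-x)$). No gaps.
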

\begin{proof}
(i): We define:
$$a = y +(I-P'_\sigma(y))^{-1}(P_\sigma(y) - y) \equiv {\mathcal{N}}_\sigma(y)$$
Then we can re-arrange this expression, reversibly, yielding:
\begin{eqnarray*}
a = y +(I-P'_\sigma(y))^{-1}(P_\sigma(y) - y) & \Leftrightarrow & P_\sigma(y) - y - (I-P'_\sigma(y))(a-y) = 0\\
& \Leftrightarrow & P_\sigma(y) + P'_\sigma(y)(a-y) = a\\
& \Leftrightarrow &  P^y_\sigma(a) = a \quad \quad \mbox{(by Lemma \ref{lem:lin-form-deriv-equation})}
\end{eqnarray*}
Uniqueness follows from the reversibility of these transformations.\\
 
\noindent (ii): Firstly, we shall observe that the result of applying Newton's method to solve 
$x = P^y_\sigma(x)$ with any initial point $x$ gives us $\mathcal{N}_\sigma(y) = a$ in a single iteration.  Recalling
from Lemma \ref{lem:deriv-lin-orig-equal} that the following equality hold
between the Jacobians: $(P^y)'(x) = P'_\sigma(y)$,  one iteration of Newton's method applied 
to $x = P^y_\sigma(x)$ can be equivalently defined as:
\begin{eqnarray*}x + (I-P'_\sigma(y))^{-1}(P^y_\sigma(x)-x) & = & x + (I-P'_\sigma(y))^{-1}(P_\sigma(y) + P'_\sigma(y)(x-y) -x)\\
											& = & (I-P'_\sigma(y))^{-1}( x - P'_\sigma(y)x + P_\sigma(y) + P'_\sigma(y)(x-y) -x)\\
											& = & (I-P'_\sigma(y))^{-1}(P_\sigma(y) - P'_\sigma(y)y)\\
											& = & (I-P'_\sigma(y))^{-1}((I-P'_\sigma(y))y + P_\sigma(y) -y)\\
											& = & y + (I-P'_\sigma(y))^{-1}(P_\sigma(y) - y)\\											
& = & \mathcal{N}_\sigma(y).\end{eqnarray*}
We thus have $\mathcal{N}_\sigma(y)= x + (I-P'_\sigma(y))^{-1}(P^y_\sigma(x)-x)$.  By assumption, 
$(I-P'_\sigma(y))^{-1}$  is a non-negative matrix. So if $P^y_\sigma(x)-x \geq 0$ then $\mathcal{N}_\sigma(y)\geq x$,
whereas if $P^y_\sigma(x)-x \leq 0$ then $\mathcal{N}_\sigma(y)\leq x$.\end{proof}

\subsection{The iteration operator of Generalized Newton's Method}

We shall now 
define distinct iteration operators for a maxPPS and a minPPS,
both of which we shall refer to with the overloaded notation $I(x)$. 
(We shall also establish in the next two subsections that the operators are 
 well-defined
 in their respective settings.)
These operators will serve as the basis for a
{\em Generalized Newton's Method} to be applied to maxPPSs and minPPSs, respectively.

\begin{definition}
For a maxPPS, $x=P(x)$, with LFP $q^*$, such that $0 < q^* < 1$,
and for a real vector $y$ such
that $0 \leq y \leq q^*$, we define the operator $I(y)$ to be the 
{\em unique} optimal solution, $a \in \real^n$, to the following 
mathematical program: $\quad \quad \mbox{\em Minimize:} \  \ \sum_i a_i \ ; \quad  
 \mbox{\em Subject to:} \quad   P^y(a) \leq a$.

For a minPPS, $x=P(x)$, with LFP $q^*$, such that $0 < q^* < 1$, 
and for a real vector $y$ such 
that $0 \leq y \leq q^*$, we define the operator $I(y)$ to be the 
{\em unique} optimal solution  
$a \in \real^n$ 
to the following mathematical program:
$ \quad \quad \mbox{\em Maximize:} \  \ \sum_i a_i \ ; \quad 
 \mbox{\em Subject to:} \quad   P^y(a) \geq a$.

\end{definition}

\noindent
A priori, it is not even clear if the above ``definitions'' of $I(x)$ for
maxPPSs and minPPSs are well-defined.
We now make the following central claim, which we shall prove 
separately for maxPPSs and minPPSs 
in the following two subsections:

\begin{proposition} \label{subgoal} Let  $x=P(x)$ be a max/minPPS, with LFP  $q^*$, such that $0 < q^* < 1$.
For any $0 \leq x \leq q^*$:
\begin{enumerate}
\item $I(x)$ is well-defined,  and $I(x) \leq q^*$, and:
\item For any $\lambda > 0$,  if
$q^* - x \leq \lambda(1-q^*)$
then
$q^* - I(x) \leq \frac{\lambda}{2}(1-q^*)$.
\end{enumerate}
\end{proposition}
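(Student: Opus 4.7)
The plan is to handle the maxPPS and minPPS cases separately, in both cases leveraging the optimal policy $\sigma^*$ of Theorem~\ref{optexist} (which satisfies $q^*=q^*_{\sigma^*}$) together with Newton's iteration $\mathcal{N}_{\sigma^*}(y)$, and combining the LP characterization of $I(y)$ with Lemma~\ref{fpstrat} and Lemmas~\ref{newtbounds}--\ref{ppshalf}. A common ingredient is the identity $P^y_i(q^*) - q^*_i = -(q^*_j - y_j)(q^*_k - y_k) \leq 0$ at Form Q equations (with equality at Forms L and M), which gives $P^y(q^*) \leq q^*$ and is the starting point for both cases.

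For the maxPPS case, Part 1 I would prove by first noting that $q^*$ is feasible for the LP (from $P^y(q^*) \leq q^*$) and that the feasible set $\{a : P^y(a) \leq a\}$ is closed under componentwise minima (by monotonicity of $P^y$); together these imply uniqueness of the optimum and $I(y)\leq q^*$. For Part 2, the main step is to show $I(y) \geq \mathcal{N}_{\sigma^*}(y)$: using $P^y_{\sigma^*} \leq P^y$ for maxPPS, every feasible $a$ satisfies $P^y_{\sigma^*}(a) \leq a$, so Lemma~\ref{fpstrat}(ii) (applicable since Lemma~\ref{invnonneg} gives $(I-P'_{\sigma^*}(y))^{-1} \geq 0$) yields $a \geq \mathcal{N}_{\sigma^*}(y)$. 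Then Lemma~\ref{ppshalf} applied to the PPS $P_{\sigma^*}$, whose LFP $q^*_{\sigma^*}=q^*$ satisfies $0<q^*<1$, gives $q^* - \mathcal{N}_{\sigma^*}(y) \leq \frac{\lambda}{2}(1-q^*)$, and the claim follows.

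For the minPPS case the LP direction reverses and the argument is dual in spirit but more delicate. For Part 1, the feasible set $\{a : P^y(a) \geq a\}$ is closed under componentwise maxima (monotonicity of $P^y$ together with the Form M observation that $\max(a_i,b_i) \leq \min(\max(a_j,b_j),\max(a_k,b_k))$ when $a,b$ are feasible), and any feasible $a$ satisfies $a \leq P^y_{\sigma^*}(a)$ (since $P^y \leq P^y_{\sigma^*}$ for minPPS), hence $a \leq \mathcal{N}_{\sigma^*}(y) \leq q^*$ by Lemma~\ref{fpstrat}(ii) and Lemma~\ref{newtbounds}. Feasibility of at least one point comes from policy iteration: start with $\sigma_0=\sigma^*$ and repeatedly switch to the policy $\sigma_{k+1}$ active (for the min) at $\mathcal{N}_{\sigma_k}(y)$, obtaining a non-increasing sequence of Newton iterates by Lemma~\ref{fpstrat}(ii) that must stabilize at some self-consistent policy $\tau$ whose Newton iterate $\mathcal{N}_\tau(y)$ is a fixed point of $P^y$ and hence feasible. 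For Part 2, $I(y)$ itself is a fixed point of $P^y$ (being the supremum of the feasible set, which is closed under join), so for any policy $\tau$ active at $I(y)$ we have $I(y)=\mathcal{N}_\tau(y)$ by Lemma~\ref{fpstrat}(i). Applying Lemma~\ref{ppshalf} to $P_\tau$ with the smallest valid parameter $\lambda_\tau = (q^*_\tau - y)/(1-q^*_\tau)$ then yields $\mathcal{N}_\tau(y) \geq (q^*_\tau + y)/2 \geq (q^* + y)/2 \geq q^* - \frac{\lambda}{2}(1-q^*)$, using $q^*_\tau \geq q^*$ for minPPS and the hypothesis $q^* - y \leq \lambda(1-q^*)$.

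The main obstacle in the plan is the minPPS Part 2 argument: Lemma~\ref{ppshalf} requires $q^*_\tau < 1$ componentwise, but for an active-at-$I(y)$ policy $\tau$ we only have $q^*_\tau \geq q^*$, so some components of $q^*_\tau$ may equal $1$. Addressing this will likely require a componentwise analysis separating coordinates with $q^*_{\tau,i} < 1$ (to which Lemma~\ref{ppshalf} applies directly) from those with $q^*_{\tau,i} = 1$, for which one must bound $I(y)_i \leq q^*_i < 1$ from below directly using the fixed-point equation $I(y)=P^y(I(y))$, possibly aided by the P-time qualitative analysis of Proposition~\ref{prob1-ptime-scfg-prop} to isolate the variables that really matter. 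The maxPPS case avoids this complication entirely, because it uses $\sigma^*$ directly and $q^*_{\sigma^*} = q^* < 1$.
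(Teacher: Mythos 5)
Your maxPPS argument and your minPPS well-definedness argument (feasibility of $q^*$, closure of the feasible set under componentwise min/max, comparison with $\mathcal{N}_{\sigma^*}(y)$ via Lemma~\ref{fpstrat}(ii), and policy improvement to produce a fixed point of $P^y$) are essentially the paper's own proofs (Lemmas~\ref{itworks}, \ref{maxhalf}, \ref{minattained}, \ref{ufpmin}, \ref{lem:min-unique-I(y)}), and they are correct.

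The gap is in Part 2 for minPPSs. First, the step ``Lemma~\ref{ppshalf} with $\lambda_\tau$ yields $\mathcal{N}_\tau(y)\geq (q^*_\tau+y)/2$'' is invalid: $\lambda$ in Lemma~\ref{ppshalf} is a scalar, so with $\lambda_\tau=\max_i \frac{(q^*_\tau)_i-y_i}{1-(q^*_\tau)_i}$ the conclusion $(q^*_\tau)_i-\mathcal{N}_\tau(y)_i\leq \frac{\lambda_\tau}{2}(1-(q^*_\tau)_i)$ matches $\frac{1}{2}((q^*_\tau)_i-y_i)$ only at the maximizing coordinate; at other coordinates it is weaker, so the componentwise inequality $\mathcal{N}_\tau(y)\geq(q^*_\tau+y)/2$ does not follow (and one cannot make $\lambda$ a diagonal matrix, since the proof of the halving lemma needs to pull the scalar $\lambda$ outside a product with the non-negative matrix $(I-P'(x))^{-1}(P'(q^*_\tau)-P'(x))$). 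Second, and more fundamentally, any route through $q^*_\tau$ and a parameter measured against $1-q^*_\tau$ is doomed: since the active policy $\tau$ at $I(y)$ need not be optimal, one only has $q^*_\tau\geq q^*$, so $\lambda_\tau$ can be arbitrarily larger than $\lambda$ (and is infinite when some coordinate of $q^*_\tau$ equals $1$, which genuinely happens); the obstacle you flag at the end is not a boundary case but the generic situation, and a coordinatewise case split does not repair it because the bound you need is relative to $1-q^*$, not $1-q^*_\tau$. The paper's resolution is Lemma~\ref{halfineq}, a strengthened halving lemma in which the reference point is not the LFP of the fixed-policy PPS but an arbitrary pre-fixed point $y$ with $x\leq y\leq P(y)\leq 1$; it is applied to $x=P_\tau(x)$ with reference point $q^*$ (which satisfies $q^*=P(q^*)\leq P_\tau(q^*)$ because $P$ is a minPPS), giving $q^*-\mathcal{N}_\tau(x)\leq\frac{\lambda}{2}(1-q^*)$ directly, with $q^*_\tau$ never appearing. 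Without this strengthened lemma (or an equivalent), your minPPS Part 2 does not go through.
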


The next proposition observes that linear programming can be used to compute 
an iteration of the operator, $I(x)$, for both maxPPSs and minPPSs.

\begin{proposition}\label{prop:comp-I} 
Given a max/minPPS, $x=P(x)$, with LFP $q^*$, and given a rational vector $y$, $0 \leq y \leq q^*$,
the constrained optimization problem (i.e., mathematical program) ``defining'' $I(y)$ 
can be described by a LP 
whose encoding size is polynomial (in fact, linear) in both $|P|$ and
the encoding size of the rational vector $y$.
Thus, we can compute the (unique) optimal solution $I(y)$ to such an LP (assuming it exists, and is unique) in P-time.
\end{proposition}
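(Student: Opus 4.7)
The plan is to show that the mathematical program defining $I(y)$ can be rewritten as a standard linear program of polynomial size, and then invoke a polynomial-time LP solver. The only non-linear features of the program are the $\max$ and $\min$ operators that appear inside $P^y(\cdot)$, and the key observation is that because we are in SNF, each such operator has only two arguments and appears on the ``correct'' side of the inequality to admit an elimination via conjunction.

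First I would unpack the constraint $P^y(a) \leq a$ (for a maxPPS) coordinate by coordinate, using the three SNF possibilities for $P_i$. If $P_i$ has {\tt Form L}, then $P^y_i(a) = P_i(a)$ is already affine in $a$ (its coefficients come directly from $P$), so $P^y_i(a) \leq a_i$ is a single linear inequality. If $P_i$ has {\tt Form Q}, say $P_i(x) = x_j x_k$, then by definition $P^y_i(a) = y_j a_k + a_j y_k - y_j y_k$ is affine in $a$ with coefficients built from the rational entries of $y$, so again one linear inequality suffices. Finally if $P_i$ has {\tt Form M}, say $P_i(x) = \max\{x_j,x_k\}$, then $P^y_i(a) = \max\{a_j,a_k\}$, and the constraint $\max\{a_j,a_k\} \leq a_i$ is equivalent to the pair of linear inequalities $a_j \leq a_i$ and $a_k \leq a_i$. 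Thus for a maxPPS we obtain at most $2n$ linear inequalities in $n$ variables. For a minPPS with constraint $P^y(a) \geq a$, the {\tt Form M} case $\min\{a_j,a_k\} \geq a_i$ analogously splits into $a_j \geq a_i$ and $a_k \geq a_i$; the {\tt Form L} and {\tt Form Q} cases are unchanged. In both cases the objective $\sum_i a_i$ is already linear.

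Next I would verify the size bound: the resulting LP has $n$ variables, $O(n)$ linear constraints, and every coefficient is either a rational coefficient of $P$ or a product of two rationals from $y$ (plus a sum/difference of such). Writing these coefficients down in binary gives a total bit-length of $O(|P| + \langle y \rangle)$, where $\langle y \rangle$ is the encoding size of the rational vector $y$; in particular the encoding of the LP is linear in $|P|+\langle y\rangle$. Now I would invoke Khachiyan's ellipsoid algorithm (or any interior-point method) to solve this LP in polynomial time in the Turing model. Because we are given that the optimum exists and is unique, any P-time LP algorithm that returns an optimal vertex of the feasible polytope necessarily returns this unique point, yielding $I(y)$.

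No real obstacle is expected here: the proof is essentially a size accounting and an appeal to polynomial-time LP. The only place one has to be careful is confirming that the max/min operators have the correct polarity to collapse into conjunctions of linear inequalities under the given objective direction (minimization with $P^y(a) \leq a$ for maxPPS, maximization with $P^y(a) \geq a$ for minPPS); this is exactly what the SNF structure together with the definitions of $I(y)$ in the two cases guarantees.
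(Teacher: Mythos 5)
Your proposal is correct and follows essentially the same route as the paper: the only non-linearity is in the Form M constraints, and the polarity of the inequality ($\max\{a_j,a_k\}\leq a_i$ for maxPPS, $\min\{a_j,a_k\}\geq a_i$ for minPPS) lets each one be replaced by the conjunction of two linear inequalities, after which the LP has size linear in $|P|$ plus the encoding size of $y$ and is solved by a standard polynomial-time LP algorithm. The additional case analysis over Forms L and Q and the explicit size accounting are fine elaborations of what the paper states more briefly.
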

\begin{proof}
For a maxPPS (minPPS), the definition of $I(x)$ asks us to maximize (minimize) a linear objective, 
$\sum_i a_i$, subject to the
constraints $P^y(a) \leq a$  ($P^y(a) \geq a$, respectively).
All of these constraints are linear, except the constraints of form M.
For a maxPPS,
if $(P^y(a))_i$ is of form M, then the corresponding constraint is an inequality of the form 
$\text{max } \{a_j,a_k\} \leq a_i$.   Such an inequality is equivalent to, and can be replaced by, 
the two linear inequalities:
$a_j \leq a_i$ and $a_k \leq a_i$. 
Likewise,  
for a minPPS, if  $(P^y(a))_i$ is of form M,  
then the corresponding constraint is an inequality of the form $\text{min } \{a_j,a_k\} \geq a_i$. 
Again, such an inequality is equivalent to, and can be replaced by, two linear inequalities:
$a_j \geq a_i$ and $a_k \geq a_i$.

Thus, for a rational vector $y$ whose encoding length is $\size(y)$, the operator $I(y)$ can be formulated 
(for both maxPPSs and minPPSs)
as a problem of
computing the unique optimal solution to 
a linear program
whose encoding size is polynomial (in fact, linear) in $|P|$ and in $\size(y)$.
\end{proof}

\subsection{An iteration of Generalized Newton's Method (GNM) for maxPPSs}

For a maxPPS, $x=P(x)$, we know by Theorem \ref{optexist} 
that there exists an optimal policy, $\tau$, such that $q^*=q^*_\tau \geq q^*_{\sigma}$ for any
policy $\sigma$.  The next lemma implies part (i) of Proposition 
\ref{subgoal} for maxPPS:

\begin{lemma} \label{itworks} If $x =P(x)$ is a maxPPS, with LFP solution $0 < q^* < 1$, 
and $y$ is a real vector with $0 \leq y \leq q^*$, 
then $x=P^y(x)$ has a least fixed point solution, denoted $\mu P^y$, with $\mu P^y \leq q^*$.
Furthermore, the operator $I(y)$ is well-defined, $I(y) = \mu P^y \leq q^*$, 
and for any 
optimal policy $\tau$, $I(y) = \mu P^y \geq \mathcal{N}_\tau(y)$.
\end{lemma}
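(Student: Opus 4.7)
The plan is to recognize $P^y$ as a monotone piecewise-linear operator on $\mathbb{R}^n$, pin down a complete lattice on which $P^y$ acts, invoke Knaster-Tarski to produce $\mu P^y$, and then identify $\mu P^y$ with the LP optimum $I(y)$. First I would verify monotonicity: the equations of form L and M are manifestly monotone, and for form Q we have $P^y_i(x) = y_j x_k + x_j y_k - y_j y_k$ which is monotone in $x$ since $y \geq 0$. Next, I would verify that $q^*$ is a post-fixed point of $P^y$, i.e.\ $P^y(q^*) \leq q^*$: the only nontrivial case is form Q, where $q^*_j q^*_k - P^y_i(q^*) = (q^*_j - y_j)(q^*_k - y_k) \geq 0$.

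The key setup is to build a lower bound using an optimal policy $\tau$, which exists by Theorem~\ref{optexist} with $q^*_\tau = q^*$. Since $0 \leq y \leq q^* < 1$, Lemma~\ref{invnonneg} gives $(I-P'_\tau(y))^{-1} \geq 0$, Lemma~\ref{newtbounds} gives $\mathcal{N}_\tau(y) \leq q^*$, and Lemma~\ref{fpstrat}(i) shows $\mathcal{N}_\tau(y)$ is the unique fixed point of $P^y_\tau$. Observe that $P^y \geq P^y_\tau$ pointwise, since the only difference is at form-M rows where $P^y$ takes the max over both choices while $P^y_\tau$ picks one. Then $P^y(\mathcal{N}_\tau(y)) \geq P^y_\tau(\mathcal{N}_\tau(y)) = \mathcal{N}_\tau(y)$, so combined with $P^y(q^*) \leq q^*$ and monotonicity, $P^y$ maps the complete lattice $[\mathcal{N}_\tau(y),\,q^*]$ into itself. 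By Knaster-Tarski, a least fixed point $\mu P^y$ of $P^y$ exists in this interval, satisfying $\mathcal{N}_\tau(y) \leq \mu P^y \leq q^*$.

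It remains to show $I(y) = \mu P^y$. For any LP-feasible $a$ (i.e.\ $P^y(a) \leq a$), we have $P^y_\tau(a) \leq P^y(a) \leq a$, so Lemma~\ref{fpstrat}(ii) forces $a \geq \mathcal{N}_\tau(y)$. The set of post-fixed points of the monotone operator $P^y$ is closed under componentwise minimum (if $P^y(a), P^y(b) \leq a, b$ then $P^y(\min(a,b)) \leq \min(P^y(a), P^y(b)) \leq \min(a,b)$), so $\min(a, q^*)$ is a post-fixed point lying in $[\mathcal{N}_\tau(y), q^*]$ and hence $\min(a, q^*) \geq \mu P^y$, which yields $a \geq \mu P^y$ componentwise. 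Thus $\mu P^y$ is feasible and componentwise $\leq$ every feasible point, making it the unique minimizer of $\sum_i a_i$. This gives $I(y) = \mu P^y$, together with the bounds $\mathcal{N}_\tau(y) \leq I(y) \leq q^*$.

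The main obstacle is the identification $I(y) = \mu P^y$: a priori the LP minimum need not coincide with a componentwise-least point, and $\mu P^y$ need not even exist as a fixed point in all of $\mathbb{R}^n$. Both issues are resolved simultaneously by exploiting the optimal policy $\tau$ to supply the lower bound $\mathcal{N}_\tau(y)$, which localizes the problem to a complete sublattice where Tarski applies and simultaneously bounds the LP feasible region from below.
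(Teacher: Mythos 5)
Your proof is correct, and it reaches the conclusion by a mildly different route than the paper. Both arguments rest on the same three pillars: the computation showing $P^y(q^*)\leq q^*$, the lower bound $a\geq \mathcal{N}_\tau(y)$ for every feasible $a$ obtained from an optimal policy $\tau$ via Lemma~\ref{invnonneg} and Lemma~\ref{fpstrat}(ii) (this is exactly Claim~\ref{betterthanoptimal} in the paper), and closure of the feasible region $\{a : P^y(a)\leq a\}$ under componentwise minimum. Where you differ is in how existence of the least fixed point and its identification with the LP optimum are established. The paper works LP-first: feasibility plus boundedness give an optimal solution, a ``decrease one slack coordinate'' argument shows any optimum is a fixed point, and the min-of-two-optima argument gives uniqueness, after which $I(y)=\mu P^y$ falls out. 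You work fixed-point-first: you verify that the monotone map $P^y$ sends the complete lattice $[\mathcal{N}_\tau(y),q^*]$ into itself, invoke Knaster--Tarski to get $\mu P^y$ there, and then show $\mu P^y$ lies componentwise below every feasible point (via $\min(a,q^*)$), so it is the unique LP optimum. Your route buys a cleaner conceptual picture (the LFP exists for lattice-theoretic reasons, independently of the LP) and avoids having to argue separately that LP optima are fixed points; the paper's route is more self-contained in that it never needs Tarski's theorem and extracts everything, including uniqueness, directly from LP optimality. One small point worth noting: your Tarski application a priori yields only the least fixed point \emph{within} the interval $[\mathcal{N}_\tau(y),q^*]$, but your final step (every feasible $a$, hence every fixed point anywhere, dominates $\mu P^y$) correctly upgrades this to the global least fixed point, so no gap remains.
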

\begin{proof}
Recall that (by Proposition \ref{prop:comp-I}) the following can be written as an LP that ``defines''  $I(y)$:

\begin{equation} \label{eq:lp-max}
 \mbox{\em Minimize:} \  \ \sum_i a_i \ ; \quad \quad \quad
 \mbox{\em Subject to:} \quad   P^y(a) \leq a
\end{equation}

Firstly, we show that the LP 
constraints $P^y(a) \leq a$ in the definition of $I(y)$ are {\em feasible}.
We do so by showing that actually $P^y(q^*) \leq q^*$. 
At any coordinate $i$, if $P_i(x)$ has form M or L, then $P^y_i(q^*) = P_i(q^*)=q^*_i$. Otherwise, $P_i(x)$
has form $Q$, i.e.,  $P_i(x)=x_jx_k$, and then 
\begin{eqnarray*} P^y_i(q^*) & = & q^*_jy_k + y_jq^*_k - y_jy_k\\
							& = & q^*_jq^*_k - (q^*_j - y_j)(q^*_k - y_k)\\
							& \leq & q^*_i    \quad  \quad \quad \quad \mbox{(since $y \leq q^*$)}
\end{eqnarray*}

Next we show that the LP (\ref{eq:lp-max}) defining $I(y)$ is {\em bounded}.
Recall that, by Theorem \ref{optexist}, 
there is always an optimal policy for any maxPPS, $x=P(x)$.

\begin{claim} \label{betterthanoptimal} 
Let $x=P(x)$ be any maxPPS, with $0 < q^* < 1$,  and let $\tau$ be any optimal policy for $x=P(x)$. 
For any $y$ such that $0 \leq y \leq q^*$, we have that ${\mathcal{N}}_{\tau}(y)$ is defined, 
and for any vector $a$,
if $P^y(a) \leq a$ then $\mathcal{N}_\tau(y) \leq a$.
In particular, ${\mathcal{N}}_{\tau}(y) \leq q^*$.
\end{claim}
\begin{proof}  Recall, from our definition of an optimal policy, that $q^* = q^*_{\tau}$ is also the least 
non-negative solution to $x=P_\tau(x)$. So we can apply Lemma \ref{invnonneg} using $x=P_\tau(x)$ and $y \leq q^*$ 
to deduce that $(I-P'_\tau(y))^{-1}$ exist and is non-negative. 
Thus  ${\mathcal{N}}_{\tau}(y)$ is defined. 
Now, by applying Lemma \ref{fpstrat} (ii),  to show 
that $a \geq \mathcal{N}_\tau(y)$ all we need to show is that $P^y_\tau(a) \leq a$. 
But recalling that $x=P(x)$ is a maxPPS, by the definition of $P^y(x)$ and $P^y_\tau(x)$, 
we have that $P^y_\tau(a) \leq P^y(a) \leq a$.
We have just shown before this Claim that $P^y(q^*) \leq q^*$, and thus 
$\mathcal{N}_\tau(y) \leq q^*$.
\end{proof}

Thus the LP (\ref{eq:lp-max}) defining $I(y)$ is both feasible and bounded,
hence it has an optimal solution.
To show that $I(y)$ is well-defined, 
all that remains is to show that this
optimal solution is unique.  In the process, we will also show that $I(y)$ defines
precisely the {\em least fixed point} solution of $x=P^y(x)$, which we denote by $\mu P^y$.

Firstly, we claim that for any optimal solution $b$ to the LP (\ref{eq:lp-max}),  it must
be the case that $P^y(b) = b$.    Suppose not. Then there exists $i$ such that $P^y(b)_i < b_i$, 
then we can define a new vector $b'$, such that $b'_i = P^y(b)_i$ and $b'_j = b_j$ for all $j \neq i$.
By monotonicity of $P^y(x)$, it is clear that $P^y(b') \leq b'$, and thus that $b'$ is a
feasible solution to the LP (\ref{eq:lp-max}).  
But $\sum_i b'_i < \sum_i b_i$, contradicting the assumption that $b$ is an optimal solution to the LP (\ref{eq:lp-max}).

Secondly, we claim that there is a unique optimal solution.  Suppose not:  suppose $b$ and $c$ 
are two distinct optimal solution to the LP (\ref{eq:lp-max}).
Define a new vector $d$  by  $d_i = \min \{ b_i, c_i\}$, for all $i$.
Clearly, $d \leq b$ and $d \leq c$.
Thus by the monotonicity of $P^y(x)$, for all $i$ $P^y(d)_i \leq P^y(b)_i = b_i$,
and likewise $P^y(d)_i \leq P^y(c)_i = c_i$.  Thus $P^y(d) \leq d$, and $d$ is a feasible solution to the LP. 
But since $b$ and $c$ are distinct, and yet $\sum_i b_i = \sum_i c_i$, we have
that $\sum_i d_i < \sum_i b_i = \sum_i c_i$, contradicting the optimality of both $b$ and $c$.

We have thus established that $I(y)$ defines the unique {\em least fixed point} solution 
of $x=P^y(x)$, which we denote also by $\mu P^y$.
Since $q^*$ is also a solution of the LP, we have $\mu P^y \leq q^*$.

Finally, by Claim \ref{betterthanoptimal}, it must be the case that
$I(y) = \mu P^y \geq {\mathcal{N}}_{\tau}(y)$, where $\tau$ is any optimal
policy for $x=P(x)$.
\end{proof}

We next establish part (ii) of Proposition \ref{subgoal} for maxPPS.

\begin{lemma} \label{maxhalf} Let $x=P(x)$ be a maxPPS with $0 < q^* < 1$.  
For any $0 \leq x \leq q^*$ and $\lambda > 0$, we have
$I(x) \leq q^*$, and furthermore if:
$$q^* - x \leq \lambda(1-q^*)$$
then
$$q^* - I(x) \leq \frac{\lambda}{2}(1-q^*)$$
\end{lemma}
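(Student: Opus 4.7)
The plan is to reduce this to the purely probabilistic case by fixing an optimal policy and then piggy-backing on Lemma~\ref{ppshalf}. The key observation is that Lemma~\ref{itworks} has already established both (a) $I(x) \le q^*$, which gives the first conclusion for free, and (b) the inequality $I(x) \ge \mathcal{N}_\tau(x)$ for any optimal policy $\tau$ of the maxPPS $x=P(x)$. So the task reduces to bounding $q^* - \mathcal{N}_\tau(x)$ from above.

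First I would invoke Theorem~\ref{optexist} to fix an optimal policy $\tau$, so that $q^*_\tau = q^*$. The PPS $x = P_\tau(x)$ then has LFP exactly $q^*$, which by hypothesis satisfies $0 < q^* < 1$, and since $P$ is in SNF so is $P_\tau$ (the form-M equations simply collapse to equations $x_i = x_j$, which are of form L). Thus $x = P_\tau(x)$ is a PPS in SNF to which Lemma~\ref{ppshalf} applies.

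Next I would apply Lemma~\ref{ppshalf} with the hypothesis $0 \le x \le q^* = q^*_\tau$ and $q^* - x \le \lambda(1-q^*)$. This yields both that $\mathcal{N}_\tau(x)$ is defined and that
\[
q^* - \mathcal{N}_\tau(x) \;\le\; \frac{\lambda}{2}(1-q^*).
\]
Combining this with $I(x) \ge \mathcal{N}_\tau(x)$ from Lemma~\ref{itworks} gives
\[
q^* - I(x) \;\le\; q^* - \mathcal{N}_\tau(x) \;\le\; \frac{\lambda}{2}(1-q^*),
\]
which is the desired conclusion. The bound $I(x) \le q^*$ is just the restatement of the corresponding part of Lemma~\ref{itworks}.

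There is essentially no obstacle here: all the real work was done in Lemmas~\ref{itworks} and~\ref{ppshalf}. The only small thing to double-check is that fixing the optimal $\tau$ really does produce an SNF PPS with $0 < q^*_\tau < 1$ so that Lemma~\ref{ppshalf} is applicable — and this is immediate. The corresponding lemma for minPPSs will be the genuinely harder case, since there the analogue of Lemma~\ref{itworks} is not as clean (an optimal policy for the min player does not obviously make $I(x)$ dominate $\mathcal{N}_\tau(x)$ in the same way), but for the maxPPS case treated here the proof is a two-line synthesis.
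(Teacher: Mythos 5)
Your proof is correct and follows exactly the same route as the paper's: fix an optimal policy $\tau$ via Theorem~\ref{optexist}, apply Lemma~\ref{ppshalf} to the PPS $x=P_\tau(x)$, and sandwich using $\mathcal{N}_\tau(x) \leq I(x) \leq q^*$ from Lemma~\ref{itworks}. Nothing is missing.
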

\begin{proof} 
Let $\tau$ be an optimal policy (which exists by Theorem \ref{optexist}). 
The least fixed point solution of the PPS $x=P_\tau(x)$ is $q^*$. 
From our assumptions, Lemma \ref{ppshalf} gives that $q^* - \mathcal{N}_\tau(x) \leq \frac{\lambda}{2}(1-q^*)$. 
But by Lemma \ref{itworks} 
$\mathcal{N}_\tau(x) \leq I(x) \leq q^*$.
The claim follows.
\end{proof}

Proposition \ref{subgoal} for maxPPSs 
follows from Lemmas \ref{itworks} and \ref{maxhalf}.
In subsection \ref{atlast} we will combine this result with methods from \cite{ESY12} to obtain a P-time algorithm 
for approximating the LFP of a maxPPS,
in the standard Turing model of computation.

\subsection{An iteration of GNM for minPPSs}

Our proof of the minPPS version of Lemma \ref{maxhalf} will be 
somewhat different, because it
turns out we can not use the same argument based on LPs 
to prove that $I(y)$ is well-defined. 
Fortunately, in the case of minPPSs, we can show 
that $(I - P_\sigma(y))^{-1}$ exists and is non-negative for {\em any}
policies $\sigma$, at those points $y$ that are of interest. 
And we can use this to show that there is
{\em some} policy, $\sigma$, such that
$I(y)$ is equivalent to an iteration of Newton's method at 
$y$ after fixing the policy $\sigma$.
We shall establish the existence of such a policy using a
policy improvement argument, instead of just using the LP, 
as we did for maxPPSs.
(Note that the policy improvement algorithm may not
 be an efficient (P-time) way to compute it, and we do not claim it is.
 We only use policy improvement as an argument in the proof of existence
of a suitable policy $\sigma$.)

\begin{lemma} \label{qminimal} 
For a minPPS, $x=P(x)$, and for any policy $\sigma$, the 
LFP of, $x=P_\sigma(x)$, denoted $q^*_\sigma$, 
satisfies $q^* \leq q^*_\sigma$.\end{lemma}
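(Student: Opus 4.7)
The plan is straightforward: exploit monotonicity and the fact that for a min, every policy produces a pointwise upper bound.

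First, I would record the elementary observation that for any point $x$ and any policy $\sigma$, we have $P(x) \leq P_\sigma(x)$ coordinatewise. Indeed, if $P_i(x)$ has form M with $P_i(x) = \min\{x_j,x_k\}$, then $(P_\sigma)_i(x)$ equals either $x_j$ or $x_k$ depending on $\sigma(i)$, and in either case $P_i(x) = \min\{x_j,x_k\} \leq (P_\sigma)_i(x)$. For the remaining equations (form L and form Q), $P_i$ and $(P_\sigma)_i$ agree by definition. This pointwise inequality is the only special feature of minPPSs that we use.

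Next, applying this inequality at the fixed point of $P_\sigma$, we obtain $P(q^*_\sigma) \leq P_\sigma(q^*_\sigma) = q^*_\sigma$, so $q^*_\sigma$ is a pre-fixed point of the monotone operator $P : [0,1]^n \to [0,1]^n$. I would then conclude via the usual Kleene/Tarski argument: the LFP $q^*$ equals $\lim_{k\to\infty} P^k(\mathbf{0})$, and by induction on $k$ we have $P^k(\mathbf{0}) \leq q^*_\sigma$ (the base case is $\mathbf{0} \leq q^*_\sigma$, and the inductive step uses monotonicity of $P$ together with $P(q^*_\sigma) \leq q^*_\sigma$). Taking the limit gives $q^* \leq q^*_\sigma$.

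There is essentially no obstacle here; the statement is the "easy direction" of the min-max duality between policies and the LFP of a minPPS, and only requires monotonicity plus the definition of min. The nontrivial content about minPPSs (that some policy actually \emph{attains} the LFP, rather than merely upper-bounding it) is what Theorem \ref{optexist} provides, and this lemma should be viewed as the trivial companion inequality to that theorem.
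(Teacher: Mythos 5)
Your proof is correct, but it takes a genuinely different route from the paper. The paper's proof is a one-line appeal to Theorem \ref{optexist} (imported from \cite{rmdp}): there exists an optimal policy $\tau$ with $q^*_\tau = q^*$, and by the definition of optimality for a minPPS, $q^*_\tau \leq q^*_\sigma$ for every $\sigma$. You instead argue directly: $P(x) \leq P_\sigma(x)$ pointwise because $\sigma$ selects one argument of each $\min$, hence $P(q^*_\sigma) \leq P_\sigma(q^*_\sigma) = q^*_\sigma$, so $q^*_\sigma$ is a pre-fixed point of $P$ and dominates the LFP. Your argument is more elementary and self-contained --- it does not rely on the nontrivial existence of an optimal policy attaining $q^*$, only on monotonicity --- and it correctly isolates this lemma as the ``easy direction'' of the duality. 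The one small point to tidy is the final step: you invoke $q^* = \lim_k P^k(\mathbf{0})$ for a minPPS, which is standard (the operator is continuous and monotone on $[0,1]^n$) but is cited in this paper only for PPSs; you could instead finish immediately via Tarski's characterization of the LFP as the infimum of all pre-fixed points, which the paper already appeals to when asserting that the LFP exists. Either way the proof goes through; the paper's version buys brevity at the cost of depending on Theorem \ref{optexist}, while yours buys independence from that theorem at the cost of a few extra lines.
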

\begin{proof} By Theorem \ref{optexist}, 
there is an optimal policy $\tau$ with $q^*_\tau=q^*$. But we 
defined an optimal policy for a minPPS as one with $q^*_\tau \leq q^*_\upsilon$ for any policies $\upsilon$. So $q^* = q^*_\tau \leq q^*_\sigma$. \end{proof}

\noindent Lemma \ref{qminimal} allows us to use 
Lemma \ref{invnonneg} with any policy, not just with optimal policies:

\begin{lemma}\label{lem:min-all-strat-defined} For a minPPS, $x=P(x)$, with LFP $0 < q^* <1$, 
for any $0 \leq y \leq q^*$ and any policy 
$\sigma$, $(I - P_\sigma(y))^{-1}$ exists and is non-negative.
Thus also  ${\mathcal N}_{\sigma}(y)$ is defined.
\end{lemma}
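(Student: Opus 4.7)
The plan is to derive this as a short corollary of the two preceding lemmas: Lemma \ref{qminimal} (which gives $q^* \le q^*_\sigma$ for every policy $\sigma$ in a minPPS) and Lemma \ref{invnonneg} (which, given any PPS whose LFP is strictly positive, guarantees that $(I - P'(y))^{-1}$ exists and is non-negative at every $y$ with $0 \le y \le q^*$ and $y < 1$).

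First, I would fix an arbitrary policy $\sigma$ and consider the PPS $x = P_\sigma(x)$ obtained from the minPPS $x=P(x)$. Its LFP is $q^*_\sigma$, and Lemma \ref{qminimal} gives $q^* \le q^*_\sigma$. Since we are assuming $q^* > 0$, it follows that $q^*_\sigma > 0$ as well, so the PPS $x=P_\sigma(x)$ satisfies the positivity hypothesis needed to invoke Lemma \ref{invnonneg}.

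Next I would check the two remaining hypotheses of Lemma \ref{invnonneg} at the point $y$. From $0 \le y \le q^*$ together with $q^* \le q^*_\sigma$, I get $0 \le y \le q^*_\sigma$. From $y \le q^* < 1$ (using the assumption $q^* < 1$), I get the strict bound $y < 1$. Applying Lemma \ref{invnonneg} to the PPS $x = P_\sigma(x)$ at the point $y$ therefore yields that $(I - P'_\sigma(y))^{-1}$ exists and is non-negative, which is the main conclusion. Finally, since the matrix inverse exists, the Newton operator
\[
\mathcal{N}_\sigma(y) = y + (I - P'_\sigma(y))^{-1}\bigl(P_\sigma(y) - y\bigr)
\]
is defined at $y$, giving the second statement of the lemma.

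There is essentially no obstacle: everything reduces to checking that the hypotheses of the already-proved Lemma \ref{invnonneg} transfer to the PPS $x = P_\sigma(x)$, with the key point being the policy-independent bound $q^* \le q^*_\sigma$ supplied by Lemma \ref{qminimal}. This is precisely why Lemma \ref{qminimal} was proved just beforehand — to allow the strict positivity needed for Lemma \ref{invnonneg} to be inherited by \emph{every} $P_\sigma$, not merely by some distinguished optimal policy.
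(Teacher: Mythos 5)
Your proposal is correct and follows exactly the paper's own argument: apply Lemma \ref{qminimal} to get $0 \leq y \leq q^* \leq q^*_\sigma$ with $q^*_\sigma \geq q^* > 0$ and $y \leq q^* < 1$, then invoke Lemma \ref{invnonneg} for the PPS $x = P_\sigma(x)$. Nothing is missing.
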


\begin{proof} $0 \leq y \leq q^* \leq q^*_\sigma \leq 1$. 
Note also that $y < 1$, and that $q^*_\sigma \geq  q^* > 0$. 
This is all we need for Lemma \ref{invnonneg} to apply.
\end{proof}

\begin{lemma} \label{minattained} Given a minPPS, $x=P(x)$, 
with LFP $0 < q^* <1$, and a vector $y$ with $0 \leq y \leq q^*$, 
there is a policy $\sigma$ such that 
$P^y(\mathcal{N}_\sigma(y)) = \mathcal{N}_\sigma(y)$.
\end{lemma}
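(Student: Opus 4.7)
\textbf{Proof plan for Lemma \ref{minattained}.} The plan is to use a policy-improvement (policy iteration) argument: start from an arbitrary policy $\sigma_0$, and at each step inspect the Newton iterate $a_k = \mathcal{N}_{\sigma_k}(y)$; if at some form-M coordinate $i$ with $P_i(x)=\min\{x_j,x_k\}$ the current policy $\sigma_k(i)$ does not select the smaller of $a_{k,j},a_{k,k}$, redefine $\sigma_{k+1}(i)$ to be the argmin (and leave $\sigma_{k+1}(i')=\sigma_k(i')$ at all other coordinates). The termination policy will have the desired property $P^y(\mathcal{N}_\sigma(y)) = \mathcal{N}_\sigma(y)$, because at every form-M coordinate the policy then realizes the min, while form-L and form-Q equations are unaffected by the policy and are already satisfied by any $\mathcal{N}_\sigma(y)$ thanks to Lemma \ref{fpstrat}(i).

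First I would check that every $a_k$ is well-defined: this is immediate from Lemma \ref{lem:min-all-strat-defined}, which guarantees $(I-P'_{\sigma_k}(y))^{-1}$ exists and is non-negative, so $\mathcal{N}_{\sigma_k}(y)$ is defined and is the unique fixed point of $x=P^y_{\sigma_k}(x)$ by Lemma \ref{fpstrat}(i). Next I would verify that the improvement step decreases the iterate. By construction, at every coordinate where we switched, $P^y_{\sigma_{k+1},i}(a_k) = \min\{a_{k,j},a_{k,k}\} < a_{k,\sigma_k(i)} = a_{k,i}$, and at all other coordinates $P^y_{\sigma_{k+1},i}(a_k) = P^y_{\sigma_k,i}(a_k) = a_{k,i}$. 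Hence $P^y_{\sigma_{k+1}}(a_k) \leq a_k$ with strict inequality in at least one coordinate. Lemma \ref{fpstrat}(ii) (applicable thanks to Lemma \ref{lem:min-all-strat-defined}) then yields $a_{k+1} = \mathcal{N}_{\sigma_{k+1}}(y) \leq a_k$.

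The main obstacle is to prove strict decrease (so policies cannot repeat and the iteration must terminate). The clean way is to note that $a_k$ is not a fixed point of $P^y_{\sigma_{k+1}}$ (since $P^y_{\sigma_{k+1}}(a_k) \neq a_k$ in at least one coordinate), while by Lemma \ref{fpstrat}(i) $a_{k+1}$ is the unique such fixed point; therefore $a_{k+1} \neq a_k$. Combined with $a_{k+1} \leq a_k$, this gives a strictly monotone sequence of vectors, so no policy can recur. (Alternatively, one can argue pointwise using the identity $a_{k+1}-a_k = (I-P'_{\sigma_{k+1}}(y))^{-1}(P^y_{\sigma_{k+1}}(a_k)-a_k)$ together with the fact that, for a non-negative matrix $A$ with $(I-A)^{-1}\geq 0$, the diagonal of $(I-A)^{-1}$ is $\geq 1$, which forces the improved coordinate to strictly decrease.)

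Since there are only finitely many policies (at most $\prod_i m_i$, which for SNF minPPSs is at most $2^{\#\text{form-M equations}}$), strict monotonicity forces termination after finitely many improvement steps. At termination no form-M coordinate violates the argmin condition, i.e. for every $i$ with $P_i(x)=\min\{x_j,x_k\}$ we have $a_{\sigma(i)} = \min\{a_j,a_k\}$, so $P^y_i(a) = \min\{a_j,a_k\} = a_{\sigma(i)} = a_i$; together with the form-L and form-Q equations (which hold because $a=\mathcal{N}_\sigma(y)$ is a fixed point of $P^y_\sigma$ and $P^y_\sigma$ and $P^y$ coincide on these coordinates), this yields $P^y(a) = a$, as required. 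I will not claim that this policy-iteration procedure is efficient; it is used only as an existence proof for the desired $\sigma$.
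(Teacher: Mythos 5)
Your proposal is correct and follows essentially the same route as the paper: a policy-improvement argument whose iterates are well-defined by Lemma \ref{lem:min-all-strat-defined}, with $P^y_{\sigma_{k+1}}(a_k)\leq a_k$ and Lemma \ref{fpstrat}(ii) giving monotone decrease, strictness from the uniqueness of the fixed point in Lemma \ref{fpstrat}(i), and termination from finiteness of the policy set. The only cosmetic difference is that the paper switches a single (smallest-index) violating coordinate per step, whereas you allow switching the argmin at each violating coordinate; both variants work identically.
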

\begin{proof} We use a policy (strategy) improvement ``algorithm'' to prove this. 
Start with any policy $\sigma_1$. At step $i$, 
suppose we have a policy $\sigma_i$.

For notational simplicity, in the
following we use the abbreviation:  $z =\mathcal{N}_{\sigma_i}(y)$. 
By Lemma \ref{fpstrat}, $P^y_{\sigma_i}(z) = z$. 
So we have $P^y(z) \leq z$. 
If $P^y(z) = z$, then {\em stop}: we are done.

Otherwise, to construct the next strategy $\sigma_{i+1}$,
take the smallest $j$ such that $(P^y(z))_j < z_j$. 
Note that $P_j(x)$ has form M, because
otherwise $(P(x))_j = (P_{\sigma_i}(x))_j$.
Thus, there is some variable $x_k$ with 
$P_j(x) = \text{min }\{ x_k, x_{\sigma_i(j)} \}$ and 
$z_k < z_{\sigma_i(j)}$. Define $\sigma_{i+1}$ to be:
$$\sigma_{i+1}(l) = \begin{cases}
\sigma_i(l) & \text{if } l \not= j \\
 k & \text{if } l = j
 \end{cases}$$
 Then $(P^y_{\sigma_{i+1}}(z))_j < z_j$, but for every other coordinate $l \not= j$,
 $(P^y_{\sigma_{i+1}}(z))_l = (P^y_{\sigma_i}(z))_l = z_l$. Thus 
\begin{equation}\label{eq:proof-improve-min-ineq}
P^y_{\sigma_{i+1}}(z) \leq z
\end{equation}
\noindent By 
Lemma 
\ref{lem:min-all-strat-defined},
$\mathcal{N}_{\sigma_{i+1}}(y)$ is defined.
Moreover, the inequality (\ref{eq:proof-improve-min-ineq}), together with
 Lemma \ref{fpstrat} (ii), yields that 
$\mathcal{N}_{\sigma_{i+1}}(y) \leq z$.  
But $\mathcal{N}_{\sigma_{i+1}}(y) \neq z$ because 
$P^y_{\sigma_{i+1}}(z) \not= z$ whereas, by Lemma \ref{fpstrat} (i), 
we have 
$P^y_{\sigma_{i+1}}(\mathcal{N}_{\sigma_{i+1}}(y)) = \mathcal{N}_{\sigma_{i+1}}(y)$.

Thus this algorithm gives us a sequence of policies 
$\sigma_1, \sigma_2 ...$ with $\mathcal{N}_{\sigma_1}(y) \geq \mathcal{N}_{\sigma_2}(y) \geq \mathcal{N}_{\sigma_3}(y) \geq ...$, 
where furthermore each step must strictly decrease at least one 
coordinate of $\mathcal{N}_{\sigma_i}(y)$. 
It follows that $\sigma_i \not= \sigma_j$, unless $i=j$. 
There are only finitely many policies. So the sequence must be finite, and  
the algorithm terminates. But it only terminates when we reach a 
$\sigma_i$ with $P^y(\mathcal{N}_{\sigma_i}(y)) = \mathcal{N}_{\sigma_i}(y)$.
 \end{proof}
 
\noindent We note that the analogous policy improvement algorithm might fail
to work for maxPPSs, as we might reach a policy $\sigma_i$ where $(I -
 P_{\sigma_i}(x))^{-1}$ does not exist, or has a negative entry.

The next Lemma shows that
this policy improvement algorithm always produces 
a coordinate-wise minimal Newton iterate over all policies.

\begin{lemma} \label{ufpmin} 
For a minPPS, $x = P(x)$, with LFP $0 < q^* < 1$,
if $0 \leq y \leq q^*$ and $\sigma$ is a policy such that 
$P^y(\mathcal{N}_\sigma(y)) = \mathcal{N}_\sigma(y)$, then:
\begin{itemize}
\vitem[(i)] For any policy $\sigma'$, $\mathcal{N}_{\sigma'}(y) \geq \mathcal{N}_{\sigma}(y)$.
\vitem[(ii)]For any $x \in \real^n$ with $P^y(x) \geq x$, we have $x  \leq \mathcal{N}_\sigma(y)$.
\vitem[(iii)] For any $x \in \real^n$ with $P^y(x) \leq x$, we have $x  \geq \mathcal{N}_\sigma(y)$.
\vitem[(iv)]$\mathcal{N}_\sigma(y)$ is the unique fixed point of $x = P^y(x)$.
\vitem[(v)] $\mathcal{N}_\sigma(y) \leq q^*$.
\end{itemize}\end{lemma}
\begin{proof} Note firstly that by Lemma \ref{lem:min-all-strat-defined}, 
for any policy $\sigma$, 
$(I-P'_\sigma(y))^{-1}$ exists and is non-negative, and ${\mathcal N}_\sigma(y)$ 
is defined. 
\begin{itemize}
\vitem[(i)]  Consider $P^y_{\sigma'}(\mathcal{N}_\sigma(y))$.  Note that 
$P^y_{\sigma'}(\mathcal{N}_\sigma(y)) \geq P^y(\mathcal{N}_\sigma(y)) 
= \mathcal{N}_\sigma(y)$ by assumption. 
Thus, by Lemma \ref{fpstrat} (ii), 
$\mathcal{N}_{\sigma}(y) \leq \mathcal{N}_{\sigma'}(y)$.
\vitem[(ii)] $P^y_\sigma(x) \geq P^y(x) \geq x$, so by Lemma \ref{fpstrat} (ii), 
$x  \leq \mathcal{N}_\sigma(y)$.
\vitem[(iii)] If $P^y(x) \leq x$, then there a policy $\sigma'$ with 
$P^y_{\sigma'}(x) \leq x$, and by Lemma \ref{fpstrat} (ii), $x \geq \mathcal{N}_{\sigma'}(y)$. So using part (i)
of this Lemma, $x \geq \mathcal{N}_{\sigma'}(y) \geq  \mathcal{N}_{\sigma}(y)$.
\vitem[(iv)] By assumption, $\mathcal{N}_\sigma(y)$ is a fixed point of 
$x=P^y(x)$. We just need uniqueness. If $P^y(q) = q$, then by parts (ii) and 
(iii) of this Lemma, $q \leq  \mathcal{N}_\sigma(y)$ and  $q \geq  \mathcal{N}_\sigma(y)$, i.e., $q = \mathcal{N}_\sigma(y)$.
\vitem[(v)] 
Consider an 
optimal policy $\tau$, for the minPPS, $x=P(x)$.
From Lemma \ref{newtbounds},  if follows that 
$\mathcal{N}_\tau(y) \leq q^*_\tau = q^*$. 
And then part 
(i) of this Lemma, gives us that
$\mathcal{N}_\sigma(y) \leq \mathcal{N}_\tau(y) \leq q^*$. 
\end{itemize}\end{proof}

\noindent We can now return to using linear programming, which we can do
in polynomial time.  Recall the LP that ``defines'' $I(y)$, for a minPPS:
\begin{equation}
\label{eq:lp-min}
 \mbox{\em Maximize:} \  \ \sum_i a_i \ ; \quad \quad \quad
 \mbox{\em Subject to:} \quad   P^y(a) \geq a
\end{equation}

\begin{lemma}
\label{lem:min-unique-I(y)} For a minPPS, $x=P(x)$, with LFP $0 < q^* < 1$,
and for
$0 \leq y \leq q^*$, there is a unique optimal solution, which we call $I(y)$, 
to the LP (\ref{eq:lp-min}),
and furthermore $I(y) = \mathcal{N}_\sigma(y)$ for some policy $\sigma$, 
and $P^y(I(y)) = I(y)$. \end{lemma}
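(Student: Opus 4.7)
The plan is to leverage the policy $\sigma$ supplied by Lemma \ref{minattained} together with Lemma \ref{ufpmin} to pin down the unique optimum of the LP (\ref{eq:lp-min}). First I would verify feasibility of the LP: taking $a = 0$ we have $P^y(0) \geq 0$ since each component of $P^y$ has nonnegative coefficients (the $L$, $M$ cases are unchanged from $P$, and for a $Q$ case $P^y_i(0) = -y_j y_k$—wait, this needs care; the constant term is $-y_jy_k \leq 0$, so $P^y(0)$ can be negative). So instead I would either check feasibility by exhibiting an explicit feasible point or skip directly to exhibiting $\mathcal{N}_\sigma(y)$ as a feasible point, which it is, since by Lemma \ref{minattained} we have $P^y(\mathcal{N}_\sigma(y)) = \mathcal{N}_\sigma(y) \geq \mathcal{N}_\sigma(y)$.

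Next I would show the LP is bounded. Let $\sigma$ be the policy guaranteed by Lemma \ref{minattained}, so that $P^y(\mathcal{N}_\sigma(y)) = \mathcal{N}_\sigma(y)$. By Lemma \ref{ufpmin}(ii), any feasible $a$ (i.e., any $a$ with $P^y(a) \geq a$) satisfies $a \leq \mathcal{N}_\sigma(y)$, and by Lemma \ref{ufpmin}(v), $\mathcal{N}_\sigma(y) \leq q^* < 1$. Hence $\sum_i a_i \leq \sum_i \mathcal{N}_\sigma(y)_i < n$ for every feasible $a$. So the LP is feasible and bounded, and therefore attains an optimum at some point $a^*$.

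Finally I would argue uniqueness. Since $\mathcal{N}_\sigma(y)$ is itself feasible (it satisfies $P^y(\mathcal{N}_\sigma(y)) = \mathcal{N}_\sigma(y) \geq \mathcal{N}_\sigma(y)$), the optimal value is at least $\sum_i \mathcal{N}_\sigma(y)_i$. On the other hand, any optimum $a^*$ satisfies $a^* \leq \mathcal{N}_\sigma(y)$ by Lemma \ref{ufpmin}(ii). Coordinatewise dominance together with equal (or larger) coordinate sums forces $a^* = \mathcal{N}_\sigma(y)$. This shows $I(y) := a^* = \mathcal{N}_\sigma(y)$ is well-defined and unique. The relation $P^y(I(y)) = I(y)$ is then immediate from $P^y(\mathcal{N}_\sigma(y)) = \mathcal{N}_\sigma(y)$.

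The only subtle point—and what I expect to be the main thing to get right—is invoking Lemma \ref{ufpmin} correctly: that lemma requires a policy $\sigma$ with $P^y(\mathcal{N}_\sigma(y)) = \mathcal{N}_\sigma(y)$, which is exactly what Lemma \ref{minattained} provides. Once this policy is in hand, the rest is a short comparison: its Newton iterate is feasible for the max-LP, dominates every other feasible point coordinatewise, and hence is the unique maximizer of the sum objective.
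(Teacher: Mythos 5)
Your proof is correct and follows essentially the same route as the paper: obtain the policy $\sigma$ from Lemma \ref{minattained}, note that $\mathcal{N}_\sigma(y)$ is feasible, use Lemma \ref{ufpmin}(ii) to see that it coordinatewise dominates every feasible point, and conclude that it is the unique maximizer of the sum objective. Your extra remarks on boundedness and the (correctly abandoned) attempt to use $a=0$ as a feasible point are harmless additions to what is otherwise the paper's argument.
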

\begin{proof}
By Lemma \ref{minattained}, 
there is a $\sigma$ such that $P^y(\mathcal{N}_\sigma(y)) = \mathcal{N}_\sigma(y)$. So $\mathcal{N}_\sigma(y)$ is a feasible solution of $P^y(a) \geq a$. Let $a$ by any solution of $P^y(a) \geq a$. By Lemma \ref{ufpmin} (ii), $a \leq \mathcal{N}_\sigma(y)$. Consequently $\sum_{i=1}^n a_i \leq  \sum_{i=1}^n (\mathcal{N}_\sigma(y))_i$ with equality only if $a= \mathcal{N}_\sigma(y)$. So $\mathcal{N}_\sigma(y)$ is the unique optimal solution of the LP (\ref{eq:lp-min}).
\end{proof}

\noindent In the maxPPS case, we had an iteration that was at least as good as iterating with the optimal policy. Here we have an iteration that is at least as 
bad!  Nevertheless, we shall see that it is good enough.
In the maxPPS case, the analog of Lemma \ref{ppshalf}, 
Lemma \ref{maxhalf}, thus followed from Lemma \ref{ppshalf}. Here we 
crucially need a stronger result than Lemma \ref{ppshalf}.

\begin{lemma} \label{halfineq} If $x=P(x)$ is a PPS and we are given $x,y \in \mathbb{R}^n$ with $0 \leq x \leq y \leq P(y) \leq 1$, and if 
the following conditions hold:
\begin{equation}
\label{eq:assumptions-long-derivation-min}
\lambda>0 \quad  \mbox{and} \quad   y-x \leq \lambda (\textbf{1}-y) 
\quad \mbox{and}
\quad (I - P'(x))^{-1} \ \mbox{exists and is non-negative,}
\end{equation}
then $y - \mathcal{N}(x) \leq \frac{\lambda}{2} (\textbf{1}-y)$. \end{lemma}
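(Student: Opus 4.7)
The plan is to mimic the proof of Lemma \ref{ppshalf}, but working with the ``sub-fixed point'' $y$ (satisfying $y \leq P(y)$) instead of the genuine fixed point $q^*$. The key tool is the mean-value identity
\[
P(y) \;=\; P(x) + \tfrac{P'(x)+P'(y)}{2}(y-x),
\]
which holds exactly (not just approximately) for any PPS in SNF because every $P_i$ has degree at most~$2$; one can check this termwise on Form~L and Form~Q equations. Using this identity, a routine manipulation (analogous to the derivation of equation~(\ref{eqn:newt-diff}) in the proof of Lemma \ref{newtbounds}) yields
\[
y - \mathcal{N}(x) \;=\; (I-P'(x))^{-1}\!\left[(y-P(y)) \;+\; \tfrac{P'(y)-P'(x)}{2}(y-x)\right].
\]
Since $y \leq P(y)$ and $(I-P'(x))^{-1}\geq 0$ by hypothesis, the first summand is non-positive and can be dropped, giving
\[
y - \mathcal{N}(x) \;\leq\; (I-P'(x))^{-1}\,\tfrac{P'(y)-P'(x)}{2}(y-x).
\]

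Next, using monotonicity of $(I-P'(x))^{-1}$, it will suffice to prove the component-wise inequality
\[
(I-P'(x))\bigl[\tfrac{\lambda}{2}(\mathbf{1}-y)\bigr] \;\geq\; \tfrac{P'(y)-P'(x)}{2}(y-x),
\]
equivalently $\lambda(\mathbf{1}-y) \geq \lambda P'(x)(\mathbf{1}-y) + (P'(y)-P'(x))(y-x)$. I would verify this row by row according to the SNF form of $P_i$. For a Form~L row $P_i(x)=a_{i,0}+\sum_j a_{i,j}x_j$, the second term on the right vanishes, so the inequality reduces to $1-y_i \geq \sum_j a_{i,j}(1-y_j)$; this follows from $y_i \leq P_i(y)$ together with $a_{i,0}+\sum_j a_{i,j}\leq 1$ (the \emph{probabilistic}, not just monotone, condition—this is the only place it is used). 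For a Form~Q row $P_i(x)=x_jx_k$, I would use $y_i\leq y_jy_k$ (again from $y\leq P(y)$) to reduce the needed inequality to
\[
\lambda(1-x_j)(1-x_k) \;\geq\; (\lambda+2)(y_j-x_j)(y_k-x_k),
\]
after writing $1-y_jy_k-x_k(1-y_j)-x_j(1-y_k) = (1-x_j)(1-x_k)-(y_j-x_j)(y_k-x_k)$.

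The main obstacle is this last inequality, since we have no fixed-point identity at $y$. The trick is to convert the hypothesis $y_\ell-x_\ell\leq\lambda(1-y_\ell)$ into a bound in terms of $1-x_\ell$: writing $1-y_\ell=(1-x_\ell)-(y_\ell-x_\ell)$ gives $(1+\lambda)(y_\ell-x_\ell)\leq\lambda(1-x_\ell)$, i.e.\ $\frac{y_\ell-x_\ell}{1-x_\ell}\leq \frac{\lambda}{1+\lambda}$ for $\ell\in\{j,k\}$. Multiplying these bounds yields
\[
\frac{(y_j-x_j)(y_k-x_k)}{(1-x_j)(1-x_k)} \;\leq\; \Bigl(\tfrac{\lambda}{1+\lambda}\Bigr)^{\!2} \;\leq\; \tfrac{\lambda}{\lambda+2},
\]
where the last step is just the algebraic identity $\lambda(\lambda+2)\leq (1+\lambda)^2$. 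This is exactly the bound needed, completing the proof.
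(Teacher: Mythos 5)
Your proof is correct, and the first half coincides with the paper's: both use the exact degree-$2$ mean-value identity $P(y)-P(x)=\tfrac{1}{2}(P'(x)+P'(y))(y-x)$ to write $y-\mathcal{N}(x)$ as $(I-P'(x))^{-1}$ applied to $(y-P(y))+\tfrac{1}{2}(P'(y)-P'(x))(y-x)$, and both discard the non-positive term $y-P(y)$. Where you diverge is in bounding the remaining term. The paper stays at the vector level: it first proves $P'(y)(\mathbf{1}-y)\le(\mathbf{1}-y)$ (from $P(\mathbf{1})\le\mathbf{1}$, $y\le P(y)$, and the mean-value identity applied to $P(\mathbf{1})-P(y)$), then chains $\tfrac{1}{2}(P'(y)-P'(x))(y-x)\le\tfrac{\lambda}{2}(P'(y)-P'(x))(\mathbf{1}-y)\le\tfrac{\lambda}{2}(I-P'(x))(\mathbf{1}-y)$ and cancels the matrix against its inverse. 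You instead verify the target inequality $(I-P'(x))\,\tfrac{\lambda}{2}(\mathbf{1}-y)\ge\tfrac{1}{2}(P'(y)-P'(x))(y-x)$ coordinate-wise using the SNF structure, with the nice device of converting $y-x\le\lambda(\mathbf{1}-y)$ into $(1+\lambda)(y_\ell-x_\ell)\le\lambda(1-x_\ell)$ and invoking $\bigl(\tfrac{\lambda}{1+\lambda}\bigr)^2\le\tfrac{\lambda}{\lambda+2}$ on the Form~Q rows. Your route is more computational but makes explicit exactly where the probabilistic (sub-stochastic) condition enters (only Form~L rows) and how tight the quadratic rows are; the paper's route avoids the case analysis. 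One cosmetic point: on a Form~Q row you divide by $(1-x_j)(1-x_k)$, which could vanish if some $x_\ell=1$; since both sides of $(1+\lambda)(y_\ell-x_\ell)\le\lambda(1-x_\ell)$ are non-negative, you should simply multiply the two linear inequalities directly rather than the two ratios, which yields $(1+\lambda)^2(y_j-x_j)(y_k-x_k)\le\lambda^2(1-x_j)(1-x_k)$ and the same conclusion without any division.
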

\noindent (Note that we cannot conclude that $y - \mathcal{N}(x) \geq 0$.)
\begin{proof} 
Firstly, we show that $P'(y)(\textbf{1}-y) \leq (\textbf{1}-y)$. 
Clearly, for any PPS, $P(\textbf{1}) \leq  1$.   
Note  that since by assumption $y \leq P(y)$,
we have $(\textbf{1}-y) \geq  (\textbf{1}-P(y)) \geq (P(\textbf{1}) - P(y))$.
Then by Lemma 3.3 of \cite{ESY12}:
 \begin{eqnarray}
(\textbf{1}- y) \geq P(\textbf{1}) - P(y) & = & P'(\frac{\textbf{1}+y}{2})(\textbf{1}-y) \\
					& \geq & P'(y)(\textbf{1}-y) 
\label{eqar:1-y-cone-min}\end{eqnarray}
Again by Lemma 3.3 of \cite{ESY12}: $P(y) - P(x) = \frac{1}{2}(P'(x) + P'(y))(y-x)$, and thus:
\begin{equation}
\label{eq:used-inside-new-halfing}
P(x) = P(y) - \frac{1}{2}(P'(x) + P'(y))(y-x) 
\end{equation}
Thus:
\begin{eqnarray*}
  y - {\mathcal N}(x) & = & y - x - (I-P'(x))^{-1}(P(x) - x)\\
 & = & y - x - (I-P'(x))^{-1}(P(y) - x - \frac{1}{2}(P'(x) + P'(y))(y-x))  \quad \mbox{(by (\ref{eq:used-inside-new-halfing}))}\\
 & \leq & y - x - (I-P'(x))^{-1}(y - x - \frac{1}{2}(P'(x) + P'(y))(y-x)) \\
 & = & (y - x) - (I-P'(x))^{-1}( (y - x)  - \frac{1}{2}(P'(x) + P'(y))(y-x)) \\
  & = & (I - (I-P'(x))^{-1}( I  - \frac{1}{2}(P'(x) + P'(y)))) (y-x)\\
  & = & ((I-P'(x))^{-1}(I-P'(x)) - (I-P'(x))^{-1}( I  - \frac{1}{2}(P'(x) + P'(y)))) (y-x)\\
 & = & (I-P'(x))^{-1} (I-P'(x) -  (I  - \frac{1}{2}(P'(x) + P'(y)))) (y-x)\\
 & = & (I-P'(x))^{-1} (-P'(x) + \frac{1}{2}(P'(x) + P'(y))) (y-x)\\
  & = & (I-P'(x))^{-1}\frac{1}{2}(P'(y) - P'(x))(y-x) \\
  & \leq & \frac{\lambda}{2}(I-P'(x))^{-1}(P'(y) - P'(x))(\textbf{1}-y) \quad
\mbox{(by (\ref{eq:assumptions-long-derivation-min}), and because 
$(P'(y) - P'(x)) \geq 0$)}\\
  & \leq & \frac{\lambda}{2}(I-P'(x))^{-1}(I - P'(x))(\textbf{1}-y) \quad
\mbox{(because by (\ref{eqar:1-y-cone-min}),  $P'(y)(1-y) \leq (1-y)$)}\\
  & = & \frac{\lambda}{2}(\textbf{1}-y) \end{eqnarray*}
\end{proof}

\begin{lemma} \label{minhalf} Let $x=P(x)$ be a minPPS, with LFP $0 < q^* < 1$. For any $0 \leq x \leq q^*$  and $\lambda > 0$, 
$I(x) \leq q^*$, and if:
$$q^* - x \leq \lambda(1-q^*)$$
then
$$q^* - I(x)\leq \frac{\lambda}{2}(1-q^*)$$
\end{lemma}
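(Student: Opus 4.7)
The plan is to reduce this to the PPS-level halving lemma, Lemma \ref{halfineq}, but applied to the PPS $x = P_\sigma(x)$ obtained by fixing a particular policy $\sigma$ rather than an optimal one. The key is the policy $\sigma$ supplied by Lemma \ref{lem:min-unique-I(y)}: recall that this lemma tells us $I(x) = \mathcal{N}_\sigma(x)$ for some policy $\sigma$ (depending on $x$), and that $P^x(I(x)) = I(x)$. Given that representation, the bound $I(x) \leq q^*$ is immediate from Lemma \ref{ufpmin}(v), so the content of the lemma is entirely in the ``halving'' claim.

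To prove that claim, I would apply Lemma \ref{halfineq} to the PPS $x = P_\sigma(x)$ with the chosen vector $x$ playing the role of the lower point and $q^*$ playing the role of the upper point $y$. The hypotheses of Lemma \ref{halfineq} that need to be checked are: (a) $0 \leq x \leq q^* \leq P_\sigma(q^*) \leq 1$; (b) $q^* - x \leq \lambda(\mathbf{1} - q^*)$; and (c) $(I - P'_\sigma(x))^{-1}$ exists and is non-negative. Condition (b) is exactly our assumption, and condition (c) is given by Lemma \ref{lem:min-all-strat-defined}, which applies to every policy when $0 \leq x \leq q^* < \mathbf{1}$.

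The only nontrivial point is the middle inequality in (a), namely $q^* \leq P_\sigma(q^*)$. This is precisely where the minPPS structure is used: because $P_i(q^*) = \min_j q_{i,j}(q^*)$ for every $i$ of form M, fixing any policy $\sigma$ can only increase the value of each coordinate, so $P_\sigma(q^*) \geq P(q^*) = q^*$; on equations of form L or Q the policy is irrelevant and equality holds. The upper bound $P_\sigma(q^*) \leq \mathbf{1}$ follows since $q^* \leq \mathbf{1}$ and $P_\sigma$ maps $[0,1]^n$ into $[0,1]^n$ (it is a PPS).

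With all hypotheses verified, Lemma \ref{halfineq} applied to the PPS $x = P_\sigma(x)$ yields
\[
q^* - \mathcal{N}_\sigma(x) \leq \tfrac{\lambda}{2}(\mathbf{1} - q^*).
\]
Since $I(x) = \mathcal{N}_\sigma(x)$, this is exactly the desired bound $q^* - I(x) \leq \tfrac{\lambda}{2}(\mathbf{1} - q^*)$. The main conceptual obstacle is that, unlike in the maxPPS case where we compared against an optimal policy $\tau$ with $q^*_\tau = q^*$ and used the original Lemma \ref{ppshalf}, here $q^*_\sigma$ may strictly exceed $q^*$, so $q^*$ is not the LFP of the auxiliary PPS; this is exactly what forces us to use the stronger Lemma \ref{halfineq}, which only requires $q^* \leq P_\sigma(q^*)$ rather than $q^* = P_\sigma(q^*)$.
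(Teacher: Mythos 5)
Your proposal is correct and follows essentially the same route as the paper: obtain $\sigma$ with $I(x)=\mathcal{N}_\sigma(x)$ from Lemma \ref{lem:min-unique-I(y)} (via Lemma \ref{minattained}), verify $P_\sigma(q^*)\geq P(q^*)=q^*$ and the invertibility condition from Lemma \ref{lem:min-all-strat-defined}, apply the strengthened halving Lemma \ref{halfineq} with $q^*$ in the role of $y$, and get $I(x)\leq q^*$ from Lemma \ref{ufpmin}(v). Your closing remark about why Lemma \ref{ppshalf} does not suffice here is exactly the point the paper makes as well.
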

\begin{proof} 
By Lemma \ref{minattained}, there is a policy $\sigma$ with 
$I(x) = \mathcal{N}_\sigma(x)$.  We then apply Lemma \ref{halfineq} to 
$x= P_\sigma(x)$, $x$, and $q^*$ instead of $y$. Observe that $P_\sigma(q^*) \geq P(q^*) = q^*$ and that $(I-P'_\sigma(x))^{-1}$ exists and is non-negative. 
Thus the conditions of Lemma \ref{halfineq} hold, and we can 
conclude that $q^* - \mathcal{N}_\sigma(x) \leq \frac{\lambda}{2}(1-q^*)$.
Lastly, Lemma \ref{ufpmin} (v) 
and Lemma \ref{lem:min-unique-I(y)} yield that $I(x) = \mathcal{N}_\sigma(x) \leq q^*$.
\end{proof}

\noindent Proposition \ref{subgoal} for minPPS follows  from Lemmas \ref{lem:min-unique-I(y)} and \ref{minhalf}.

\subsection{\label{atlast} A polynomial-time algorithm (in the Turing model) for max/minPPSs}

In \cite{ESY12} we gave a polynomial time algorithm, in the 
standard Turing model
of computation,
for approximating the LFP of a PPS, $x=P(x)$, using Newton's method. 
Here we use the same methods from \cite{ESY12}, with  our new 
{\em Generalized Newton's Method} (GNM), $I(x)$, to obtain polynomial-time algorithms (again, in the standard Turing model), for approximating the LFP of maxPPSs and minPPSs.
The proof in \cite{ESY12} uses induction based on the ``halving lemma'', 
Lemma \ref{ppshalf}.  
We of course now have suitable ``halving lemmas'' for maxPPSs and minPPSs,
namely, Lemmas \ref{maxhalf} and \ref{minhalf}.
In \cite{ESY12}, the following bound was used
for the base case of the induction:
\begin{lemma}[Theorem 3.12 from \cite{ESY12}] \label{1-qbound}
If $0 < q^* < 1$ is the LFP of a PPS, $x = P(x)$, in $n$ variables, 
then for all $i \in \{1,\ldots, n\}$:
$$1 - q^*_i \geq 2^{-4|P|}$$
In other words, $0 < q^*_i \leq 1 - 2^{-4|P|}$
, for all $i \in \{1, \ldots,n\}$.\end{lemma}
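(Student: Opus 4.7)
My plan is to work with the ``deficit'' vector $t := \mathbf{1} - q^*$, which is strictly positive by hypothesis, and reduce the statement to showing $t_{\min} := \min_i t_i \geq 2^{-4|P|}$. Substituting $q^* = \mathbf{1} - t$ into $q^* = P(q^*)$ gives concrete recurrences that $t$ satisfies in SNF: for a form-L coordinate $i$ with $P_i(x) = a_{i,0} + \sum_j a_{ij} x_j$, one gets $t_i = c_i + \sum_j a_{ij} t_j$ where $c_i := 1 - a_{i,0} - \sum_j a_{ij} \geq 0$ is the ``leakage'' at coordinate $i$; for a form-Q coordinate with $P_i(x) = x_j x_k$, one gets $t_i = 1 - (1-t_j)(1-t_k) = t_j + t_k - t_j t_k$.

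The first observation is that $t_{\min}$ cannot be attained at a form-Q coordinate. If some $i^*$ with $P_{i^*}(x) = x_j x_k$ achieved $t_{\min}$, then $q^*_{i^*} = q^*_j q^*_k$, and $t_j, t_k \geq t_{i^*}$ (i.e., $q^*_j, q^*_k \leq q^*_{i^*}$) would force $q^*_{i^*} \leq (q^*_{i^*})^2$, hence $q^*_{i^*} \in \{0,1\}$, contradicting $0 < q^* < 1$. So $t_{\min}$ is attained at some form-L coordinate $i^*$. From $t_{i^*} = c_{i^*} + \sum_j a_{i^*,j} t_j$ together with $t_j \geq t_{\min}$ I get the inequality $(1 - \sum_j a_{i^*,j}) t_{\min} \geq c_{i^*}$. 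In the favorable sub-case that \emph{some} minimum-achiever $i^*$ satisfies $c_{i^*} > 0$, I am done: $c_{i^*}$ is a positive rational whose numerator and denominator each have bit length at most $|P|$, so $c_{i^*} \geq 2^{-|P|}$, and since $1 - \sum_j a_{i^*,j} \leq 1$, I conclude $t_{\min} \geq c_{i^*} \geq 2^{-|P|} \geq 2^{-4|P|}$.

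To rule out the remaining sub-case (every minimum-achiever has $c_i = 0$), I would use a closure/propagation argument. Take $S$ to be the closure of $\{i : t_i = t_{\min}\}$ under the ``tight'' form-L dependencies, namely those $i$ with $c_i = 0$ and $\sum_j a_{ij} = 1$, where the identity $t_i = \sum_j a_{ij} t_j$ with unit weight sum together with $t_j \geq t_{\min}$ forces $t_j = t_{\min}$ for every $j$ with $a_{ij} > 0$. If $S$ contains no coordinate with $c > 0$ (and, by the earlier step, no form-Q coordinate), then the subsystem restricted to $S$ is a purely linear system $x_S = A x_S$ with $A$ row-stochastic and no constant term; its LFP is $\mathbf{0}$. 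But $q^*|_S$ is a fixed point of this restricted subsystem, and by a standard argument (extending any smaller fixed point of the subsystem to a pre-fixed-point of the full system using $q^*$ on the remaining coordinates, and then invoking the Tarski-style minimality of $q^*$) one shows $q^*|_S$ equals the LFP of the restricted subsystem. Hence $q^*|_S = \mathbf{0}$, contradicting $q^*|_S > 0$.

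The main technical obstacle is that the naive propagation above breaks down precisely at a minimum-achiever $i^*$ with $c_{i^*} = 0$ but $\sum_j a_{i^*,j} < 1$ (equivalently $a_{i^*,0} > 0$): then $(1 - \sum_j a_{i^*,j}) t_{\min} = \sum_j a_{i^*,j}(t_j - t_{\min}) > 0$, which actually forces some $t_j$ to be \emph{strictly} larger than $t_{\min}$ along $a_{i^*,j} > 0$, so the minimum does not propagate along every outgoing edge and $i^*$ need not be absorbed into a clean ``stochastic'' closure. To resolve this, I would refine the closure to include not only exact minimum-achievers but also coordinates whose $t$-value lies within a bounded multiplicative slack of $t_{\min}$; alternatively, I would exploit the LFP characterization $q^* = \lim_k P^k(\mathbf{0})$ together with the $|P|$-bit granularity of the coefficients to show that in at most $O(|P|)$ Kleene iterations any positive leak (which must exist under the hypothesis) is propagated to every coordinate with per-step amplification bounded by $2^{O(1)}$, yielding the stated $2^{-4|P|}$ bound.
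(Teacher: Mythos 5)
Your first two steps are sound: the minimum of $t=\mathbf{1}-q^*$ cannot sit at a form-Q coordinate (since $q^*_{i^*}\le (q^*_{i^*})^2$ with $0<q^*_{i^*}<1$ is impossible), and if some minimum-achiever has positive leakage $c_{i^*}>0$ then $t_{\min}\ge c_{i^*}\ge 2^{-|P|}$ by the usual common-denominator argument. But the case you flag at the end is a genuine gap, not a technicality, and it is exactly where the content of the lemma lives. The bad configuration --- every minimum-achiever is form L with $c_{i^*}=0$ but $a_{i^*,0}>0$ --- really occurs. Take $x_1=x_2x_3$, $x_2=\frac14+\frac34 x_1$, $x_3=\frac14+\frac34 x_1$: here $q^*=(\frac19,\frac13,\frac13)$, $t=(\frac89,\frac23,\frac23)$, the minimum is attained at $x_2,x_3$, both of which have $P_j(1)=1$ (so $c_j=0$) and $\sum_k a_{jk}=\frac34<1$. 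Your closure $S$ is then $\{2,3\}$, contains no tight coordinate, no leaky coordinate and no form-Q coordinate, and the argument produces nothing: the only ``source of deficit'' in this system is the quadratic equation itself, and bounding the deficit that a form-Q equation generates (rather than merely transmits) is the part your proof never addresses. The minimum simply does not propagate backward from $x_2$ to $x_1$, since $t_2=\frac34 t_1<t_1$.

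Neither of your two proposed repairs closes this. ``Widening the closure to a multiplicative slack'' is a restatement of the difficulty, not a solution: you would still need a quantitative lower bound on how much deficit the quadratic coordinates inject, which is the whole problem. The Kleene-iteration idea is worse: $1-q^*_i$ is a property of the limit $q^*=\lim_k P^k(\mathbf{0})$, and $P^k(\mathbf{0})$ after $O(|P|)$ steps need not be anywhere near $q^*$ (Kleene iteration for PPSs converges only linearly, and can be far slower near critical systems), so no ``per-step amplification'' bookkeeping on finitely many iterates can certify a lower bound on $1-q^*_i$. What is actually needed is a \emph{forward} propagation: every variable reaches, along a dependency path whose coefficient product is at least $2^{-|P|}$ (cf.\ Lemma~\ref{deplbound}), a coordinate that is either leaky or of form Q (cf.\ Lemma~\ref{lem:almost-at-the-end}), and the form-Q endpoints must then be handled by a genuinely quantitative argument --- in \cite{ESY12} this is done with spectral/norm bounds on $(I-P'(\cdot))^{-1}$ of the kind developed in Lemma~\ref{quantpf} and Theorem~\ref{normbounds} here, not by the combinatorial minimum-propagation you attempt. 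As it stands, your proof establishes the bound only under the additional hypothesis that some minimum-achiever is leaky or that the closure $S$ consists entirely of tight stochastic rows, and the complementary case is both nonempty and the hard one.
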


\noindent We can now easily derive an analogous Lemma for  
the setting of max/minPPSs:

\begin{lemma} \label{gap} If $0 < q^* < 1$ is the LFP of a max/minPPS, $x = P(x)$, 
in $n$ variables, then for all $i \in \{1, \ldots , n\}$:
$$1 - q^*_i \geq 2^{-4|P|}$$
In other words, $0 < q^*_i \leq 1 - 2^{-4|P|}$,
for all $i \in  \{1, \ldots, n\}$.
\end{lemma}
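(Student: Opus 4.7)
The plan is to reduce the max/minPPS case to the already-proved PPS case (Lemma \ref{1-qbound}) by invoking the existence of an optimal policy. Specifically, by Theorem \ref{optexist} there exists an optimal policy $\tau$ for the max/minPPS $x = P(x)$ such that $q^* = q^*_\tau$, where $q^*_\tau$ is the LFP of the PPS $x = P_\tau(x)$. Since by hypothesis $0 < q^* < 1$, we also have $0 < q^*_\tau < 1$, so Lemma \ref{1-qbound} applies to $x = P_\tau(x)$ and yields $1 - (q^*_\tau)_i \geq 2^{-4|P_\tau|}$ for every coordinate $i$.

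The remaining step is to compare the encoding sizes $|P_\tau|$ and $|P|$. Since the system is in SNF form, selecting a policy only affects equations of Form M: the equation $x_i = \max\{x_j, x_k\}$ (or $\min\{x_j, x_k\}$) is replaced by $x_i = x_{\tau(i)}$, which can only decrease the bit-length of the equation. Equations of Form L and Form Q are unchanged by the policy. Hence $|P_\tau| \leq |P|$, and combining with the previous bound gives
\[
1 - q^*_i = 1 - (q^*_\tau)_i \geq 2^{-4|P_\tau|} \geq 2^{-4|P|},
\]
which is the desired inequality.

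There is really no substantive obstacle: the only point that requires a moment's care is verifying $|P_\tau| \leq |P|$, which is immediate from the SNF form, and the observation that the hypothesis $0 < q^* < 1$ transfers to the fixed-policy PPS via $q^* = q^*_\tau$. I would write this up as a short two- or three-sentence proof, citing Theorem \ref{optexist} for the existence of $\tau$ and Lemma \ref{1-qbound} for the quantitative bound on the PPS $x = P_\tau(x)$.
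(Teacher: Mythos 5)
Your proof is correct and matches the paper's argument exactly: both invoke Theorem \ref{optexist} to obtain an optimal policy $\tau$ with $q^* = q^*_\tau$, apply Lemma \ref{1-qbound} to the PPS $x = P_\tau(x)$, and conclude via the observation that $|P_\tau| \leq |P|$. No issues.
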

\begin{proof} Let $\tau$ be any optimal policy
for $x=P(x)$.  We know it exists, by Theorem
\ref{optexist}.
Lemma \ref{1-qbound} gives that $1 - q^*_i \geq 2^{-4|P_\tau|}$. 
All we need is to note is that $|P| \geq |P_\tau|$,
which clearly holds using any sensible encoding for $P$ and $P_\tau$,
in the sense that we should need no more bits needed to 
encode  $x_i = x_j$ than to encode $x_i = \max \{x_j,x_k\}$ or 
$x_i = \min \{x_j,x_k\}$.
\end{proof}

Now we can give a polynomial time algorithm, in the Turing model
of computation,
for approximating the LFP, $q^*$, for a max/minPPS, 
to within any desired precision, 
by carrying out iterations of GNM using the same rounding technique, 
with the same rounding parameter, 
and using the same number of iterations, as in \cite{ESY12}.
Specifically, we use the following algorithm with rounding parameter $h$:

\medskip
\noindent Start with $x^{(0)} :=0$;\\
For each $k \geq 0$ compute $x^{(k+1)}$ from $x^{(k)}$ as follows:
\begin{enumerate}
\item Calculate $I(x^{(k)})$  by solving the following LP:\\
{\em Minimize:} \ $\sum_i x_i$ \ ; \ {\em Subject to:} $P^{x^{(k)}}(x) \leq x$,  
if $x=P(x)$ is a maxPPS,\\ 
 or: \\
{\em Maximize:} \ $\sum_i x_i$\ ; \   {\em Subject to:} $P^{x^{(k)}}(x) \geq x$, 
if $x=P(x)$ is a minPPS.

\item For each coordinate $i=1,2,...n$, set $x_i^{(k+1)}$ to be the 
maximum (non-negative) multiple of $2^{-h}$ which is $\leq 
\text{max}\{0,I(x^{(k)})_i\}$. 
(In other words, we round $I(x^{(k)})$ down to the nearest $2^{-h}$ and ensure it is non-negative.)
\end{enumerate}

\begin{theorem} \label{mptime} Given any max/minPPS, $x=P(x)$, with 
LFP $0 < q^* < 1$,  if we use
the above algorithm with rounding parameter $h = j +2+4|P|$, then the iterations are all defined,
and for every $k \geq 0$ we have $0 \leq x^{(k)} \leq q^*$, and furthermore after $h = j + 2 + 4|P|$ iterations we
have:
$$\|q^*- x^{(j+2+4|P|)}\|_\infty \leq 2^{-j}$$
\end{theorem}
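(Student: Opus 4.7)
The plan is to follow the inductive schema from \cite{ESY12}, using Proposition \ref{subgoal} in place of Lemma \ref{ppshalf}. I will run two parallel inductions on $k$: a qualitative one maintaining $0 \leq x^{(k)} \leq q^*$ (which simultaneously guarantees via Proposition \ref{subgoal}(1) that $I(x^{(k)})$ is well-defined), and a quantitative one tracking a scalar $\lambda_k$ with $q^* - x^{(k)} \leq \lambda_k(1-q^*)$ coordinate-wise.

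For the qualitative invariant, the base case $x^{(0)} = 0$ is immediate. Assuming $0 \leq x^{(k)} \leq q^*$, Proposition \ref{subgoal}(1) gives that $I(x^{(k)})$ is defined and $I(x^{(k)}) \leq q^*$. The step that sets $x^{(k+1)}_i$ to the largest non-negative multiple of $2^{-h}$ at most $\max\{0, I(x^{(k)})_i\}$ clearly preserves $0 \leq x^{(k+1)}_i \leq q^*_i$. For the base case of the quantitative induction, Lemma \ref{gap} gives $(1-q^*_i)^{-1} \leq 2^{4|P|}$, so $q^* = q^* - x^{(0)} \leq \mathbf{1} \leq 2^{4|P|}(1-q^*)$, and we may take $\lambda_0 = 2^{4|P|}$.

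For the inductive step on $\lambda_k$, Proposition \ref{subgoal}(2) gives $q^* - I(x^{(k)}) \leq (\lambda_k/2)(1-q^*)$. When $I(x^{(k)})_i \geq 0$, rounding down decreases that coordinate by at most $2^{-h}$; when $I(x^{(k)})_i < 0$, clipping to $0$ only tightens the bound, since then $q^*_i - x^{(k+1)}_i = q^*_i \leq q^*_i - I(x^{(k)})_i$. Either way, coordinate-wise, $q^* - x^{(k+1)} \leq (\lambda_k/2)(1-q^*) + 2^{-h}\mathbf{1}$. Using $\mathbf{1} \leq 2^{4|P|}(1-q^*)$ once more and the choice $h = j + 2 + 4|P|$, we get $2^{-h}\mathbf{1} \leq 2^{-j-2}(1-q^*)$, which yields the scalar recurrence $\lambda_{k+1} \leq \lambda_k/2 + 2^{-j-2}$.

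This recurrence has fixed point $\lambda^* = 2^{-j-1}$ and contracts toward it at rate $1/2$, so $\lambda_k \leq (\lambda_0 - \lambda^*)/2^k + \lambda^* \leq 2^{4|P|-k} + 2^{-j-1}$. Substituting $k = j + 2 + 4|P|$ yields $\lambda_k \leq 2^{-j-2} + 2^{-j-1} < 2^{-j}$, and therefore $\|q^* - x^{(k)}\|_\infty \leq \lambda_k\,\|1-q^*\|_\infty \leq \lambda_k < 2^{-j}$. The only delicate point in the plan is the bookkeeping of the additive rounding error $2^{-h}\mathbf{1}$: it must be rescaled into the $(1-q^*)$-cone before it can be absorbed into the halving recurrence, and this is precisely where the $4|P|$ slack in $h$ together with the gap estimate of Lemma \ref{gap} are used. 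All of the genuinely hard work has already been shouldered by Proposition \ref{subgoal} (which packages the max/min-specific arguments of Lemmas \ref{maxhalf} and \ref{minhalf}); given it, the argument above is a line-by-line adaptation of the PPS analysis in \cite{ESY12}.
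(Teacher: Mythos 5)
Your proposal is correct and follows essentially the same route as the paper's proof: the same two inductions (well-definedness plus $0 \leq x^{(k)} \leq q^*$ via Proposition \ref{subgoal}(1), and the invariant $q^* - x^{(k)} \leq \lambda_k(\mathbf{1}-q^*)$ halved by Proposition \ref{subgoal}(2) with the $2^{-h}$ rounding error rescaled into the $(\mathbf{1}-q^*)$-cone via Lemma \ref{gap}). The only difference is presentational — you solve the scalar recurrence $\lambda_{k+1} \leq \lambda_k/2 + 2^{-j-2}$ via its fixed point, whereas the paper unrolls it as a geometric sum — and the constants work out identically.
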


\noindent The proof is very similar to the proof of Theorem 4.2 
in \cite{ESY12},
and is given in the Appendix.

\begin{corollary} \label{mcor} Given any max/minPPS, $x=P(x)$, 
with LFP $q^*$, and given any integer $j>0$, there is an algorithm that
computes a rational vector $v$ with $\|q^* -v\|_\infty \leq 2^{-j}$,
in time polynomial in $|P|$ and $j$.\end{corollary}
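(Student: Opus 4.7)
The plan is to reduce the problem to the case handled by Theorem \ref{mptime} via a preprocessing step, and then to argue that each iteration of the algorithm remains polynomial-sized so that the whole procedure runs in the standard Turing model.

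First, I would apply Proposition \ref{prob1-ptime-scfg-prop} to the input max/minPPS $x=P(x)$ in P-time to identify the set $Z$ of indices $i$ with $q^*_i = 0$ and the set $U$ of indices $i$ with $q^*_i = 1$. For these indices the answer is known exactly. I substitute the known values $0$ and $1$ into the right-hand sides of the remaining equations, drop the equations for $i \in Z \cup U$, and collapse the resulting system into SNF if necessary (using Proposition \ref{prop:snf-form}). This produces a new max/minPPS $y = Q(y)$ with $|Q| = O(|P|)$ and whose LFP $q'^*$ satisfies $0 < q'^* < 1$, and which is precisely the projection of $q^*$ onto the remaining coordinates.

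Next, I run the rounded generalized Newton iteration described immediately before Theorem \ref{mptime} on $y = Q(y)$, with rounding parameter $h = j + 2 + 4|Q|$, for exactly $h$ iterations. By Theorem \ref{mptime}, the output $y^{(h)}$ satisfies $\|q'^* - y^{(h)}\|_\infty \leq 2^{-j}$. To assemble the final answer $v$, I set $v_i = 0$ for $i \in Z$, $v_i = 1$ for $i \in U$, and $v_i = y^{(h)}_i$ for the remaining coordinates; then $\|q^* - v\|_\infty \leq 2^{-j}$.

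The key step that needs care — and what I expect to be the main obstacle — is verifying that each iteration of the algorithm runs in time polynomial in $|Q|$ and $j$ in the Turing model. Each $y^{(k)}$ has coordinates that are non-negative multiples of $2^{-h}$ bounded above by $1$, so the encoding size of $y^{(k)}$ is $O(nh)$, which is polynomial in $|Q|$ and $j$. By Proposition \ref{prop:comp-I}, the LP defining $I(y^{(k)})$ has encoding size polynomial in $|Q|$ and in the encoding size of $y^{(k)}$, hence polynomial in $|Q|$ and $j$; using any polynomial-time LP solver (e.g.\ the ellipsoid or interior-point method) we obtain $I(y^{(k)})$ exactly as a rational vector of polynomial bit length, and then rounding each coordinate down to the nearest non-negative multiple of $2^{-h}$ is trivial. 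Since the number of iterations $h$ is itself polynomial in $|P|$ and $j$, the total running time is polynomial in $|P|$ and $j$, and the corollary follows.
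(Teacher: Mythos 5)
Your proposal is correct and follows essentially the same route as the paper's own proof: preprocess with the qualitative P-time algorithms to eliminate the coordinates with $q^*_i=0$ or $q^*_i=1$, run the rounded-down GNM with parameter $h=j+2+4|P|$ for $h$ iterations, and note that each iterate has polynomial bit length so each LP is of polynomial size and solvable in polynomial time. The only cosmetic difference is your optional re-normalization to SNF after substitution, which the paper implicitly avoids by observing that substitution does not increase $|P|$.
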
            
\begin{proof} First, we use the algorithms 
given in \cite{rmdp} (Theorems 11 and 13), to detect 
those variables $x_i$ with $q^*_i =0$ or $q^*_i = 1$ in time polynomial in 
$|P|$. Then we can remove these from the max/minPPS by substituting 
their known values into the equations for other variables. 
This gives us a max/minPPS with LFP $0 < q'^* < 1$ and does not increase 
$|P|$. Now we can use the 
iterated GNM, with rounding down, as outlined earlier in this section.
In each iteration of GNM we solve an LP. 
Each LP has at most $n \leq |P|$ variables, at most $2n$ equations and the numerators and denominators of each rational coefficient are no larger than 
$2^{j+2+4|P|}$, so it can be solved in time polynomial in $|P|$ and $j$ using standard algorithms. We need only $j+2+4|P|$ iterations involving one LP each. 
Putting back the removed $0$ and $1$ values into the resulting vector gives us the full result $q^*$. This can all be done in polynomial time.\end{proof}

\section{Computing an $\epsilon$-optimal policy in P-time}

First let us note that we can not hope to compute an optimal policy
in P-time, without a major breakthrough:

\begin{theorem} \label{optimalhard} 
Computing an optimal policy for a max/minPPS is PosSLP-hard.\end{theorem}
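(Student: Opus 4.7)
The plan is to reduce from the PosSLP-hard decision problem for pure probabilistic polynomial systems, namely: given a PPS $x=P(x)$ (say, in SNF) with LFP $q^* \in (0,1)^n$, decide whether $q^*_1 \geq 1/2$. This problem is PosSLP-hard (indeed, already for PPSs arising from 1-RMCs or BPs), as cited from \cite{rmc} earlier in the paper. I will exhibit a polynomial-time construction that transforms such a PPS into a maxPPS (and analogously a minPPS) whose optimal policy reveals, at a single designated coordinate, the answer to that threshold question.

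First I would build the maxPPS $y = Q(y)$ as follows. Take all $n$ equations of $x=P(x)$ unchanged and view each one trivially as a max-polynomial with a single choice $m_i = 1$. Then introduce two fresh variables: $z$, with the form-L equation $z = 1/2$ (a legitimate probabilistic polynomial, since its coefficients are nonnegative and sum to $1/2 \leq 1$); and $w$, with the form-M equation $w = \max\{x_1, z\}$. The resulting system is a maxPPS in SNF of size polynomial (in fact linear) in $|P|$. Its LFP is $(q^*,\, 1/2,\, \max\{q^*_1, 1/2\})$, as one sees immediately from Tarski/Kleene iteration starting from $0$.

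Now observe the crucial property of any optimal policy $\sigma^*$ for $y = Q(y)$: at every coordinate $i \leq n$ and at coordinate $z$ the policy has no real choice, but at $w$ it must select between the variables $x_1$ and $z$. By Theorem \ref{optexist}, $q^* = q^*_{\sigma^*}$, so $\sigma^*(w) = x_1$ forces $q^*_{\sigma^*,w} = q^*_1$ and $\sigma^*(w) = z$ forces $q^*_{\sigma^*,w} = 1/2$. Consequently $\sigma^*(w) = x_1$ iff $q^*_1 \geq 1/2$, and $\sigma^*(w) = z$ iff $q^*_1 \leq 1/2$ (with both choices tied in the boundary case $q^*_1 = 1/2$). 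Hence a polynomial-time algorithm that outputs any optimal policy would, by simply inspecting the single bit $\sigma^*(w)$, decide the PosSLP-hard threshold problem $q^*_1 \geq 1/2$ in polynomial time. The minPPS case is essentially identical: replace $w = \max\{x_1, z\}$ by $w = \min\{x_1,z\}$, and then $\sigma^*(w) = x_1$ iff $q^*_1 \leq 1/2$. The only delicate point to handle carefully is the tie case $q^*_1 = 1/2$; this is benign because the standard PosSLP-hardness reductions for the PPS threshold problem avoid equality (e.g.\ by a small rational perturbation of the threshold, as \cite{rmc} allows any $r \in (0,1)$), so we may assume $q^*_1 \neq 1/2$ and the optimal choice at $w$ is unique and faithfully encodes the answer.
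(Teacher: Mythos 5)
Your proposal is correct and takes essentially the same approach as the paper: both reduce from the PosSLP-hard threshold problem for PPSs by adjoining a single $\max$ (resp.\ $\min$) between $x_1$ and the constant threshold and reading off the optimal policy's choice at that new coordinate. Your version merely spells out the SNF details (introducing the auxiliary constant variable) and the tie case $q^*_1 = p$ a bit more explicitly than the paper does.
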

\begin{proof}
Recall from \cite{rmc,ESY12} that the termination
probability vector $q^*$ of a SCFG (equivalently, of a 1-exit RMC) 
can be equivalently
viewed as the LFP of a purely probabilistic PPS, and vice-versa.

It was shown in \cite{rmc} (Theorems 5.1 and 5.3), that given a PPS
(equivalently, a SCFG or 1-RMC), and given a rational probability $p$,
it is PosSLP-hard to decide whether the LFP $q^*_1 > p$, for a given
rational $p$, as well as to decide whether $q^*_1 < p$.  (In fact,
these hardness results hold already even if $p = 1/2$.)

The fact that computing an optimal policy for max/minPPS  
is PosSLP-hard follows easily from this:
For the case of maxPPSs  (minPPS, respectively),
given a PPS, x=P(x), and given $p$, we simply add a new variable
$x_0$ to the PPS, and a corresponding equation:

\begin{equation}\label{eq:posslp-hardness}
  x_0 =  \max \{ p, x_1 \}    \quad  ( =  \min \{p, x_1 \} )
\end{equation}

It is clear that $q^*_i > p$  ($q^*_i < p$, respectively) 
for the original PPS, 
if and only if in any optimal policy $\sigma$,
for the augmented maxPPS (minPPS, respectively),
the policy picks $x_1$ rather than $p$ 
on the RHS of equation
\ref{eq:posslp-hardness}.
So, if we could compute an optimal policy for a maxPPS (minPPS), 
we would be able to decide whether $q^*_i > p$  (whether $q^*_i < p$,
respectively).
\end{proof}

Since we can not hope to compute an optimal policy for
max/minPPSs in P-time without a major breakthrough, we will instead seek
to find a policy $\sigma$ such that $\|q^*_\sigma - q^*\|_\infty \leq
\epsilon$ for a given desired $\epsilon>0$, 
in time {\em poly}($|P|$, $\log(1/\epsilon)$). 
We have an algorithm for approximating $q^*$. Can we use
a sufficiently close approximation, $q$, to $q^*$ to find such an
$\epsilon$-optimal strategy? Once we have an approximation
$q$, it seems natural to consider policies
$\sigma$ such that $P_\sigma(q) = P(q)$. For minPPSs, this means choosing
the variable that has the lowest {\em approximate} value $q_i$ and for
maxPPS choosing the variable that has the highest {\em approximate} value.  
It turns out that this works as long as we
can establish good enough upper bounds on the norm of $(I-P'_\sigma(x))^{-1}$ for certain values of $x$.   
Recall that for a square matrix $A$, $\rho(A)$ denotes its spectral radius. 
For a vector $x$, the $l_\infty$ norm is $\|x\|_\infty := \max_i |x_i|$,
and its associated matrix norm $\| A \|_\infty$  is the maximum 
absolute-value row sum of $A$, i.e., $\| A \|_\infty := \max_i \sum_j |A_{i,j}|$.

\begin{theorem} \label{neednorm} For a max/minPPS, $x=P(x)$, 
given $0 \leq q \leq q^*$, such that $q < 1$, 
and a policy $\sigma$ such that $P(q) = P_\sigma(q)$,  and such that 
$\rho(P'_\sigma(\frac{1}{2}(q^* + q^*_\sigma))) < 1$, and thus
$(I-P'_\sigma(\frac{1}{2}(q^* + q^*_\sigma)))^{-1}$ exists  and
is non-negative,
then 
\[\|q^*_\sigma - q^*\|_\infty \leq (2\|(I - P'_\sigma(\frac{1}{2}(q^*_\sigma + q^*)))^{-1}\|_\infty +1) \|q^*-q\|_\infty\]\end{theorem}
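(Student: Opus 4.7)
The plan is to derive an exact identity of the form $q^* - q^*_\sigma = M\,(P(q^*) - P_\sigma(q^*))$, where $M := (I - P'_\sigma(\tfrac{1}{2}(q^* + q^*_\sigma)))^{-1}$ is the non-negative matrix whose norm appears in the conclusion, and then to bound the residual $\|P(q^*) - P_\sigma(q^*)\|_\infty$ by $\|q^* - q\|_\infty$ using the policy-matching hypothesis $P(q) = P_\sigma(q)$ together with the SNF structure of $P$.

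First I would apply the exact ``midpoint'' mean-value identity (Lemma~3.3 of \cite{ESY12}, which is sharp because all polynomials in an SNF PPS have total degree $\leq 2$) to $P_\sigma$ at the two points $q^*$ and $q^*_\sigma$. Using $P_\sigma(q^*_\sigma) = q^*_\sigma$ and $P(q^*) = q^*$, a short rearrangement gives the identity
\[
(I - P'_\sigma(\tfrac{1}{2}(q^* + q^*_\sigma)))\,(q^* - q^*_\sigma) \;=\; P(q^*) - P_\sigma(q^*).
\]
Since by hypothesis $M$ exists and is non-negative, multiplying through and taking $\ell_\infty$-norms yields $\|q^* - q^*_\sigma\|_\infty \le \|M\|_\infty \cdot \|P(q^*) - P_\sigma(q^*)\|_\infty$.

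The remaining and most delicate step is the coordinate-by-coordinate bound $\|P(q^*) - P_\sigma(q^*)\|_\infty \le \|q^* - q\|_\infty$. Equations of form L or Q satisfy $P_i \equiv (P_\sigma)_i$, so the residual vanishes on those coordinates. For a form-M coordinate in the maxPPS case, say $P_i(x) = \max\{x_a, x_b\}$ with $\sigma(i) = a$, the hypothesis $P(q) = P_\sigma(q)$ forces $q_a \ge q_b$, and the residual at $q^*$ is $\max\{0,\,q^*_b - q^*_a\}$. Writing $q^*_b - q^*_a = (q^*_b - q_b) + (q_b - q_a) + (q_a - q^*_a)$, the middle term is non-positive by policy matching and the last term is non-positive by $q \le q^*$, leaving $q^*_b - q^*_a \le q^*_b - q_b \le \|q^* - q\|_\infty$. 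The minPPS case is symmetric, with the roles of max and min (and of the inequality $q_a \le q_b$) reversed.

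Combining the two bounds gives $\|q^* - q^*_\sigma\|_\infty \le \|M\|_\infty \cdot \|q^* - q\|_\infty$, which is already sharper than the stated $(2\|M\|_\infty + 1)\|q^* - q\|_\infty$ and trivially implies it. The main obstacle is the form-M case analysis: the cleanest bound requires exploiting that the matching of $P$ and $P_\sigma$ at $q$ produces exactly the ``right'' inequality $q_a \geq q_b$ (respectively $q_a \leq q_b$), which then causes two of the three summands in the telescoping of $q^*_b - q^*_a$ to be non-positive, so that only one side of the slack $q^*-q$ contributes. Everything else is a mechanical application of the midpoint-exact MVT to the PPS $x = P_\sigma(x)$ and the standing non-negativity of $M$.
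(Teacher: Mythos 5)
Your proof is correct, and it takes a genuinely different route from the paper's. The paper applies its exact Newton-step identity (Lemma~\ref{perfectnewton}) to the PPS $x=P_\sigma(x)$ at the point $q$ itself, which produces the matrix with midpoint $\tfrac{1}{2}(q^*_\sigma+q)$ and the residual $P_\sigma(q)-q=P(q)-q$; it then needs three further ingredients: the Lipschitz-type bound $\|P(q)-q\|_\infty\le 2\|q^*-q\|_\infty$ of Lemma~\ref{lipschitz} (the source of the factor $2$), the triangle inequality $\|q^*-q^*_\sigma\|_\infty\le\|q^*-q\|_\infty+\|q^*_\sigma-q\|_\infty$ (the source of the $+1$), and a Neumann-series monotonicity step to replace the midpoint $\tfrac{1}{2}(q^*_\sigma+q)$ by $\tfrac{1}{2}(q^*_\sigma+q^*)$. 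You instead apply the degree-$2$ exact mean-value identity directly between the two fixed points $q^*$ and $q^*_\sigma$, which immediately produces the matrix $M=(I-P'_\sigma(\tfrac{1}{2}(q^*+q^*_\sigma)))^{-1}$ appearing in the statement and reduces everything to the policy residual $P(q^*)-P_\sigma(q^*)$; your coordinate-wise telescoping bound $\|P(q^*)-P_\sigma(q^*)\|_\infty\le\|q^*-q\|_\infty$ (using the matching $P(q)=P_\sigma(q)$ and $q\le q^*$ to kill two of the three summands) is correct and replaces Lemma~\ref{lipschitz} entirely. The payoff is a strictly sharper conclusion, $\|q^*-q^*_\sigma\|_\infty\le\|M\|_\infty\,\|q^*-q\|_\infty$, with no triangle inequality and no midpoint upgrade, and you never need the hypothesis $q<1$ since invertibility of $M$ is assumed outright. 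What the paper's more roundabout derivation buys is modularity: Lemmas~\ref{perfectnewton} and~\ref{lipschitz} are reused later (e.g.\ in the proof of Theorem~\ref{maxeopt}, where one only has $\|P_\sigma(y)-y\|_\infty$ small rather than exact matching $P(y)=P_\sigma(y)$, so your residual-at-$q^*$ argument would not apply there), whereas your argument is tailored to the exact-matching hypothesis of this particular theorem.
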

 \begin{proof} 
 We know that $q$ is close to $q^*$. We just have to show that $q$ is close to $q^*_\sigma$ as well. 
We have to exploit some results about PPSs established in \cite{ESY12}.
 \begin{lemma} \label{perfectnewton} 
If $x=P(x)$ is a PPS, 
with LFP $q^*$, such that $0 < q^* \leq 1$,
and $0 \leq y \leq q^*$, such that $y < 1$, then:
 $$q^*-y = (I - P'(\frac{1}{2}(q^* + y)))^{-1}(P(y) - y)$$
 \end{lemma}
\begin{proof}
Lemma 3.3 of \cite{ESY12} tells us that for any PPS, $x=P(x)$,  (assumed to be in SNF form), and
any pair of vectors $a,b \in \real^n$, we have $P(a) - P(b) = P'((a+b)/2) (a-b)$.
Applying this to $a=q^*$ and $b=y$,  we have that  
\[q^*-P(y) = P'((1/2)(q^*+y))(q^*-y)\]
Subtracting both sides from $q^*-y$, we have that:
\begin{equation}\label{eq:revised-equality-for-epsilon}
P(y) - y = (I-P'((1/2)(q^*+y))) (q^*-y)
\end{equation}
Now, by Lemma \ref{invnonneg}, we know that for any $z \leq q^*$, such that $z < 1$,
$(I-P'(z))^{-1}$ exists and is non-negative.
But since $y \leq q^*$,  clearly also $(1/2)(q^*+y) \leq q^*$, and since $y < 1$,
and $q^* \leq 1$, then clearly $(1/2)(q^*+y) < 1$.
Thus $(I-P'((1/2)(q^*+y))^{-1}$ exists and is non-negative.
Multiplying both sides of equation (\ref{eq:revised-equality-for-epsilon}) by
$(I-P'((1/2)(q^*+y))^{-1}$, we obtain:
\[ q^*-y= (I-P'(1/2(q^*+y))^{-1} (P(y) - y) \]
as required. 
\end{proof}

By assumption, $\sigma$ was chosen such that $P(q) = P_\sigma(q)$.
Note also that since $0 \leq q \leq q^*$, we have 
$0 \leq P'_\sigma(\frac{1}{2}(q + q^*_\sigma)) \leq
P'_\sigma(\frac{1}{2}(q^* + q^*_\sigma)$,
and thus $0 \leq \rho(P'_\sigma(\frac{1}{2}(q + q^*_\sigma))) \leq
\rho(P'_\sigma(\frac{1}{2}(q^* + q^*_\sigma)) < 1$.
Thus $(I- (P'_\sigma(\frac{1}{2}(q + q^*_\sigma)))^{-1}$ also exists
and is non-negative.
Using this, and applying Lemma 
\ref{perfectnewton} to the PPS  $x=P_\sigma(x)$, 
where we set $y :=q$,
and taking norms, we obtain the following inequality:
\begin{equation}\label{eq:inequality-on-norms-for-e-opt-policy}
\|q^*_\sigma-q\|_\infty \leq \|(I - P'_\sigma(\frac{1}{2}(q^*_\sigma + q)))^{-1}\|_\infty \|P(q) - q\|_\infty
\end{equation}

  To find a bound on  $\|P(q) - q\|_\infty$, we need the following:

  \begin{lemma} \label{lipschitz} 
If $x=P(x)$ is a max/minPPS, and if $0 \leq y \leq q^*$, 
then
$\|P(y) - y\|_\infty \leq 2 \|q^* -y\|_\infty$. \end{lemma}
   \begin{proof} Suppose that $x=P(x)$ is a PPS. 
By Lemma 3.3 of \cite{ESY12}, we have that $q^*-P(y) = P'(\frac{1}{2}(y+q^*))(q^*-y)$. Since $\frac{1}{2}(y+q^*) \leq 1$, $\|P'(\frac{1}{2}(y+q^*))\|_\infty \leq 2$: If the $i$th row has $x_i=P_i(x)$ of type L then $\sum_{j=1}^n |p_{i,j}| \leq 1$ and if $x_i=P_i(x)$ has type Q, then $\sum_{j=1}^n |\frac{\partial P_i(x)}{\partial x_j}(\frac{1}{2}(y+q^*))| = \frac{1}{2}(y_j+q^*_j) + \frac{1}{2}(y_k+q^*_k) \leq 2$.
   So we have that $\|q^*-P(y)\|_\infty \leq \|P'(\frac{1}{2}(y+q^*))\|_\infty\|q^*-y\|_\infty \leq 2\|q^*-y\|_\infty$.
  As well as $y \leq q^*$, we know that $P(y) \leq q^*$ since $P(x)$ is monotone. If $(P(y))_i \leq y_i$, then $y_i - P(y)_i \leq q^*_i - P(y)_i \leq  \|q^*-P(y)\|_\infty \leq 2\|q^*-y\|_\infty$. If $P_i(y) \geq y_i$, $P_i(y)-y_i \leq q^*_i -y_i \leq \|q^*-y\|_\infty$. So $\|P(y) - y\|_\infty \leq 2 \|q^* -y\|_\infty$ as required.
  
  If $x=P(x)$ is a max/minPPS, then it has some optimal policy, $\tau$, 
and from the above, $\|P_\tau(y) - y\|_\infty \leq 2 \|q^* -y\|_\infty$. 
It thus only remains to show that $|P_i(y) - y_i| \leq 2 \|q^* -y\|_\infty$ when $x_i = P_i(x)$ is of form M
(because the other equations don't change in $x=P_\tau(x)$). 

If $P_i(y) \geq y_i$, then this is follows easily: as before we have that 
$P_i(y)-y_i \leq q^*_i -y_i \leq \|q^*-y\|_\infty$. 
Suppose that instead we have $P_i(y) \leq y_i$.
Then we consider the two cases (min and max) separately:

Suppose $x=P(x)$ is a minPPS, 
and that $P_i(x) = \text{min }\{x_j,x_k\}$. 
Since $q^* = P(q^*)$, we have:

\begin{equation}\label{eq:min-case-diff-y-py-less-than-diff-qstar-y}
0 \leq y_i - P_i(y)  \leq q^*_i - P_i(y) 
= \min \{q^*_j,q^*_k\} -  P_i(y)
\end{equation}

We can assume, w.l.o.g., that $P_i(y) \equiv \min \{y_j,y_k\} = y_j$.
(The case where $P_i(y) = y_k$ is entirely analogous.)
Then, by (\ref{eq:min-case-diff-y-py-less-than-diff-qstar-y}), 
we have:
$$ 0 \leq y_i - P(y)_i  \leq \min \{q^*_j,q^*_k\} - y_j  \leq q^*_j - y_j
\leq \|q^*-y\|_\infty$$

Suppose now that $x=P(x)$ is a maxPPS, and
that $P_i(x) \equiv \text{max }\{x_j,x_k\}$.  
Again, we are already assuming that  $P_i(y) \leq y_i$.  
Since $q^* = P(q^*)$, we have:

\begin{equation}\label{eq:max-case-diff-y-py-less-than-diff-qstar-y}
0 \leq y_i - P_i(y)  \leq q^*_i - P_i(y) 
= P_i(q^*) -  \max \{y_j, y_k\}
\end{equation}

We can assume, w.l.o.g., that $P_i(q^*) \equiv \max\{q^*_j, q^*_k\}
= q^*_j$.
(Again, the case when $P_i(q^*) = q^*_k$ is entirely analogous.)
Then, by (\ref{eq:max-case-diff-y-py-less-than-diff-qstar-y}), 
we have:

$$0 \leq y_i - P_i(y)  \leq q^*_j - \max \{y_j,y_k\}  \leq q^*_j - y_j
\leq \|q^*-y\|_\infty$$

This completes the proof of the Lemma for all max/minPPSs.\end{proof}

   Now, we can show the result:
   \begin{eqnarray*} \|q^* - q^*_\sigma\|_\infty  & \leq & \|q^* - q\|_\infty  + \|q^*_\sigma - q\|_\infty \\
											& \leq & \|q^* - q\|_\infty  + \|(I - P'_\sigma(\frac{1}{2}(q^*_\sigma + q)))^{-1}\|_\infty \|P_\sigma(q) - q\|_\infty \\
											& = & \|q^* - q\|_\infty  + \|(I - P'_\sigma(\frac{1}{2}(q^*_\sigma + q)))^{-1}\|_\infty  \|P(q) - q\|_\infty \\
											& \leq & \|q^* - q\|_\infty  + \|(I - P'_\sigma(\frac{1}{2}(q^*_\sigma + q)))^{-1}\|_\infty   2 \|q^* - q\|_\infty \\
											& = & (2\|(I - P'_\sigma(\frac{1}{2}(q^*_\sigma + q)))^{-1}\|_\infty +1) \|q^* - q\|_\infty \\
											& \leq & (2\|(I - P'_\sigma(\frac{1}{2}(q^*_\sigma + q^*)))^{-1}\|_\infty +1) \|q^* - q\|_\infty\end{eqnarray*}
The last inequality follows because $q \leq q^*$, and
$$0 \leq (I-P'_\sigma(q^*_\sigma + q))^{-1} =  
\sum^\infty_{i=0} (P'_\sigma(q^*_\sigma + q))^i \leq 
\sum^\infty_{i=0} (P'_\sigma(q^*_\sigma + q^*))^i = 
(I-P'_\sigma(q^*_\sigma + q^*))^{-1}.$$
\end{proof}

Finding these bounds is different for maxPPSs and minPPSs . Although
we assume that $0 < q^* < 1$, for an arbitrary policy $\sigma$, it
need not be true that $0 < q^*_\sigma < 1$. But the following
obviously does hold:

\begin{proposition} \label{oneoftwo}
  Given a max/minPPS, $x=P(x)$, with LFP $q^*$ such that 
$0 < q^* < 1$,  for any policy $\sigma$: \\
  (i) If $x=P(x)$ is a maxPPS then $q^*_\sigma < 1$.\\
  (ii) If $x=P(x)$ is a minPPS, then $q^*_\sigma > 0$.
\end{proposition}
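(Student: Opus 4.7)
The plan is to observe that both parts follow almost immediately from results already established in the paper, specifically Theorem \ref{optexist} and Lemma \ref{qminimal}, by comparing $q^*_\sigma$ to $q^*$ in the appropriate direction.

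For part (i), my approach is to invoke the existence of an optimal policy $\tau$ for the maxPPS $x=P(x)$ guaranteed by Theorem \ref{optexist}, which moreover satisfies $q^*_\tau = q^*$. By the definition of ``optimal'' for a maxPPS, any policy $\sigma$ must satisfy $q^*_\sigma \leq q^*_\tau = q^*$. Since we assume $q^* < 1$ coordinate-wise, we immediately get $q^*_\sigma \leq q^* < 1$.

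For part (ii), I would use Lemma \ref{qminimal}, which was proved earlier precisely to handle this situation: for any policy $\sigma$ of a minPPS, $q^*_\sigma \geq q^*$. Since the hypothesis of the proposition gives $q^* > 0$ coordinate-wise, this yields $q^*_\sigma \geq q^* > 0$.

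There is no real obstacle here; the proposition is essentially just a packaging of two one-sided bounds already in hand. The only thing to be careful about is the direction of the inequality in each case: for maxPPSs, choosing a suboptimal policy can only decrease the LFP away from the upper bound $\mathbf{1}$, so $q^*_\sigma < 1$ is automatic; for minPPSs, choosing a suboptimal policy can only increase the LFP away from the lower bound $\mathbf{0}$, so $q^*_\sigma > 0$ is automatic. The proof should therefore fit in a couple of lines.
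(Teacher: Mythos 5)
Your proof is correct and follows essentially the same route as the paper, which likewise observes that $q^*_\sigma \leq q^*_\tau = q^* < 1$ for a maxPPS and $q^*_\sigma \geq q^* > 0$ for a minPPS because no policy can beat an optimal one (Theorem \ref{optexist}, Lemma \ref{qminimal}). Nothing is missing.
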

\begin{proof}
This is trivial:
if $x=P(x)$ is a maxPPS, then clearly $q^*_\sigma \leq q^* < 1$,
because $\sigma$ can be no better than an optimal strategy. 
Likewise, if $x=P(x)$ is a minPPS, then $0 < q^* \leq q^*_\sigma$,
for the same reason.
\end{proof}

For maxPPSs, we may have that some coordinate of $q^*_\sigma$ is 
equal to $0$ and
for minPPSs we may have that some coordinate of $q^*_\sigma$ is
equal to $1$, even when $0 < q^* < 1$. 
This is the source of the different complications.
We prove the following result in the appendix:

\begin{theorem} \label{normbounds}
If $x=P(x)$ is a PPS with LFP $q^* > 0$  then\\
(i) If $q^* < 1$ and $0 \leq y < 1$, then 
 $(I-P'(\frac{1}{2}(y+q^*)))^{-1}$ exists and is non-negative, and
$$\|(I-P'(\frac{1}{2}(y+q^*)))^{-1}\|_\infty \leq  2^{10|P|} \text{max } \{2(1-y)_{\min}^{-1}, 2^{|P|}\}$$
(ii) If $q^* = 1$ and $x = P(x)$ is strongly connected 
(i.e. every variable depends directly or indirectly  on every other) 
and $0 \leq y <  1 = q^*$, then
$(I -P'(y))^{-1}$ exists and is non-negative, and 
$$\|(I -P'(y))^{-1}\|_\infty \leq 2^{4|P|} \frac{1}{(1-y)_{\min}}$$
\end{theorem}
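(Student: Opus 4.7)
My plan is to prove both parts by the \emph{test vector} method: in each setting I exhibit $w \geq 0$ and an explicit $c > 0$ with $(I - P'(z))\,w \geq c\mathbf{1}$. Since $P'(z) \geq 0$, this simultaneously forces $\rho(P'(z)) < 1$ (so $(I - P'(z))^{-1} = \sum_{k \geq 0} P'(z)^k$ exists and is non-negative) and gives $\|(I - P'(z))^{-1}\|_\infty \leq \|w\|_\infty/c$ via $(I - P'(z))^{-1}\mathbf{1} \leq c^{-1}w$. The workhorse throughout is the PPS identity from Lemma~3.3 of \cite{ESY12}: $P(a) - P(b) = P'\!\left(\tfrac{a+b}{2}\right)(a - b)$ for all $a,b \in \real^n$.

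\textbf{Part (ii).} Since $q^* = \mathbf{1}$ we have $P(\mathbf{1}) = \mathbf{1}$. Taking $a = \mathbf{1}$, $b = y$ in the identity yields $\mathbf{1} - P(y) = P'\!\left(\tfrac{\mathbf{1}+y}{2}\right)(\mathbf{1} - y) \geq P'(y)(\mathbf{1} - y)$, and hence
\[
  (I - P'(y))(\mathbf{1} - y) \;\geq\; P(y) - y \;\geq\; 0.
\]
I would take $w := \mathbf{1} - y$ (so $\|w\|_\infty \leq 1$), reducing the claim to a coordinatewise lower bound $P(y) - y \geq 2^{-O(|P|)}(\mathbf{1}-y)_{\min}\mathbf{1}$. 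Since $y < \mathbf{1}$, some coordinate of $P(y) - y$ is positive of order $(\mathbf{1}-y)_{\min}$; strong connectedness of the variable dependency graph then lets me propagate this positivity to every coordinate in at most $n \leq |P|$ applications of $P'(y)$, each step losing at worst the smallest nonzero coefficient of $P$, which has magnitude $2^{-O(|P|)}$ in the SNF encoding. Tallying the factors gives $c = 2^{-O(|P|)}(\mathbf{1}-y)_{\min}$, yielding the stated bound $2^{4|P|}/(\mathbf{1}-y)_{\min}$ after fixing constants.

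\textbf{Part (i).} Without strong connectedness, I would decompose the variables into the SCCs of the dependency graph, inducing a block-triangular structure on $P'(z)$ and thus on $(I - P'(z))^{-1}$. Applying the identity with $a = \mathbf{1}$, $b = q^*$ gives $(I - P'(\tfrac{\mathbf{1}+q^*}{2}))(\mathbf{1} - q^*) = \mathbf{1} - P(\mathbf{1}) \geq 0$, and since $z := (y+q^*)/2 \leq (\mathbf{1}+q^*)/2$, entrywise monotonicity of $P'$ plus the strict slack in Form Q rows (of order $(\mathbf{1}-y)_{\min}(\mathbf{1}-q^*)_{\min}$) yields, within each SCC $S$, an inequality $(I - P'(z))|_S (\mathbf{1} - q^*)|_S \geq 2^{-O(|P|)}(\mathbf{1}-y)_{\min}(\mathbf{1}-q^*)_{\min}\mathbf{1}_S$. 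Lemma~\ref{gap} bounds $(\mathbf{1} - q^*)_{\min} \geq 2^{-4|P|}$, giving an intra-SCC bound of $2^{O(|P|)}/(\mathbf{1}-y)_{\min}$. Assembling these SCC bounds along the condensation DAG via the block-triangular structure of $(I - P'(z))^{-1}$ absorbs an additional factor of at most $2^{|P|}$ from the number and size of the off-diagonal blocks, giving the stated $2^{10|P|}\max\{2/(\mathbf{1}-y)_{\min},\,2^{|P|}\}$ bound.

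\textbf{Main obstacle.} The hardest step is the quantitative propagation inside an SCC in part (ii), and the analogous step within each SCC in part (i): showing that positivity of $P(y) - y$ at a single coordinate spreads to all coordinates under repeated applications of $P'(y)$ while losing only a $2^{O(|P|)}$ factor. This requires careful use of the SNF structure — Form L rows contributing coefficient-sum $\leq 1$ contractions, and Form Q rows providing the bilinear slack tied to $(\mathbf{1}-y)$ — together with the standard $2^{-O(|P|)}$ lower bound on nonzero coefficients coming from the bit-length of the encoding. The secondary difficulty in part (i) is composing SCC-local bounds via the block-triangular form of $(I - P'(z))^{-1}$, which is what produces the additional $2^{|P|}$ factor appearing in the maximum.
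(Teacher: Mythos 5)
Your overall strategy --- a non-negative test vector $w$ with $(I-P'(z))w\geq c\mathbf{1}$ certifying both $\rho(P'(z))<1$ and the norm bound --- is indeed the engine of the paper's proof (Lemma \ref{quantpf}). But your instantiation of it fails in both parts. In part (ii) the choice $w:=\mathbf{1}-y$ does not work: the chain $(I-P'(y))(\mathbf{1}-y)\geq P(y)-y\geq 0$ breaks at the second inequality, because the hypothesis is only $0\leq y<\mathbf{1}=q^*$, not $y\leq P(y)$. Concretely, for $x_1=\tfrac12 x_2+\tfrac12$, $x_2=x_1x_1$ (strongly connected, $q^*=\mathbf{1}$) and $y=(0.5,\,0.9)$, one computes $\bigl((I-P'(y))(\mathbf{1}-y)\bigr)_2=0.1-2\cdot 0.5\cdot 0.5<0$, so $\mathbf{1}-y$ is not a valid test vector at all, and there is likewise no reason why ``some coordinate of $P(y)-y$ is positive of order $(\mathbf{1}-y)_{\min}$.'' The paper's proof of (ii) instead takes $w$ to be the Perron eigenvector $v>0$ of the irreducible matrix $P'(\mathbf{1})$ (normalized, with $v_{\min}\geq 2^{-|P|}$ via Lemma \ref{deplbound}); since $q^*=\mathbf{1}$ forces $\rho(P'(\mathbf{1}))\leq 1$ one gets $P'(y)v\leq P'(\mathbf{1})v\leq v$, and each form-Q row yields the relative slack $(P'(y)v)_i\leq(1-(\mathbf{1}-y)_{\min})v_i$. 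Your outline has no substitute for this eigenvector, and it is the essential missing idea.

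In part (i) the SCC decomposition is both quantitatively unsound and unnecessary. The asserted intra-SCC inequality $(I-P'(z))|_S(\mathbf{1}-q^*)|_S\geq c\mathbf{1}_S$ can fail outright: a row such as $x_2=x_1$ inside an SCC contributes $(1-q^*_2)-(1-q^*_1)=0$ whenever $q^*_1=q^*_2$, and its only source of slack may be a form-Q or deficient form-L variable lying in a \emph{downstream} SCC. (This is exactly why the paper's Lemma \ref{quantpf} has the two alternatives (I)/(II): coordinates with zero direct slack borrow it from a reachable coordinate $j$ via the entry bound $(A^k)_{ij}\geq\alpha$.) Moreover, composing per-SCC bounds along the condensation DAG does not cost ``at most $2^{|P|}$'': for block-triangular $I-A$ the off-diagonal blocks of the inverse are products like $(I-A_{11})^{-1}A_{12}(I-A_{22})^{-1}$, so with $k$ SCCs the bound degrades roughly as the $k$-th power of the per-block bound, which would destroy the stated $2^{10|P|}$ estimate. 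The paper sidesteps both issues with a single global argument: every coordinate reaches a form-Q or $P_j(\mathbf{1})<1$ coordinate (Lemma \ref{lem:almost-at-the-end}); a \emph{shortest} dependency path costs only one factor $2^{-|P|}$ because its distinct coefficients each appear once in the encoding (Lemma \ref{deplbound}); and $u:=\mathbf{1}-q^*$ with $u_{\min}\geq 2^{-4|P|}$ (Lemma \ref{gap}) is used globally, not per block. To repair your proof you would need to replace the SCC assembly by such a global path-propagation lemma and, in (ii), the vector $\mathbf{1}-y$ by the Perron vector of $P'(\mathbf{1})$.
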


We first focus on minPPSs, 
for which we shall show that if $y$ is a close approximation to $q^*$, 
then any policy $\sigma$ with $P(y) = P_\sigma(y)$ is $\epsilon$-optimal.
The maxPPS case will not be so simple: the analogous statement is false
for maxPPSs.

\begin{theorem} \label{mineopt} If $x=P(x)$ is a minPPS, 
with LFP $0 < q^* < 1$, and $0 \leq \epsilon \leq 1$, and  $0 \leq y \leq q^*$, 
such that $\|q^* - y\|_\infty \leq 2^{-14|P| - 3} \epsilon$, then for any policy $\sigma$ with $P_\sigma(y) = P(y)$,  
$\|q^* - q^*_\sigma\|_\infty \leq \epsilon$.
\end{theorem}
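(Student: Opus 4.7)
The plan is to invoke Theorem \ref{neednorm} with $q := y$ and the chosen policy $\sigma$ (which by assumption satisfies $P_\sigma(y) = P(y)$), and then devote the bulk of the work to bounding the matrix-norm factor that appears on the right hand side. First I verify the hypotheses of Theorem \ref{neednorm}: by Lemma \ref{qminimal}, $q^*_\sigma \geq q^* > 0$; setting $z := \tfrac{1}{2}(q^* + q^*_\sigma)$, we get $z \leq q^*_\sigma$, and since $q^* < 1$ by hypothesis and $q^*_\sigma \leq 1$ we have $z < 1$ componentwise. Lemma \ref{invnonneg} applied to the PPS $x = P_\sigma(x)$ (whose LFP $q^*_\sigma > 0$) then says that $(I - P'_\sigma(z))^{-1}$ exists and is non-negative, so $\rho(P'_\sigma(z)) < 1$ by Perron--Frobenius. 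Theorem \ref{neednorm} combined with the hypothesis $\|q^*-y\|_\infty \leq 2^{-14|P|-3}\epsilon$ gives
\[
\|q^*_\sigma - q^*\|_\infty \leq \bigl(2\|(I-P'_\sigma(z))^{-1}\|_\infty + 1\bigr) \cdot 2^{-14|P|-3}\epsilon,
\]
so it suffices to show $\|(I-P'_\sigma(z))^{-1}\|_\infty \leq 2^{14|P|+1}$, since that makes the right hand side at most $(2^{14|P|+2}+1)\cdot 2^{-14|P|-3}\epsilon \leq \epsilon$.

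In the easy subcase $q^*_\sigma < 1$ componentwise, this bound is immediate: apply Theorem \ref{normbounds}(i) to the PPS $x=P_\sigma(x)$, with that theorem's ``$y$'' set to our $q^*$ and with $|P_\sigma| \leq |P|$, and use Lemma \ref{gap} to get $(1-q^*)_{\min} \geq 2^{-4|P|}$; this directly yields $\|(I-P'_\sigma(z))^{-1}\|_\infty \leq 2^{10|P|} \cdot 2^{4|P|+1} = 2^{14|P|+1}$. The main obstacle is the general case, where some coordinates of $q^*_\sigma$ may equal $1$ (even though $q^* < 1$), blocking the direct use of Theorem \ref{normbounds}(i). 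To handle it, I partition the variables of $P_\sigma$ into $T := \{i : q^*_{\sigma,i} < 1\}$ and $S := \{i : q^*_{\sigma,i} = 1\}$, and first observe a key structural fact: $S$ is closed under $P_\sigma$. Indeed, $P_\sigma$ has only form-L and form-Q equations (form M is eliminated by the policy), and for any $i \in S$ the equation $P_{\sigma,i}(q^*_\sigma) = 1$ combined with the probabilistic-polynomial coefficient constraints forces every variable occurring in $P_{\sigma,i}$ to lie in $S$ (otherwise the value would be strictly less than $1$). Consequently $P'_\sigma(z)$, and hence $(I-P'_\sigma(z))^{-1}$, is block-lower-triangular with respect to the partition $(S,T)$.

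The diagonal blocks are then bounded separately. For the $T$-block $(I-A_{TT}(z))^{-1}$, substituting $1$ for each $S$-variable in $x = P_\sigma(x)$ produces a sub-PPS of size at most $|P|$ with LFP exactly $q^*_{\sigma,T} < 1$ whose Jacobian at $z_T$ entrywise dominates $A_{TT}(z)$; applying Theorem \ref{normbounds}(i) to this sub-PPS (with ``$y$'' set to $q^*|_T$) together with Lemma \ref{gap} yields $\|(I-A_{TT}(z))^{-1}\|_\infty \leq 2^{14|P|+1}$. For the $S$-block, I further decompose $S$ along SCCs of its dependency graph; each strongly-connected sub-PPS on an SCC $C \subseteq S$ has LFP $\mathbf{1}$ after substituting $1$ for its upstream variables, and Theorem \ref{normbounds}(ii) applied at $z_C$, using $(1-z_C)_{\min} \geq \tfrac{1}{2}(1-q^*)_{\min} \geq 2^{-4|P|-1}$, gives a per-SCC bound of $2^{8|P|+1}$. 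The delicate step, and the real heart of the argument, is combining the per-SCC bounds through the block-triangular structure inside $S$ and then combining the resulting $S$- and $T$-block bounds through the outer block-triangularity, exploiting the crude probabilistic-polynomial bound $\|P'_\sigma(z)\|_\infty \leq 2$ together with the specific sparsity of the cross-block matrix $A_{TS}$ in order to meet the target overall bound $2^{14|P|+1}$ without a multiplicative blow-up in the exponent. Once that bookkeeping is carried out, plugging the norm bound into the display above yields $\|q^*_\sigma - q^*\|_\infty \leq \epsilon$, as required.
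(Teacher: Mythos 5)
Your reduction to a norm bound via Theorem \ref{neednorm}, and your treatment of the subcase $q^*_\sigma < \mathbf{1}$ (Theorem \ref{normbounds}(i) applied to $x=P_\sigma(x)$ at $q^*$, plus Lemma \ref{gap}), match the paper exactly. The problem is the ``general case.'' The paper does not bound the norm when $q^*_\sigma$ has coordinates equal to $1$; it proves that this case \emph{cannot occur} under the hypotheses, by contradiction: if $(q^*_\sigma)_i=1$ for some $i$, then $x=P_\sigma(x)$ has a bottom SCC $S$ with $q^*_S=\mathbf{1}$, and applying Theorem \ref{normbounds}(ii) to that strongly connected subsystem together with Lemma \ref{perfectnewton} (at the point $\tfrac12(y_S+\mathbf{1})$) forces $\|P_S(y_S)-y_S\|_\infty \geq 2^{-12|P|-1}$; but $\|P_S(y_S)-y_S\|_\infty \leq \|P_\sigma(y)-y\|_\infty = \|P(y)-y\|_\infty \leq 2\|q^*-y\|_\infty$ by the hypothesis $P_\sigma(y)=P(y)$ and Lemma \ref{lipschitz}, contradicting $\|q^*-y\|_\infty \leq 2^{-14|P|-3}\epsilon$. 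Note that this is the \emph{only} place the hypothesis $P_\sigma(y)=P(y)$ does real work beyond Theorem \ref{neednorm}; your hard-case argument never invokes it, which is already a sign something is off, since for an arbitrary policy with $(q^*_\sigma)_i=1 > q^*_i$ the conclusion of the theorem is false for small $\epsilon$.

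Concretely, the deferred ``bookkeeping'' in your block-triangular decomposition cannot be carried out to the stated constant. Even granting your per-block bounds, the inverse of a block-lower-triangular matrix $I-A$ with blocks $A_{SS}, A_{TS}, A_{TT}$ has off-diagonal block $(I-A_{TT})^{-1}A_{TS}(I-A_{SS})^{-1}$, whose norm is controlled only by the \emph{product} of the diagonal-block bounds (times $\|A_{TS}\|_\infty \le 2$); with your numbers this is already at least of order $2^{22|P|}$, not $2^{14|P|+1}$. Worse, within $S$ you must compose bounds along a chain of SCCs, and a chain of $k$ SCCs multiplies the per-SCC bound $2^{8|P|+1}$ up to $k$ times, so there is no uniform bound of the form $2^{c|P|}$ for fixed $c$ by this route. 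The target inequality $\|(I-P'_\sigma(\tfrac12(q^*+q^*_\sigma)))^{-1}\|_\infty \leq 2^{14|P|+1}$ is simply not provable for arbitrary policies with $q^*_\sigma \not< \mathbf{1}$; it holds in your setting only because such policies are excluded by the argument above. You need to replace the block-decomposition step with the paper's contradiction argument (or something equivalent that actually uses $P_\sigma(y)=P(y)$ and the closeness of $y$ to $q^*$).
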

\begin{proof}
By Proposition \ref{oneoftwo},
 $q^*_\sigma \geq q^*$, and so $q^*_\sigma > 0$.  
Suppose for now that $q^*_\sigma < 1$ (we will show this later). 
Then applying Theorem \ref{normbounds} (i), 
for the case where we set $y:=q^*$ and the PPS is $x = P_\sigma(x)$, yields that
$$\|(I-P'_\sigma(\frac{1}{2}(q^*+q^*_\sigma)))^{-1}\|_\infty \leq 2^{10|P_\sigma|}\text{max }\{\frac{2}{(1-q^*)_{\min}},2^{|P|}\}$$
Note that $|P_\sigma| \leq |P|$. 
Since for any minPPS, $x=P(x)$, there is an optimal strategy $\tau$, and
$x=P_\tau(x)$ is a PPS with the same LFP, $q^*_\tau = q^*$, as $x=P(x)$, and 
furthermore since $|P_\tau| \leq |P|$,  
it follows from Theorem 3.12 of \cite{ESY12} that
$(1-q^*)_{\min} \geq 2^{-4|P|}$. Thus
$$\|(I-P'_\sigma(\frac{1}{2}(q^*+q^*_\sigma)))^{-1}\|_\infty \leq 2^{14|P|+1}$$
Theorem \ref{neednorm} now gives that
$$\|q^* - q^*_\sigma\|_\infty \leq (2^{14|P| + 2}+1) \|q^* -y\|_\infty \leq \epsilon$$
Thus, under the assumption that $q^*_\sigma <1$, we are done.

To complete the proof, 
we now show that $q^*_\sigma < 1$. 
Suppose, for a contradiction, that for some $i$, $(q^*_\sigma)_i = 1$. Then by results in \cite{rmc}, $x=P_\sigma(x)$ has a bottom strongly connected component $S$ with $q^*_S = 1$. If $x_i$ is in $S$ then only variables in $S$ appear in $(P_\sigma)_i(x)$, so we write $x_S=P_S(x)$ for the PPS which is formed by such equations. We also have that $P'_S(1)$ is irreducible and that the least fixed point solution of $x_S=P_S(x_S)$ is $q^*_S = 1$.
Take $y_S$ to be the subvector of $y$ with coordinates in $S$. Now 
if we apply Theorem \ref{normbounds} (ii), by taking the $y$ in its 
statement to be $\frac{1}{2}(y_S + 1)$, it 
gives that
$$\|(I -P'_S(\frac{1}{2}(y_S + 1)))^{-1}\|_\infty \leq 2^{4|P_S|} \frac{1}{\frac{1}{2}(1-y_S)_{\min}}$$
But $|P_S| \leq |P|$ and $(1-y_S)_{\min} \geq (1-q^*)_{\min} \geq 2^{-4|P|}$. Thus
$$\|(I -P'_S(\frac{1}{2}(y_S + 1)))^{-1}\|_\infty \leq 2^{8|P| + 1}$$
Lemma \ref{perfectnewton} gives that
 $$1-y_S = (I - P'_S(\frac{1}{2}(1 + y_S)))^{-1}(P_S(y_S) - y_S)$$
 Taking norms and re-arranging gives:
 $$\|P_S(y_S) - y_S)\|_\infty \geq \frac{\|1-y_S\|_\infty}{\|(I -P'_S(\frac{1}{2}(y_S + 1)))^{-1}\|_\infty} \geq \frac{2^{-4|P|}}{2^{8|P| + 1}} \geq 2^{-12|P|-1}$$
 However $\|P_S(y_S) - y_S)\|_\infty \leq \|P_\sigma(y) -y\|_\infty$ and $P_\sigma(y) =P(y)$. We deduce that $\|P(y)-y\|_\infty \geq 2^{-12|P|-1}$. 
Lemma \ref{lipschitz} states that $\|P(y) - y\|_\infty \leq 2 \|q^* -y\|_\infty$. 
We thus have $\|q^* -y\|_\infty \geq 2^{-12|P|-2}$. This contradicts our assumption that 
$\|q^* -y\|_\infty \leq 2^{-14|P| - 3} \epsilon$ for some $\epsilon \leq 1$. 
 \end{proof}

Now we proceed to the harder case of maxPPSs. The main theorem in this case
is the following.

 \begin{theorem} \label{maxeopt} If $x=P(x)$ is a maxPPS with $0 < q^* < 1$ and given 
$0 \leq \epsilon \leq 1$ and a vector $y$, with $0 \leq y \leq q^*$, such that 
$\|q^* - y \|_\infty \leq 2^{-14|P| - 2} \epsilon$, there exists a policy $\sigma$ such that  
$\|q^* - q^*_\sigma\|_\infty \leq \epsilon$, and furthermore, 
such a policy can be computed in P-time, given $x=P(x)$ and $y$.
\end{theorem}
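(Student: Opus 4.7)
My plan is to mirror the minPPS proof of Theorem \ref{mineopt}: select a suitable policy $\sigma$ and apply Theorem \ref{neednorm}, bounding the right-hand side via Theorem \ref{normbounds}(i) applied to the PPS $x = P_\sigma(x)$. Concretely, using Lemma \ref{gap} to get $(1-q^*)_{\min}\geq 2^{-4|P|}$ and noting $|P_\sigma|\leq |P|$, Theorem \ref{normbounds}(i) would yield $\|(I - P'_\sigma(\tfrac{1}{2}(q^*_\sigma + q^*)))^{-1}\|_\infty \leq 2^{14|P|+1}$ provided $0 < q^*_\sigma < 1$. The upper bound $q^*_\sigma < 1$ is automatic from Proposition \ref{oneoftwo}(i); combining this norm bound with Lemma \ref{lipschitz} and Theorem \ref{neednorm} would then give $\|q^* - q^*_\sigma\|_\infty \leq (2^{14|P|+2}+1)\cdot 2^{-14|P|-2}\epsilon \leq \epsilon$. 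The crux is thus to find, in P-time, a policy $\sigma$ with $q^*_\sigma > 0$ at every coordinate while still having $P_\sigma(y) = P(y)$.

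The main complication, absent from the minPPS case, is that the naive greedy choice $\sigma(i) = \arg\max\{y_j, y_k\}$ for equations $x_i = \max\{x_j, x_k\}$ can yield a policy $\sigma$ for which $q^*_\sigma$ has zero coordinates. For example, in the maxPPS $x_1=\max\{x_2,x_3\},\ x_2=x_1,\ x_3=1/2$, whose LFP is $q^* = (1/2, 1/2, 1/2)$, a valid approximation $y \leq q^*$ with $y_2$ only rounded a hair above $y_3$ forces the greedy $\sigma(1) = 2$, whereupon $q^*_\sigma = (0, 0, 1/2)$, a constant-sized error independent of the precision of $y$. Hence we cannot simply pick any greedy policy; we must break ties in $y$ carefully.

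To sidestep this, I would construct $\sigma$ in two phases. First, from $y$, define for each max-equation $i$ an admissible set $S_i \subseteq \{j,k\}$ by including $l \in S_i$ iff $y_l \geq \max\{y_j,y_k\} - \eta$ for a tolerance $\eta$ slightly larger than $\|q^*-y\|_\infty$. Because any optimal policy $\tau$ has $q^*_{\tau(i)} = \max\{q^*_j, q^*_k\}$ and $y$ lies within $\eta$ of $q^*$, every such $\tau$ satisfies $\tau(i) \in S_i$; the restricted maxPPS $\tilde P$, obtained by allowing only admissible choices, therefore still has LFP equal to $q^* > 0$. Second, apply the qualitative P-time algorithms for max/minPPSs of \cite{rmdp} (cf.\ Proposition \ref{prob1-ptime-scfg-prop}) to $\tilde P$ to extract a pure policy $\sigma$ whose PPS $x=P_\sigma(x)$ has $q^*_\sigma > 0$ componentwise; such a policy exists since any optimal policy of $\tilde P$ is a witness.

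The hardest parts of the argument are: (a) turning the qualitative algorithm of \cite{rmdp} from a decision procedure (which determines $q^*_i = 0$ vs.\ $q^*_i > 0$) into a constructive extraction of a pure policy achieving $q^*_\sigma > 0$ on the restricted system $\tilde P$, and (b) absorbing the slack $\|P(y)-P_\sigma(y)\|_\infty \leq \eta$ caused by the tolerance, since Theorem \ref{neednorm} is stated under the exact equality $P_\sigma(q) = P(q)$. The slack contributes at most a polynomial factor in $2^{|P|}$ to the effective error, which is precisely what the stronger precision hypothesis $\|q^*-y\|_\infty \leq 2^{-14|P|-2}\epsilon$ (a single constant above the minPPS case) is designed to absorb. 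Overall P-time follows since $y$ is produced by Theorem \ref{mptime}, the restricted $\tilde P$ is built in linear time from $y$, and the qualitative extraction runs in polynomial time in $|P|$ and $\log(1/\epsilon)$.
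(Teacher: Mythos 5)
Your overall strategy is sound and it does diverge from the paper's at the one place where the maxPPS case is genuinely harder, namely the construction of a policy $\sigma$ with $q^*_\sigma>0$ everywhere while keeping $\|P_\sigma(y)-y\|_\infty$ small. The paper does this by an iterative switching algorithm: start from a greedy $\sigma$ with $P_\sigma(y)=P(y)$, and as long as some coordinate has $(q^*_\sigma)_i=0$, locate (via an optimal policy $\tau$ and the stage-counting of Lemma \ref{detect0}, choosing the least stage $k_{\min}$ at which positivity under $\tau$ outruns positivity under $\sigma$) a form-M variable $x_i$ with a sibling $x_j$ satisfying $(q^*_\sigma)_j>0$ and $|y_i-y_j|\le 2^{-14|P|-1}\epsilon$, switch to it, and argue monotonicity of $q^*_\sigma$ under the switch so that at most $n$ switches occur. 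You instead restrict the maxPPS to the admissible choices $S_i$ within tolerance $\eta$ of $\max\{y_j,y_k\}$ and then extract a positive pure policy from the restricted system $\tilde P$. Your observations that every optimal policy is admissible and hence $\mu\tilde P=q^*>0$ are correct, and your counterexample to the naive greedy choice is exactly the right motivation. But the extraction step is asserted, not proved: Proposition \ref{prob1-ptime-scfg-prop} is a decision procedure and does not hand you a pure witnessing policy, and this is precisely the content that the paper's switching lemma supplies. The gap is fillable by a standard argument you should spell out: run the boolean iteration $\tilde P^k(0)$ of Lemma \ref{detect0} on $\tilde P$, and for each form-M variable set $\sigma(i)$ to an admissible sibling whose positivity stage is minimal; an induction on the stage then shows $(P_\sigma^n(0))_i>0$ for all $i$, hence $q^*_\sigma>0$. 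Without this (or an equivalent) argument the proposal does not yet establish P-time computability of $\sigma$.

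Two further points of bookkeeping. First, as you note, Theorem \ref{neednorm} cannot be invoked as stated since $P_\sigma(y)\ne P(y)$ in general; you must instead apply Lemma \ref{perfectnewton} to $x=P_\sigma(x)$ directly and bound $\|P_\sigma(y)-y\|_\infty\le\|P(y)-y\|_\infty+\eta$, which is what the paper does (with its switching rule playing the role of your tolerance). Second, the constants do not quite close on your route: you are forced to take $\eta\ge 2^{-14|P|-2}\epsilon$ (since only an upper bound on $\|q^*-y\|_\infty$ is known), so $\|P_\sigma(y)-y\|_\infty\le 3\cdot 2^{-14|P|-2}\epsilon$, and multiplying by the norm bound $2^{14|P|+1}$ gives $\tfrac{3}{2}\epsilon$ rather than something below $\epsilon$. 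This is only a constant-factor issue (and the paper's own arithmetic at this point is similarly loose), but as written your argument proves the statement with $\epsilon$ replaced by $2\epsilon$, or requires strengthening the precision hypothesis by a couple of bits.
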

We need a policy $\sigma$ for which we can apply Theorem \ref{normbounds}, and 
for which we can get good bounds on $\|P_\sigma(y) - y\|_\infty$. 
Firstly we show that such policies exist. 
In fact, any optimal policy will do: for an optimal policy $\tau$, $q^*_\tau > 0$ and 
Lemma \ref{lipschitz} applied to $x = P_\tau(x)$ gives that $\|P_\tau(y) - y\|_\infty \leq 2^{-14|P|-1} \epsilon$.
 Unfortunately the optimal policy might be hard to find (Theorem \ref{optimalhard}). We can however, 
given a policy $\sigma$ and the PPS $x=P_\sigma(x)$, easily detect in polynomial time 
whether $q^*_\sigma > 0$ (see, e.g., Theorem 2.2 of \cite{rmc}, and also \cite{rsm}).
We shall also make use of the following easy fact:

\begin{lemma} \label{detect0} If $x=P(x)$ is a PPS with n variables, and with LFP $q^*$, 
then for any variable index $i \in \{1,\ldots,n\}$ the following are equivalent\\
(i) $q^*_i > 0$.\\
(ii) there is a $k >0$ such that $(P^k(0))_i > 0$.\\
(iii) $(P^n(0))_i > 0$.\end{lemma}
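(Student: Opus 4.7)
The plan is to prove the three-way equivalence by establishing the cycle (iii) $\Rightarrow$ (ii) $\Rightarrow$ (i) $\Rightarrow$ (iii). The first implication is immediate by taking $k=n$. For (ii) $\Rightarrow$ (i), I will use monotonicity: since $0 \leq q^*$ and $P$ is monotone, an easy induction gives $P^k(0) \leq P^k(q^*) = q^*$ for all $k \geq 0$, so $(P^k(0))_i > 0$ forces $q^*_i > 0$.

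The interesting direction is (i) $\Rightarrow$ (iii). The key is to study how the support $S_k := \{i : (P^k(0))_i > 0\}$ evolves. Since $0 \leq P(0)$, monotonicity of $P$ yields $P^k(0) \leq P^{k+1}(0)$ for all $k$, so the supports form an increasing chain $S_0 \subseteq S_1 \subseteq S_2 \subseteq \cdots$ inside $\{1,\ldots,n\}$. Starting from $S_0 = \emptyset$, such a chain can strictly increase at most $n$ times, so $S_n = S_{n+1} = S_{n+2} = \cdots$. The essential observation making this useful is that $S_{k+1}$ is determined by $S_k$ alone: because every $P_i$ is a sum of non-negative terms, $(P^k(0))_j \geq 0$ for all $j$, and therefore $i \in S_{k+1}$ iff either $P_i$ has a positive constant term or $P_i$ has some monomial all of whose variables lie in $S_k$. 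This is a function $f$ of the subset $S_k$ only, and the stabilization $S_n = S_{n+1}$ together with $S_{k+1} = f(S_k)$ gives $S_k = S_n$ for every $k \geq n$, hence $\bigcup_k S_k = S_n$.

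To close the loop I need $\{i : q^*_i > 0\} \subseteq \bigcup_k S_k$. This is where I invoke Kleene-style convergence of the iterates: the sequence $P^k(0)$ is non-decreasing and, as noted, bounded above by $q^*$, so it converges to some $y \leq q^*$; by continuity of $P$, $y$ is a fixed point, and by the LFP property $y \geq q^*$, giving $P^k(0) \uparrow q^*$. Consequently, if $q^*_i > 0$ then $(P^k(0))_i > 0$ for all sufficiently large $k$, meaning $i \in \bigcup_k S_k = S_n$, i.e. $(P^n(0))_i > 0$. Combining the three steps completes the equivalence.

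The only mild subtlety is the stabilization step; the main obstacle is simply recognizing that the ``support function'' $f: 2^{\{1,\ldots,n\}} \to 2^{\{1,\ldots,n\}}$ is well-defined (independent of the actual positive values of $P^k(0)$), which is where non-negativity of all coefficients is crucially used. Everything else is standard monotone-iteration bookkeeping.
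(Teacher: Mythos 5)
Your proof is correct and follows essentially the same route as the paper's: both rest on the observation that the support $S_k$ of $P^k(0)$ is non-decreasing, is determined by $S_{k-1}$ alone (by non-negativity of coefficients), and hence stabilizes within $n$ steps, combined with $P^k(0)\uparrow q^*$ and $P^k(0)\leq q^*$. The only cosmetic differences are that you arrange the implications as a cycle and supply a self-contained Kleene-convergence argument where the paper simply cites the convergence $P^k(0)\rightarrow q^*$ from prior work.
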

\begin{proof} (i) $\implies$ (ii): From \cite{rmc}, $P^k(0) \rightarrow q^*$ as $k \rightarrow \infty$. 
It follows that if $(P^k(0))_i = 0$ for all $k$, then $q^*_i = 0$.\\
(ii)  $\implies$ (iii): Firstly, if there is a $1 \leq k < n$ with $(P^k(0))_i > 0$ then $(P^n(0))_i > 0$. 
$P(0) \geq 0$ and so by monotonicity and an easy induction $P^{l+1}(0) \geq P^{l}(0)$ for all $l > 0$. 
Another induction gives that $P^m(0) \geq P^l(0)$ when $m \geq l > 0$. As $k < n$, $(P^n(0))_i \geq (P^k(0))_i > 0$.

Whether $P_i(x) > 0$ depends only on whether each $x_j >0$ or not and not on the value of $x_j$. 
So, for any $k$, whether $(P^{k+1}(0))_i > 0$ depends only on the set $S_k = \{x_j \text{ such that } (P^{k}(0))_j > 0\}$. 
From before $P^{k+1}(0) \geq P^k(0)$, so $S_{k+1} \supseteq S_k$. If ever we have that $S_{k+1} = S_k$, 
then for any $j$, $(P^{k+2}(0))_j > 0$ whenever $(P^{k+1}(0))_j > 0$ 
so $S_{k+2} = S_{k+1} = S_k$. $S_{k+1} \supset S_k$ can only occur for $n$ values of $k$ as there are only $n$ variables to add. 
Consequently $S_{n+1} = S_n$ and so $S_m = S_n$ whenever $m > n$. 
So if we have a $k > n$ with $(P^k(0))_i > 0$, then $(P^n(0))_i > 0$

(iii) $\implies$ (i): By monotonicity and an easy induction, $q^* \geq P^k(0)$ for all $k >0$. 
In particular $q^* \geq P^n(0)$.  So $q^*_i \geq (P^n(0))_i > 0$.\end{proof}

Given the maxPPS, $x=P(x)$, with $0 < q^* < 1$, and given a vector $y$
that satisfies the conditions of Theorem \ref{maxeopt}, 
we shall use the following algorithm to obtain the policy we need:

\begin{enumerate}
\item Initialize the policy $\sigma$ to any policy such that $P_\sigma(y) = P(y)$.

\item Calculate for which variables $x_i$ in $x=P_\sigma(x)$ we have $(q^*_\sigma)_i = 0$. 
Let $S_{0}$ denote this set of variables.
(We can do this in P-time;  see e.g., Theorem 2.2 of \cite{rmc}.) 

\item If for all $i$ we have $(q^*_\sigma)_i > 0$, i.e., if $S_{0} = \emptyset$, then terminate
and output the policy $\sigma$.

\item Otherwise, look for a variable $x_i$, where $P_i(x)$ is of form M, 
with $P_i(x) = \text{max }  \{x_j,x_k\}$, and where 
$(q^*_\sigma)_i = 0$ but one of $x_j , x_k$, say $x_j$, has $(q^*_\sigma)_j > 0$ and where
furthermore $\|y_i - y_j\| \leq 2^{-14|P|-1} \epsilon$. (We shall establish that such a 
pair $x_i$ and $x_j$ will always exist when we are at this step of the algorithm.)
 
Let $\sigma'$ be the policy that chooses $x_j$ at $x_i$ but is otherwise identical to $\sigma$. 
Set $\sigma := \sigma'$ and return to step 2.
\end{enumerate}

\begin{lemma} The steps of the above algorithm are always well-defined, and
the algorithm always terminates with a policy $\sigma$ 
such that $q^*_\sigma > 0$ and $\|P_\sigma(y) - y\|_\infty \leq 2^{-14|P|-1} \epsilon$.
\end{lemma}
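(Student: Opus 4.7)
The plan is to establish three facts about the algorithm's trajectory: (i) the invariant $\|P_\sigma(y) - y\|_\infty \leq 2^{-14|P|-1}\epsilon$ is preserved throughout; (ii) each execution of step 4 strictly enlarges the set $S := \{v : (q^*_\sigma)_v > 0\}$, so the procedure halts after at most $n$ iterations of step 4; and (iii) step 4 always finds a valid pair $(i,j)$ whenever $S_0 \neq \emptyset$. Writing $\delta := 2^{-14|P|-2}\epsilon$ for brevity, the hypothesis $\|q^* - y\|_\infty \leq \delta$ together with Lemma \ref{lipschitz} gives $\|P(y) - y\|_\infty \leq 2\delta$, and since the initial $\sigma$ is chosen so that $P_\sigma(y) = P(y)$, the invariant holds initially. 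Each execution of step 4 alters $P_\sigma(y)$ only at coordinate $i$ (replacing its value with $y_j$), and the guard $|y_i - y_j| \leq 2\delta$ preserves the invariant.

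For termination, let $\sigma'$ denote the policy resulting from a single step 4 switch. Since $i \in Z := V \setminus S$ is of form M, the relation $(q^*_\sigma)_i = (q^*_\sigma)_{\sigma(i)} = 0$ forces the replaced branch $\sigma(i)$ itself into $Z$, so $\sigma(i)$ never contributes to the positive support of any Kleene iterate $(P_\sigma)^\ell(0)$. By induction on $\ell$, using Lemma \ref{detect0}, the positive-support set of $(P_\sigma)^\ell(0)$ is contained in that of $(P_{\sigma'})^\ell(0)$; once $j \in S$ enters the former, coordinate $i$ enters the latter at the next iterate. Thus $S \cup \{i\} \subseteq S'$, so $|S|$ grows strictly and the algorithm halts in at most $n$ iterations of step 4.

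The main obstacle is verifying that step 4 always succeeds in finding $(i,j)$. I argue by contradiction: assume $Z \neq \emptyset$ but no valid pair exists. Combining the invariant with Lemma \ref{lipschitz}, for every form-M vertex $i \in Z$ with $P_i(x) = \max\{x_j, x_k\}$, the branch $\ell \in \{j,k\}$ achieving $\max\{y_j,y_k\}$ satisfies $|y_\ell - y_i| \leq 2\delta$, so by the contradictory hypothesis this branch must lie in $Z$. Combined with the structural consequences of $(q^*_\sigma)_v = 0$ for $v \in Z$ (for form L: $a_{v,0}=0$ and every variable with positive coefficient lies in $Z$; for form Q: at least one factor lies in $Z$), I would then construct a vector $q'$ that agrees with $q^*$ on $S$ and is zero on $Z$, and verify the super-fixed-point inequality $P(q') \leq q'$. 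The delicate sub-case is a form-M vertex $i \in Z$ whose non-maximal-in-$y$ branch happens to lie in $S$; the invariant then forces such an $S$-branch to have $y$-value below $y_i - 2\delta$, and this ``slack'' is what one exploits, possibly by additionally zeroing out a suitable subset of $S$-coordinates in $q'$ so that the maximum at such $i$ still collapses to $0$. Once $P(q') \leq q'$ is established, downward Kleene iteration of $P$ from $q'$ converges to a fixed point $r$ with $q^* \leq r \leq q' \leq q^*$; since $q^*$ is the LFP this forces $r = q' = q^*$, and hence $q^*|_Z = 0$, contradicting the hypothesis $q^* > 0$.

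Finally, upon termination $S_0 = \emptyset$ yields $q^*_\sigma > 0$, and the preserved invariant gives $\|P_\sigma(y) - y\|_\infty \leq 2^{-14|P|-1}\epsilon$, completing the argument.
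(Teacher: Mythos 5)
Your handling of the invariant and of termination is sound and matches the paper's argument: the initial policy satisfies $\|P_\sigma(y)-y\|_\infty = \|P(y)-y\|_\infty \le 2\|q^*-y\|_\infty \le 2^{-14|P|-1}\epsilon$ by Lemma \ref{lipschitz}, each switch changes $P_\sigma(y)$ only at coordinate $i$, replacing its value by some $y_j$ with $|y_i-y_j|\le 2^{-14|P|-1}\epsilon$, and the induction on Kleene iterates (as in the paper's Claim) shows that each switch strictly enlarges the set $S=\{v: (q^*_\sigma)_v>0\}$, so at most $n$ switches occur.

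The gap is in part (iii), the well-definedness of step 4, which is the heart of the lemma; your contradiction argument stalls exactly at the ``delicate sub-case'' you flag, and the proposed repair does not work. Suppose $i\in Z$ has form M with $P_i(x)=\max\{x_j,x_k\}$, where $j$ is the branch maximizing $y$ (hence $j\in Z$ under your no-valid-pair hypothesis) but $k\in S$. Then $q'_k=q^*_k>0$, so $P_i(q')\ge q^*_k>0=q'_i$ and the pre-fixed-point inequality fails at $i$. You cannot repair this by additionally zeroing $k$: the hypothesis only yields $y_k<y_i-2^{-14|P|-1}\epsilon$, which gives no structural control over $k$'s equation --- $k$ may, for instance, have form L with a positive constant term, in which case $P_k(q')>0$ for every $q'\ge 0$ and $k$ can never belong to a coordinate set on which a pre-fixed point vanishes; the cascade of further zeroings has no reason to close off. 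The missing ingredient is an optimal policy $\tau$ (Theorem \ref{optexist}): applying Lemma \ref{lipschitz} to the PPS $x=P_\tau(x)$, whose LFP is $q^*$, gives $\|P_\tau(y)-y\|_\infty\le 2^{-14|P|-1}\epsilon$, hence for every form-M vertex $i\in Z$ the branch $\tau(i)$ satisfies $|y_i-y_{\tau(i)}|\le 2^{-14|P|-1}\epsilon$, and your no-valid-pair hypothesis then forces $\tau(i)\in Z$. Since the form-L and form-Q closure properties of $Z$ that you correctly identified are policy-independent, $Z$ is closed under the dependencies of $P_\tau$, so the vector equal to $q^*$ on $S$ and $0$ on $Z$ is a pre-fixed point of $P_\tau$; this gives $(q^*_\tau)_v=q^*_v=0$ for all $v\in Z\ne\emptyset$, the desired contradiction. (The paper argues the contrapositive directly: it takes a $Z$-vertex of minimal Kleene depth under $\tau$, shows by minimality that it must have form M with its $\tau$-branch in $S$, and reads off the required $y$-closeness from Lemma \ref{lipschitz} applied to $P_\tau$.) Either way, the optimal policy is indispensable; the max-in-$y$ branch alone does not carry enough information to complete the argument.
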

\begin{proof}
Firstly, to show that the steps of the algorithm are always well-defined,
we need to show that if there exists an $x_i$ with $(q^*_\sigma)_i = 0$, then step 4 will 
find some variable to switch to. 
Suppose there is such an $x_i$. Let $\tau$ be an optimal policy. $(q^*_\tau)_i =  q^*_i >  0$. 
So by Lemma \ref{detect0}, $(P_\tau^n)_i > 0$. 
For any variable $x_j$ with $(P_\tau(0))_j > 0$, the equation $x_j = P_j(x)$ must have form L and not M 
so $(P_\sigma(0))_j > 0$ and so $(q^*_\sigma)_j > 0$.
There must be a least $k$, $k_{\min}$ with $1 < k_{\min} \leq n$, such that there is a variable $x_j$ 
with $(P_\tau^k(0))_j > 0$ but $(q^*_\sigma)_j = 0$. 
Let $x_{i'}$ be a variable such that $(P_\tau^{k_{\min}}(0))_{i'} > 0$ 
but $(q^*_\sigma)_{i'} = 0$.

Suppose that $x_{i'} = P_{i'}(x)$ has form Q, then $P_{i'}(x) =
x_jx_l$ for some variables $x_j$, $x_l$. 
We have $0 < (P_{\tau}^{k_{\min}}(0))_{i'} =
(P_{\tau}^{k_{\min}-1}(0))_j(P_{\tau}^{k_{\min}-1}(0))_l$. So
$(P_{\tau}^{k_{\min}-1}(0))_j > 0$ and $(P_{\tau}^{k_{\min}-1}(0))_l > 0$. The
minimality of $k_{\min}$ now gives us that $(q^*_\sigma)_j > 0$ and
$(q^*_\sigma)_l > 0$. So $ (q^*_\sigma)_{i'} = (q^*_\sigma)_j
(q^*_\sigma)_l > 0$. This is a contradiction. Thus, $x_{i'} = P_{i'}(x)$
does not have form Q.

Similarly, $x_{i'} = P_{i'}(x)$ does not have form L. So $x_{i'} =
P_{i'}(x)$ has form M. There are variables $x_j$, $x_l$ with
$P_{i'}(x) = \text{max } \{x_j,x_l\}$. Suppose, w.l.o.g. that
$(P_\tau(x))_{i'} = x_j$. 
We have $P^{k_{\min}}_\tau(0))_{i'} > 0$ and so
$(P^{k_{\min} -1}(0))_j > 0$. By minimality of $k_{\min}$, we have that
$(q^*_\sigma)_j > 0$. We have that $(q^*_\sigma)_{i'} = 0$ and so
$(P_\sigma(x))_{i'} = x_l$.

Lemma \ref{lipschitz} applied to the system $x = P_\tau(x)$ gives that 
$\|P_\tau(y) - y\|_\infty \leq 2^{-14|P|-1} \epsilon$. 
So $|y_{i'} - y_{j}| = |y_{i'} - (P_\tau(y))_{i'}| \leq 2^{-14|P|-1} \epsilon$. 
Thus, step 4 could use $i'$ 
and change the policy $\sigma$ at $i'$ (i.e., switch $\sigma(i')$) from $x_l$ to $x_j$.

\noindent Next, we need to show that the algorithm terminates:

\begin{claim} If step 4 switches the variable $x_i$ with $P_i(x) = \text{max } \{x_j,x_k\}$ 
from $(P_\sigma(x))_i = x_k$ to $(P_{\sigma'}(x))_i = x_j$, then \\ 
(i) $q^*_{\sigma'} \geq q^*_\sigma$,\\
(ii) ($q^*_{\sigma'})_i > 0$,\\
(iii) The set of variables  $x_l$ with $(q^*_{\sigma'})_l > 0$ is a strict superset of the set of variables 
$x_l$ with $(q^*_\sigma)_l > 0$. \end{claim}
\begin{proof}
Recall that step 4 will only switch if $(q^*_\sigma)_i = 0$ and $(q^*_\sigma)_j > 0$.
\begin{itemize}
\item[(i)] We show that, for any $t > 0$, $P^t_{\sigma'}(0) \geq P^t_\sigma(0)$.\\
The base case $t=1$, is clear, because the only indices $i$ where $P_i(0) \neq 0$ are
when $P_i(0)$ has form L, in which case $P_i(0) = (P_{\sigma'}(0))_i = (P_{\sigma}(0))_i$.
  
For the inductive case:
note firstly that 
$P_\sigma(x)$ and $P_{\sigma'}(x)$ only differ on the $i$th coordinate. 
$(q^*_\sigma)_i = 0$, so for any $t$, $(P^t_\sigma(0))_i = 0$.  Suppose that $P_{\sigma'}^t(0) \geq P_{\sigma}^t(0)$.  
Then by monotonicity $P_{\sigma'}^{t+1}(0) \geq P_{\sigma'}(P_{\sigma}^t(0))$.
But $(P_{\sigma'}(P_{\sigma}^t(0)))_ r = (P_{\sigma}^{t+1}(0))_ r$ when $r \not= i$. 
Furthermore, $(P_{\sigma'}(P_{\sigma}^t(0)))_i \geq 0 = (P_{\sigma}^{t+1}(0))_ i$.
 So  $P_{\sigma'}(P_{\sigma}^k(0)) \geq P_{\sigma}^{k+1}(0)$. 
We thus have that $P_{\sigma'}^{t+1}(0) \geq P_{\sigma}^{t+1}(0)$. 

We know that as $t \rightarrow \infty$, $P^t_{\sigma'}(0) \rightarrow q^*_{\sigma'}$ and $P^t_\sigma(0) \rightarrow q^*_\sigma$. So $q^*_{\sigma'} \geq q^*_\sigma$.

\item[(ii)] We have $(q^*_{\sigma'})_i  = (q^*_{\sigma'})_j$. By (i) $(q^*_{\sigma'})_j \geq (q^*_{\sigma})_j$. 
We chose $x_j$ such that $(q^*_\sigma)_j > 0$. So $(q^*_{\sigma'})_i > 0$.

\item[(iii)] If $(q^*_\sigma)_l > 0$, then by (i) $(q^*_{\sigma'})_l > 0$. Also $(q^*_\sigma)_i = 0$ and by (ii) $(q^*_{\sigma'})_i > 0$.
\end{itemize}\end{proof}

Thus, if at some stage of the algorithm we do not yet have $q^*_\sigma > 0$, then step 4 always 
gives us a new $\sigma'$ with more coordinates having $(q^*_{\sigma'})_i > 0$. 
Furthermore, note that 
if $\|P_\sigma(y) - y\|_\infty \leq 2^{-14|P|-1} \epsilon$ then $\|P_{\sigma'}(y) - y\|_\infty \leq 2^{-14|P|-1} \epsilon$.
Our starting policy has $\|P_\sigma(y) - y\|_\infty = \|P(y) - y\|_\infty\leq 2^{-14|P|-1} \epsilon$.
The algorithm terminates and gives a $\sigma$ with $q^*_\sigma > 0$ and $\|P_\sigma(y) - y\|_\infty \leq 2^{-14|P|-1} \epsilon$.
\end{proof}

We can now complete the proof of the Theorem:

\begin{proof}[Proof of Theorem \ref{maxeopt}.]
Using the algorithm, we find a $\sigma$ with $\|y - P_\sigma(y)\|_\infty \leq 2^{-14|P|-1} \epsilon$ and $q^*_\sigma > 0$.
By Proposition \ref{oneoftwo}, $q^*_\sigma < 1$.  
Applying Theorem \ref{normbounds} (i) to the PPS $x=P_\sigma(x)$  and point $y:=q^*$
(not to be confused with the $y$ in the statement of Theorem \ref{maxeopt}), gives that
$$\|(I-P'_\sigma(\frac{1}{2}(q^*+q^*_\sigma)))^{-1}\|_\infty \leq 2^{10|P_\sigma|}\text{max } \{\frac{2}{(1-q^*)_{\min}}, 2^{|P|} \}$$
We have $|P_\sigma| \leq |P|$.
Also, from
the fact there always exists an optimal policy, 
and from Theorem 3.12 of \cite{ESY12}, 
it follows that we have $(1-q^*)_{\min} \geq 2^{-4|P|}$. So
\begin{equation}
\label{eq:0-half-final-alg-max-epsilon-case}
\|(I-P'_\sigma(\frac{1}{2}(q^*+q^*_\sigma)))^{-1}\|_\infty \leq 2^{14|P| +1}
\end{equation} 
We can not use Theorem \ref{neednorm} as stated because we need not have $P(y) = P_\sigma(y)$. 
We do however have 
\begin{equation}
\label{eq:one-half-final-alg-max-epsilon-case}
\|P_\sigma(y) - y\|_\infty \leq 2^{-14|P|-1} \epsilon
\end{equation} 
Applying Lemma \ref{perfectnewton}, and taking norms, 
we get the inequality 
\begin{equation}
\label{eq:second-half-final-alg-max-epsilon-case}
\|q^*_\sigma-y\|_\infty \leq \|(I - P'(\frac{1}{2}(q^*_\sigma + y)))^{-1}\|_\infty \|P(y) - y\|_\infty
\end{equation}
Combining (\ref{eq:0-half-final-alg-max-epsilon-case}), (\ref{eq:one-half-final-alg-max-epsilon-case}) 
and (\ref{eq:second-half-final-alg-max-epsilon-case}) yields:
$$\|q^*_\sigma - y\|_\infty \leq \frac{1}{2}\epsilon$$
so
$\|q^*_\sigma - q^*\|_\infty \leq \|q^*_\sigma - y\|_\infty + \|q^* - y\|_\infty \leq \frac{1}{2}\epsilon + 2^{-14|P| - 2} \epsilon \leq \epsilon$.
\end{proof}

\begin{theorem} Given a max/minPPS,  $x=P(x)$, and given $\epsilon > 0$,
we can compute an $\epsilon$-optimal policy for $x=P(x)$ in time 
poly($|P|$, $\text{log }(1/\epsilon)$)\end{theorem}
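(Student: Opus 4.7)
The plan is to combine the approximation algorithm of Corollary \ref{mcor} with the two $\epsilon$-optimality theorems, Theorem \ref{mineopt} and Theorem \ref{maxeopt}. First I would apply Proposition \ref{prob1-ptime-scfg-prop} together with the qualitative algorithms of \cite{rmdp} cited there to detect in P-time all variables $x_i$ with $q^*_i \in \{0,1\}$; these algorithms also produce witness policy choices at such variables that realize the optimal value of $0$ or $1$, which I would fix once and for all. After substituting these values into the remaining equations, one obtains a reduced max/minPPS, $x' = P'(x')$, with LFP $0 < q'^{*} < 1$ and $|P'| \leq |P|$. Any $\epsilon$-optimal policy for the reduced system combines with the committed choices at eliminated variables to yield an $\epsilon$-optimal policy for the original.

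On the reduced system I would invoke Corollary \ref{mcor} with accuracy parameter $j := 14|P| + 3 + \lceil \log_2(1/\epsilon) \rceil$, producing in time polynomial in $|P|$ and $\log(1/\epsilon)$ a rational vector $y$ with $0 \leq y \leq q'^{*}$ and $\|q'^{*} - y\|_\infty \leq 2^{-j}$. By construction, this $y$ satisfies the precision hypotheses of both Theorem \ref{mineopt} (for minPPSs) and Theorem \ref{maxeopt} (for maxPPSs). For a minPPS, I would read off in linear time any policy $\sigma$ with $P_\sigma(y) = P(y)$ by selecting, at each equation of form M with $P_i(x) = \min\{x_j, x_k\}$, whichever of $y_j, y_k$ is smaller; Theorem \ref{mineopt} then yields $\|q^{*}_\sigma - q'^{*}\|_\infty \leq \epsilon$ immediately. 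For a maxPPS, I would run the P-time policy-refinement procedure spelled out in the proof of Theorem \ref{maxeopt}: start from any policy with $P_\sigma(y) = P(y)$, use the P-time qualitative test of Theorem 2.2 of \cite{rmc} to find indices with $(q^{*}_\sigma)_i = 0$, and at each such index switch to a child $x_j$ with $(q^{*}_\sigma)_j > 0$. That procedure terminates after at most $n \leq |P|$ switches, each requiring only polynomial work, and the final policy is $\epsilon$-optimal by Theorem \ref{maxeopt}.

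The main subtlety lies not in the main flow, which is simply plugging pieces together, but in the preprocessing step: one must be careful that the qualitative P-time algorithms of \cite{rmdp} not only label variables with $q^{*}_i \in \{0,1\}$ but also furnish, at those variables, policy choices that actually attain these extremal values, and one must verify that combining these choices with the $\epsilon$-optimal policy on the reduced system preserves $\epsilon$-optimality on the original. Both facts follow from the constructive nature of the algorithms in \cite{rmdp}, which identify bottom strongly connected structure in the appropriate dependency graph and commit to edges consistent with the qualitative target. Granting this, the overall running time is \text{poly}($|P|, \log(1/\epsilon)$) as claimed.
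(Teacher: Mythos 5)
Your overall architecture matches the paper's proof exactly: eliminate the $q^*_i\in\{0,1\}$ variables, approximate the LFP of the reduced system to accuracy $2^{-14|P|-3}\epsilon$ via the rounded GNM, and then invoke Theorem \ref{mineopt} or Theorem \ref{maxeopt} to extract an $\epsilon$-optimal policy on the reduced system. You also correctly identify where the real difficulty lies, namely in extending the policy to the eliminated variables. The gap is that your proposed resolution of that difficulty does not work as stated. You claim the qualitative algorithms of \cite{rmdp} ``furnish, at those variables, policy choices that actually attain these extremal values'' and that this ``follows from the constructive nature of the algorithms.'' For a maxPPS with $q^*_i=1$ this is false: the P-time algorithm of \cite{rmdp} (Lemma 12 there) produces only a partial \emph{randomized} policy certifying $q^*_i=1$, not a pure one, and a policy in the sense of this paper must be pure. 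An arbitrary pure selection at such a variable can drive $(q^*_\sigma)_i$ far below $1$, which then also corrupts the values of every variable depending on $x_i$, so the error is not confined to the eliminated coordinates.

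The paper closes this gap with an explicit derandomization: starting from $P_{(0)}=P$, for each form-M variable $x_i$ with $(q^*_{(l)})_i=1$ it tentatively fixes $(P_{(l+1)})_i:=x_j$, reruns the P-time qualitative test to check whether the value is still $1$, and if not fixes $x_k$ instead; Theorem \ref{optexist} guarantees one of the two choices works, and at most $n$ calls to the qualitative algorithm suffice. A symmetric (but easier) issue arises for a minPPS at variables with $q^*_i=0$: one must choose a child $x_j$ with $q^*_j=0$ and then actually prove, by induction on the value iteration $P^k_\sigma(0)$, that the resulting policy attains $(q^*_\sigma)_i=0$; your proposal does not address this either, although it is a short argument. (The two remaining cases --- minPPS with $q^*_i=1$ and maxPPS with $q^*_i=0$ --- are genuinely trivial, since there every choice is optimal.) Without the derandomization step for the maxPPS/$q^*_i=1$ case, the proof is incomplete.
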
           

\begin{proof} First we use the algorithms from \cite{rmdp} to detect variables $x_i$ with $q^*_i =0$ or $q^*_i = 1$ in time polynomial in $|P|$. Then we can remove these from the max/minPPS by substituting the known values into the equations for other variables. This gives us an max/minPPS with least fixed point $0 < q'^* < 1$ and does not increase $|P|$. 
To use either Theorem \ref{maxeopt} or Theorem \ref{mineopt}, 
it suffices to have a $y$ with $y < q^*$ with $q^* - y \leq 2^{-14|P| - 3} \epsilon$. 
Theorem \ref{mptime} says that we can find such a $y$ in time polynomial in $|P|$ 
and $14|P| - \text{log }(\epsilon)$, which is polynomial in $|P|$ and $\text{log }(1/\epsilon)$ as required. 
Now depending on whether we have a maxPPS or minPPS,
Theorem \ref{maxeopt} or Theorem \ref{mineopt}  show that from this $y$, we can find an 
$\epsilon$-optimal policy for the max/minPPS with $0 < q'^* < 1$ in time polynomial in $|P|$ and 
$\text{log }(1/\epsilon)$. All that is left to show is that we can extend this policy to the variables 
$x_i$ where $q^*_i =0$ or $q^*_i=1$ while still remaining $\epsilon$-optimal.

We next show how this can be done.

For a minPPS, if $q^*_i=1$ then for any policy
$\sigma$, $(q^*_\sigma)_i = 1$ so the choice made at such variables $x_i$ is
irrelevant. Similarly, for maxPPSs, when $q^*_i =0$, any choice at $x_i$ is
optimal.

For a minPPS with $q^*_i=0$,  if $P_i(x)$ has form M,
we can choose any variable $x_j$ with $q^*_j = 0$. There is such a variable: 
if $P_i(x) = \text{min }\{x_j,x_k\}$ and $q^*_i=0$ then either $q^*_j=0$ or $q^*_k=0$. 
Let $\sigma$ be a policy such that for each variable $x_i$ with $q^*_i=0$, $(q^*)_{\sigma(i)} = 0$. 
We need to show that $(q^*_\sigma)_i=0$ for all such variables. Suppose that, for some $k \geq 0$, 
$(P^k_\sigma(0))_i = 0$ for all $x_i$ such that $q^*_i=0$. Then $P(P^k_\sigma(0))_i = 0$
for all $x_i$ with $q^*_i=0$. 

To see why this is so, note that whether or not $P_i(z) =0$ depends only on which coordinates of $z$ are $0$,
and furthermore if $P_i(z)=0$ when the set of $0$ coordinates of $z$ is $S$,
then for any vector $z'$ where the $0$ coordinates of $z'$ are $S' \supseteq S$, 
we have $P_i(z') = 0$. 
Since the coordinate $S$ that are $0$ in $q^*$ 
are a subset of the coordinates $S'$ that are $0$ in $P^k_\sigma(0)$, 
and we have $P_i(q^*)= q^*_i=0$, we thus have 
$P(P^k_\sigma(0))_i = 0$.

If $P_i(x) = \text{min }\{x_j,x_k\}$ and $q^*_i=0$ then either $q^*_j = 0$ or $q^*_k= 0$. 
Suppose w.l.o.g. that $(P_\sigma(x))_i =x_j$.  Then $q^*_j = 0$, so by assumption $(P^k_\sigma(0))_j = 0$ and so $(P_\sigma(P^k_\sigma(0)))_i = 0$. We now have enough for $(P^{k+1}_\sigma(0))_i = 0$ 
for each variable $x_i$ with $q^*_i=0$. $P^0_\sigma(0) = 0$, so by induction for all $k \geq 0$, $(P^k_\sigma(0))_i = 0$ 
for all $x_i$ with $q^*_i=0$. From this, for each variable $x_i$ with $q^*_i=0$, $(q^*_\sigma)_i=0$.
  
The case of a maxPPS that have variables with $q^*_i=1$ is not so
simple. The P-time algorithm given in \cite{rmdp} to detect vertices
with $q^*_i=1$, produces a partial randomised policy for such vertices
(Lemma 12 in \cite{rmdp}).  A randomised policy is a map $\rho: M
\rightarrow [0,1]$, that turns a maxPPS $x=P(x)$ into a PPS
$x=P_\rho(x)$ by replacing equations of form M, $P_i(x)=\text{max
}\{x_j,x_k\}$, with equations of form L $P_i(x)=\rho(i)x_j +
(1-\rho(i))x_k$. We would prefer a non-randomised (pure) policy
$\sigma$ with $(q^*_\sigma)_i =1$ for all variables $x_i$ with $q^*_i =
1$. Theorem \ref{optexist} (which quotes Theorem 2 of \cite{rmdp}) guarantees the existence
of such a $\sigma$.

We can construct such a pure optimal partial policy. 
We start with $P_{(0)}(x) =P(x)$. Given an $x_i$ with $(P_{(l)}(x))_i = \text{max } \{x_j, x_k\}$ and $(q^*_{(l)})_i=1$, 
we try setting $(P_{(l+1)}(x))_i = x_j$ and see if this gives $(q^*_{(l+1)})_i = 1$. 
If it does then set $(P_{(l+1)}(x)))_i = x_j$. 
If it does not then set $(P_{(l+1)}(x)))_i = x_k$. 
We can argue inductively that the LFP $q^*_{(l)}$ of $x=P_{(l)}(x)$ is equal to
the LFP $q^*$ of $x=P(x)$ for all $l$.
The basis, $l=0$, is clear. 
For the induction step. we know from
Theorem \ref{optexist} that there is an optimal policy $\sigma$ for the maxPPS $x=P_{(l)}(x)$. 
If $\sigma$ does not have $\sigma(i) = j$ then $\sigma(i) = k$. So if setting  $(P_{(l+1)}(x))_i = x_j$ 
would not give $(q^*_{(l+1)})_i = 1$ then 
$(P_{(l+1)}(x))_i = x_k$ does give $(q^*_{(l+1)})_i = 1$. 
We have that $(q^*_{(l+1)})_r = (q^*_{(l)})_r$ when $r \not= i$ so $q^*_{(l+1)}=q^*_{(l)}$. 
When there are no $x_i$ with $(P_{(l)}(x))_i = \text{max } \{x_j, x_k\}$ and $(q^*_{(l)})_i=1$,
 we have found a pure partial optimal policy for $x_i$ with $q^*_i = 1$. 
This requires no more than $n$ calls to the polynomial time algorithm given in \cite{rmdp} 
for determining for a maxPPS, $x=P(x)$ those
coordinates $i$ such that $q^*_i =1$.
\end{proof} 

\section{Approximating the value of BSSGs in FNP}

In this section we briefly note that, as an easy corollary of
our results for BMDPs, we can obtain a TFNP
(total NP search problem) upper bound for 
computing (approximately), the {\em value} of 
{\em Branching simple stochastic games} (BSSG), where the objective
of the two players is to maximize, and minimize, the
extinction probability.
For relevant definitions and background results about 
these games see \cite{rmdp}.
It suffices for our purposes here to point out that, as shown in \cite{rmdp}, 
the value of these games (which are determined)
is characterized by the LFP solution of associated min-maxPPSs, $x=P(x)$,
where both min and max operators can occur in the equations for
different  variables.  Furthermore, 
both players have optimal policies (i.e. optimal 
pure, memoryless strategies)  
in these games (see \cite{rmdp}).

\begin{corollary}
Given a max-minPPS, $x=P(x)$, and given a rational $\epsilon > 0$,
the problem of approximating the LFP $q^*$ of $x=P(x)$,
i.e., computing a vector $v$ such that $\| q^* - v\|_\infty \leq \epsilon$,
is in TFNP, as is the problem of computing $\epsilon$-optimal
policies for both players.    
(And thus also, the problem of approximating the value,
and computing $\epsilon$-optimal strategies, for
BSSGs is in FNP.)
\end{corollary}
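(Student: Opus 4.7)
The plan is to exploit the existence of optimal pure memoryless policies for both players in a max-minPPS, as guaranteed by Theorem~\ref{optexist} (and its footnote): if $\tau$ is a pure policy for the max-player then $x = P_\tau(x)$ is a minPPS, and symmetrically if $\sigma$ is a pure policy for the min-player then $x = P_\sigma(x)$ is a maxPPS. A first key observation is that for arbitrary pure policies $(\tau,\sigma)$ the structural sandwich
\[
q^*_\tau \;\leq\; q^* \;\leq\; q^*_\sigma
\]
holds by monotonicity together with the pointwise inequalities $P_\tau \leq P \leq P_\sigma$ (the first since the max operator dominates any fixed branch, the second dually for min), both of which propagate to the fixed points via the Kleene iteration $P^k(0) \to q^*$. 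Moreover, Theorem~\ref{optexist} supplies at least one optimal pair $(\tau^*,\sigma^*)$ for which equality holds, i.e.\ $q^*_{\tau^*} = q^* = q^*_{\sigma^*}$.

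The nondeterministic algorithm on input $(P,\epsilon)$ then proceeds as follows. First, guess a pair $(\tau,\sigma)$ of pure policies. Second, since $|P_\tau|,|P_\sigma| \leq |P|$, apply Corollary~\ref{mcor} twice to compute in polynomial time rational vectors $v_\tau, v_\sigma$ with $\|v_\tau - q^*_\tau\|_\infty \leq \epsilon/4$ and $\|v_\sigma - q^*_\sigma\|_\infty \leq \epsilon/4$. Third, test whether $\|v_\sigma - v_\tau\|_\infty \leq \epsilon/2$, accepting and outputting $(v_\tau,\tau,\sigma)$ if so, and rejecting otherwise. If the test passes, then combining the sandwich with the approximation bounds gives coordinatewise
\[
v_\tau(i) - \tfrac{\epsilon}{4} \;\leq\; q^*_\tau(i) \;\leq\; q^*(i) \;\leq\; q^*_\sigma(i) \;\leq\; v_\sigma(i) + \tfrac{\epsilon}{4},
\]
trapping $q^*(i)$ in an interval of length at most $\epsilon$, so simultaneously $\|q^* - v_\tau\|_\infty \leq \epsilon$, $\|q^*_\tau - q^*\|_\infty \leq \epsilon$, and $\|q^*_\sigma - q^*\|_\infty \leq \epsilon$; thus $\tau$ and $\sigma$ are $\epsilon$-optimal pure policies for the max- and min-player, respectively. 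Totality is witnessed by the branch that guesses $(\tau^*,\sigma^*)$: there $q^*_{\tau^*} = q^*_{\sigma^*}$ forces $\|v_\sigma - v_\tau\|_\infty \leq \epsilon/2$ automatically, so at least one branch accepts.

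The main conceptual obstacle is the verification step: there is no apparent way to directly certify ``$v$ is an $\epsilon$-approximation of $q^*$'' without access to $q^*$, and deciding comparisons against $q^*$ is already PosSLP-hard. The resolution is to let the witness itself be a \emph{pair} of pure policies that sandwich $q^*$ from above and below, so that the single P-time check $\|v_\sigma - v_\tau\|_\infty \leq \epsilon/2$ simultaneously certifies the approximation quality of $v_\tau$ and the $\epsilon$-optimality of both $\tau$ and $\sigma$. The BSSG statement follows immediately: the value vector of a BSSG is the LFP of an associated max-minPPS and both players have optimal pure memoryless strategies (by \cite{rmdp}), so the same nondeterministic procedure yields an FNP upper bound for $\epsilon$-approximating the BSSG value and for producing $\epsilon$-optimal pure memoryless strategies for both players.
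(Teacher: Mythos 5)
Your proposal is correct and follows essentially the same route as the paper: guess a pair of pure policies (one per player), use the P-time algorithm of Corollary \ref{mcor} on the resulting minPPS and maxPPS to get $\epsilon/4$-approximations of $q^*_\tau$ and $q^*_\sigma$, and accept iff these are close, with soundness from the sandwich $q^*_\tau \le q^* \le q^*_\sigma$ and totality witnessed by an optimal pair from Theorem \ref{optexist}. The only differences are cosmetic (swapped policy names, a threshold of $\epsilon/2$ instead of $\epsilon/4$, and your explicit Kleene-iteration justification of the sandwich, which the paper merely asserts).
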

\begin{proof}
Given $x=P(x)$, whose LFP, $q^*$, we wish to compute,
first guess pure policies $\sigma$ and $\tau$
for the max and min players, respectively.
Then, fix $\sigma$ as max's strategy, 
and for the resulting minPPS (with LFP $q^*_\sigma$)
use our algorithm to compute in P-time an approximate value 
vector $v_\sigma$, such that $\|v_\sigma - q^*_\sigma \|_\infty \leq \epsilon/4$.
Next, fix $\tau$ as min's strategy,
and for the resulting maxPPS (with LFP $q^*_\tau$),
use our algorithm to compute in P-time an approximate value
vector $v_\tau$, such that $\| v_\tau -q^*_\tau \|_\infty \leq \epsilon/4$.
Finally,  check whether 
$\| v_{\sigma} - v_{\tau} \|_\infty \leq \epsilon/4$. 
If not, then reject this ``guess''.  If so, then 
output $\sigma$  and $\tau$ as $\epsilon$-optimal policies
for max and min, respectively,
and output $v:=v_\sigma$ (or $v :=v_\tau$) as an 
$\epsilon$-approximation of the LFP, $q^*$.   This procedure is correct because
if $q^*$ is the LFP of the min-maxPPS, $x=P(x)$, then 
$q^*_\sigma \leq q^* \leq q^*_\tau$, and thus:
\begin{eqnarray*}
   \| q^*  - v_\sigma \|_\infty & \leq &  
\| q^* - q^*_\sigma \|_\infty + \| q^*_\sigma - v_\sigma \|_\infty\\
& \leq & \| q^*_\tau - q^*_\sigma \|_\infty  
+ \| q^*_\sigma - v_\sigma \|_\infty\\
& \leq &  \| q^*_\tau - v_\tau \|_\infty + \| v_\tau - v_\sigma \|_\infty
+ \|  v_\sigma - q^*_\sigma \|_\infty + 
\| q^*_\sigma - v_\sigma \|_\infty\\
& \leq & \epsilon
\end{eqnarray*}
And likewise for $v_\tau$.
\end{proof}

It is worth noting that the problem of approximating the value of a BSSG game,
to within a desired $\epsilon > 0$, when $\epsilon$ is given as part of the input, is
already at least as hard as computing the {\em exact} value of Condon's finite-state
simple stochastic games (SSGs) \cite{condon92}, 
and thus one can not hope for a P-time
upper bound without a breakthrough.  In fact, it was shown
in \cite{rmdp} that even the {\em qualitative} problem of deciding whether
the value $q^*_i =1$ for a given BSSG (or max-minPPS), which was shown
there to be in NP$\cap$coNP,
is already at least as hard
as Condon's {\em quantitative} decision problem for finite-state
simple stochastic games.  (Whereas for finite-state SSGs the qualitative 
problem of deciding whether the value is $1$ is in P-time.)

\appendix

\section{Omitted material from Section 2}

\subsection{Proof of Lemma \ref{invnonneg}}

As usual, we always assume, w.l.o.g., that any MPS or PPS is in SNF form.
Recall that for a square matrix $A$, $\rho(A)$ denotes its spectral
radius.\\

\noindent {\bf Lemma \ref{invnonneg}.}\;
{\em 
Given a PPS, $x= P(x)$, 
with LFP $q^* >0$,  if  $0 \leq y \leq q^*$, and $y < 1$, 
then $\rho(P'(y)) < 1$, and $(I - P'(y))^{-1}$ exists and is non-negative.}\\

We first recall several closely related results 
established in our previous papers.
Recall that a PPS, $x=P(x)$, is called {\em strongly connected},
if its variable dependency graph $H$ 
is strongly connected.

\begin{lemma}{(Lemma 6.5 of \cite{rmc})}\footnote{Lemma 6.5 of  \cite{rmc} is  
actually a more general result, relating to strongly connected MPSs that
arise from more general RMCs.}\label{lem:spec-below-q-scc}
Let $x=P(x)$ be a strongly connected PPS, in $n$ variables,
with LFP $q^* > 0$.  For any vector $0 \leq y < q^*$,   
$\rho(P'(y)) < 1$, and thus $(I-P'(y))^{-1}$ exists and is nonnegative.
\end{lemma}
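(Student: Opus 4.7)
The plan is to show $\rho(P'(y)) < 1$, from which $(I - P'(y))^{-1} = \sum_{k \geq 0}(P'(y))^k$ automatically exists and is entrywise nonnegative. Throughout, $P'(y) \geq 0$ entrywise since every probabilistic polynomial has nonneg coefficients, and strong connectedness of the variable-dependency graph makes $P'(y)$ irreducible in the Perron--Frobenius sense whenever $y > 0$ coordinatewise.

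The central algebraic tool is Lemma 3.3 of \cite{ESY12}, available here because every SNF equation has total degree $\leq 2$:
\[
P(a) - P(b) \;=\; P'\!\bigl(\tfrac{a+b}{2}\bigr)(a - b).
\]
I would apply this along the Kleene orbit $y_0 := 0$, $y_{k+1} := P(y_k)$, which is monotone nondecreasing and converges to $q^*$ (a standard consequence of monotonicity of $P$ starting from $0$). Since $y < q^*$ strictly coordinatewise and $y_k \to q^*$, there is a finite $m$ with $y \leq y_m$. Setting $z_k := (q^* + y_{k-1})/2$ and iterating the identity $q^* - y_k = P'(z_k)(q^* - y_{k-1})$ for $k \geq m+1$ yields the telescoping product
\[
q^* - y_N \;=\; P'(z_N)\,P'(z_{N-1}) \cdots P'(z_{m+1})\,(q^* - y_m), \qquad N \geq m.
\]
Each $z_k \geq y_m \geq y$, so monotonicity of $P'$ in its argument gives $P'(z_k) \geq P'(y)$ entrywise, hence $0 \;\leq\; (P'(y))^{N-m}(q^* - y_m) \;\leq\; q^* - y_N$. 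The right side tends to $0$ as $N \to \infty$, so $(P'(y))^k(q^* - y_m) \to 0$; applying Perron--Frobenius to the irreducible nonneg matrix $P'(y)$ against the positive direction $q^* - y_m$ then forces $\rho(P'(y)) < 1$.

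The main obstacle is the boundary/irreducibility bookkeeping: when $y$ has some zero coordinates, $P'(y)$ need not be literally irreducible (a form-Q row can have zero-valued derivative entries), and if the Kleene orbit happens to satisfy $y_m = q^*$ in finitely many steps then $q^* - y_m$ is not strictly positive. Strong connectedness is the crucial ingredient for dispatching both issues: for the irreducibility gap, one can approximate $y$ from above by a sequence of strictly positive vectors still below $q^*$ (using $q^* > 0$ and continuity/monotonicity of $\rho$ on nonneg matrices) and pass the strict spectral bound to the limit; for the degenerate ``finite-step saturation'' case, strong connectedness plus $q^* > 0$ guarantees that positive coordinates of $q^* - y_m$ propagate under $P'(y)$ to every coordinate within $n$ steps, so the componentwise decay $(P'(y))^k(q^* - y_m) \to 0$ upgrades to $\|(P'(y))^k\|_\infty \to 0$, i.e. $\rho(P'(y)) < 1$, and hence $(I - P'(y))^{-1}$ exists and is nonneg.
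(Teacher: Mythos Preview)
The paper does not actually prove this lemma: it is quoted verbatim as Lemma~6.5 of \cite{rmc} and used as a black box inside the proof of Lemma~\ref{invnonneg}. So there is no ``paper's own proof'' to compare against; the relevant question is whether your argument stands on its own.

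Your main line is sound and is in fact close in spirit to the original argument in \cite{rmc}: sandwich $(P'(y))^{N-m}(q^*-y_m)$ below the Kleene defect $q^*-y_N$ via the degree-$\leq 2$ identity $P(a)-P(b)=P'((a+b)/2)(a-b)$ and monotonicity of $P'(\cdot)$, then let $N\to\infty$. Reducing to the case $y>0$ by monotonicity of $\rho(P'(\cdot))$ is the right move, since that is exactly what buys you irreducibility of $P'(y)$.

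The part that needs tightening is the ``finite-step saturation / some coordinates of $q^*-y_m$ vanish'' paragraph. Your propagation claim (``positive coordinates of $q^*-y_m$ spread to all coordinates under $P'(y)$ within $n$ steps'') fails when $P'(y)$ is irreducible but periodic, so it does not by itself give $\|(P'(y))^k\|_\infty\to 0$. The clean fix is to use the \emph{left} Perron eigenvector: for any irreducible nonnegative $A$ with left Perron vector $u^T>0$, if $A^k v\to 0$ for some $v\ge 0$, $v\neq 0$, then $\rho(A)^k\,u^Tv = u^T A^k v\to 0$ with $u^Tv>0$, hence $\rho(A)<1$. This only needs $q^*-y_m\neq 0$, not strict positivity. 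And $q^*-y_m\neq 0$ is automatic for strongly connected PPSs with $q^*>0$ and $n\ge 2$: if $P(y_{M-1})=q^*$ then for every $i$ one checks (form L and form Q separately, using $q^*>0$) that $P_i$ cannot depend on any coordinate where $y_{M-1}<q^*$, so that set has no incoming edges --- contradicting strong connectedness unless it is empty or all of $\{1,\dots,n\}$, and in the latter case every $P_i$ is constant, again impossible for $n\ge 2$. The $n=1$ case is trivial since then $P'(y)$ is a scalar in $[0,1)$. With this patch your proof goes through.
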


\begin{theorem}{(Theorem 3.6 of \cite{ESY12})} \label{thm:spec-full} 
For any PPS, $x=P(x)$,
in SNF form, which has LFP $0 < q^* < 1$, 
for all $0 \leq y \leq q^*$,
$\rho(P'(y)) < 1$ and $(I-P'(y))^{-1}$ exists and is nonnegative.
\end{theorem}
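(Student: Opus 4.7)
The plan is to decompose the dependency graph of the PPS into strongly connected components (SCCs) and handle each SCC with whichever of the previously-established results applies, then glue the pieces together via the block-triangular structure of $P'(y)$.

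The first step is a \emph{dichotomy}: within any single SCC $S$, either $q^*_i = 1$ for all $i\in S$, or $q^*_i < 1$ for all $i \in S$. This follows from the SNF form: if $i \in S$ has equation of form L, then $q^*_i = 1$ forces $\sum_j a_{ij} + a_0 = 1$ and $q^*_j = 1$ for every $j$ with $a_{ij} > 0$; if $x_i = x_jx_k$ is of form Q, then $q^*_i = 1$ forces $q^*_j = q^*_k = 1$. Chasing this implication along any path inside $S$ yields the dichotomy. So each SCC is either a ``$1$-SCC'' ($q^*_S = \mathbf{1}$) or a ``sub-$1$-SCC'' ($0 < q^*_S < \mathbf{1}$, using the hypothesis $q^* > 0$).

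The second step is to introduce, for each SCC $S$, the reduced PPS $\tilde P_S^*(x_S) := P_S(x_S, q^*_{\bar S})$ obtained by fixing the variables outside $S$ at their LFP values. Because $q^*_{\bar S} > 0$, plugging in these values preserves every intra-$S$ dependency, so $\tilde P_S^*$ is a \emph{strongly connected} PPS whose LFP is exactly $q^*_S$. Now apply the appropriate pre-existing lemma to $\tilde P_S^*$ evaluated at $y_S$:
\begin{itemize}
\item If $S$ is a $1$-SCC, then $y_S < \mathbf{1} = q^*_S$ strictly (because $y < 1$), and Lemma 6.5 of \cite{rmc} gives $\rho(\tilde P_S^{*\prime}(y_S)) < 1$.
\item If $S$ is a sub-$1$-SCC, then $0 < q^*_S < \mathbf{1}$ and $y_S \leq q^*_S$, so Theorem 3.6 of \cite{ESY12} directly gives $\rho(\tilde P_S^{*\prime}(y_S)) < 1$.
\end{itemize}

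The third step is a monotonicity comparison between $\tilde P_S^{*\prime}(y_S)$ and the $S$-$S$ diagonal block $(P'(y))_S$ of the true Jacobian. These two matrices agree on all form-L entries (which are constants) and on form-Q entries $\partial(x_jx_k)/\partial x_j = x_k$ when $k\in S$. They differ only when $k \in \bar S$: the true Jacobian uses $y_k$ whereas $\tilde P_S^{*\prime}$ uses $q^*_k$. Since $y_{\bar S} \leq q^*_{\bar S}$, this gives $0 \leq (P'(y))_S \leq \tilde P_S^{*\prime}(y_S)$ componentwise, so by the Perron--Frobenius monotonicity of spectral radius on non-negative matrices, $\rho((P'(y))_S) \leq \rho(\tilde P_S^{*\prime}(y_S)) < 1$.

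Finally, order the variables in topological SCC order so that $P'(y)$ becomes block lower-triangular with the $(P'(y))_S$ as diagonal blocks. Then $\rho(P'(y)) = \max_S \rho((P'(y))_S) < 1$. Since $P'(y)$ is non-negative, the Neumann series $\sum_{k\ge 0} P'(y)^k$ converges to $(I - P'(y))^{-1}$, which is therefore well-defined and non-negative. The main obstacle I anticipate is justifying the SCC dichotomy cleanly and verifying that $\tilde P_S^*$ remains strongly connected after the substitution (which uses $q^* > 0$ essentially); the Jacobian dominance in step three is routine once the setup is right.
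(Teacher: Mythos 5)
You are being asked to prove precisely the result that this paper imports as Theorem 3.6 of \cite{ESY12} (the paper itself never proves it, only cites it), and your argument is circular at the decisive point. Under the hypothesis $0<q^*<1$, every coordinate of $q^*$ is strictly below $1$, so your ``$1$-SCC'' branch is vacuous and every SCC is a ``sub-$1$-SCC''; for those you invoke Theorem 3.6 of \cite{ESY12} applied to the strongly connected subsystem $\tilde P^*_S$ --- that is, the very theorem under proof. The only independent ingredient you cite, Lemma 6.5 of \cite{rmc}, requires $y_S<q^*_S$ strictly, so it cannot cover the boundary case $y_S=q^*_S$, and that case is exactly the crux: the entire content of the theorem is that the LFP of a PPS with $0<q^*<1$ is not a critical point, i.e.\ that $\rho(P'(q^*))<1$.

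What you do establish correctly is the reduction of the general case to the strongly connected case: the SCC dichotomy, the fact that $\tilde P^*_S$ is strongly connected with LFP $q^*_S$ (this uses $q^*>0$), the entrywise domination $(P'(y))_{S,S}\le \tilde P^{*\prime}_S(y_S)$, and the block-triangular identity $\rho(P'(y))=\max_S\rho((P'(y))_{S,S})$ are all sound. (This parallels the paper's own SCC reduction in the proof of Lemma \ref{invnonneg}, except that the paper substitutes $y$ rather than $q^*$ for the external variables and locates the critical SCC via a Perron eigenvector instead of block-triangularity --- and the paper, too, defers the strongly connected sub-$1$ case to Theorem \ref{thm:spec-full}.) To close the gap you must give an independent argument for a strongly connected PPS with $0<q^*<1$ at the point $y=q^*$. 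The standard route is to show $P'(q^*)(\mathbf{1}-q^*)\le \mathbf{1}-q^*$ with strict inequality at some coordinate on which every variable depends, and then conclude $\rho(P'(q^*))<1$ by a quantitative Perron--Frobenius argument of the kind the paper develops in Lemma \ref{quantpf} and uses to prove Theorem \ref{normbounds}(i); without something of this sort the proposal does not prove the theorem.
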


\begin{proof}[Proof of Lemma  \ref{invnonneg}]
Consider a PPS, $x=P(x)$, with LFP  $q^* > 0$, and a vector 
$0 \leq y \leq q^*$, such that $y < 1$.
Note that all we need to establish is that $\rho(P'(y)) < 1$,
because it then follows by standard facts  
(see, e.g., \cite{HornJohnson85}) that 
$(I-P'(y))^{-1}$ exists and is equal to $\sum^\infty_{i=0} (P'(y))^i \geq 0$.

Let us first show that if $x=P(x)$ is strongly connected, then
$\rho(P'(y)) < 1$.
To see this, note that if $x=P(x)$ is strongly connected, then every
variable depends on every other, and thus if 
there exists any $i \in \{1,\ldots,n\}$ such that $q^*_i < 1$, then 
it must be the case that for all $j \in \{1,\ldots,n\}$, we have 
$q^*_j < 1$. 
Thus, either $q^* = 1$, or else $0< q^* < 1$.
If $q^* = 1$, then since $y < 1$, we have $y < q^*$, and thus, 
by Lemma \ref{lem:spec-below-q-scc}, we have $\rho(P'(y)) < 1$.
If, on the other hand, $0 < q^* < 1$, then
since $0 \leq y \leq q^*$, by Theorem \ref{thm:spec-full},
we have $\rho(P'(y)) < 1$.

Next, consider an arbitrary PPS, $x=P(x)$, that 
is not necessarily strongly connected.
Recall the variable dependency graph $H$ of $x=P(x)$.
We can partition the variables into sets
$S_1,\ldots,S_k$ which form the SCCs  of $H$.
Consider the DAG, $D$, of SCCs,  whose nodes are the sets $S_i$,
and for which there is an edge from $S_i$ to $S_j$ iff
in the dependency graph $H$ 
there is a node $i' \in S_i$ with an edge to a node in $j' \in S_j$.

Consider the matrix $P'(y)$.  Our aim is to show that
$\rho(P'(y)) < 1$.   
Since we assume $q^* > 0$, $0 \leq y \leq q^*$, and $y < 1$, 
it clearly suffices to show that $\rho(P'(y)) < 1$ holds
in the case where we additionally insist that $y > 0$, 
because then for any other $z$ such that $0 \leq z \leq y$, we would have
$\rho(P'(z)) \leq \rho(P'(y)) < 1$.

So, assuming also that $y > 0$, consider the $n \times n$-matrix $P'(y)$.
To keep notation clean, we let $A := P'(y))$.  
For the $n \times n$ matrix $A$, we can consider its underlying 
{\em dependency} graph,
$H = (\{1,\ldots,n\},E_H)$, whose nodes are $\{1,\ldots,n\}$, and 
where there is an edge from $i$ to $j$ iff $A_{i,j} >0$.
Notice however that, since $ y > 0$,  this graph is precisely the same  
graph as the dependency graph $H$ of $x=P(x)$, 
and thus it has the same SCCs, and the same DAG of SCCs, $D$.
Let us sort the SCCs, 
so that we can assume $S_1,\ldots,S_k$ are topologically
sorted with respect to the partial ordering defined by the DAG $D$.
In other words, for any variable indices $i \in S_a$ and $j' \in S_b$ 
if  $(i,j) \in E_H$, then $a \leq b$.

Let $S \subseteq \{1,\ldots,n\}$ be any non-empty subset of indices, 
and let $A[S] $ denote the principle submatrix of $A$ defined by indices in $S$.
It is a well known fact that $0 \leq \rho(A[S]) \leq \rho(A)$.
(See, e.g, Corollary 8.1.20 of \cite{HornJohnson85}.)

Since $A \geq 0$,   $\rho(A)$ is an eigenvalue of $A$,
and has an associated non-negative eigenvector 
$v \geq 0$, $v \neq 0$  (again see, e.g., Chapter 8 of \cite{HornJohnson85}).
In other words, 
$$  Av = \rho(A) v $$
Firstly, if $\rho(A) = 0$, then we are of course trivially done.  So we can
assume w.l.o.g. that $\rho(A) > 0$.
Now, 
if $v_i > 0$, then for every $j$ such that $(j,i) \in E_H$,
we have $(Av)_j > 0$, and thus since $(Av)_j = \rho(A)v_j$,
we have $v_j > 0$.   Hence, repeating this argument, if $v_i > 0$ 
then for every $j$ that has a path to $i$ 
in the dependency graph $H$, we have $v_j >0$.

Since $v \neq 0$, it must be the case that there
is exists some SCC, $S_c$, of $H$ such that for every 
variable index $i \in S_c$,  $v_i > 0$,
and furthermore, such that $c$ is the maximum index for such an SCC
in the topologically sorted list $S_1, \ldots, S_k$,
i.e., such that
for all $d > c$, and for all $j \in S_d$, we have $v_j = 0$.

First, let us note that it must be the case that $S_c$ is a 
{\em non-trivial} SCC.
Specifically,
let us call an SCC, $S_r$ of $H$  {\em trivial} if $S_r=\{i\}$
consists of only a single variable index, $i$, and 
furthermore, such that ${\mathbf 0}=(A)_i = (P'(y))_i$,
i.e., that row $i$ of the matrix $A$ is all zero.
This can not be the case for $S_c$, because for any variable $i \in S_c$,
we have $v_i > 0$, and thus $(Av)_i = \rho(A) v_i > 0$.

Let us consider the principal submatrix $A[S_c]$ of $A$.
We claim that $\rho(A[S_c]) = \rho(A)$.
To see why this is the case, note that  $Av= \rho(A)v$,
and for every $i \in S_c$,   we have $(Av)_i = \sum_j a_{i,j} v_j = \rho(A)v_i$.
But $v_j = 0$ for every $j \in S_d$ such that $d > c$,  and
furthermore $a_{i,j} = 0$ for every $j \in S_{d'}$ such that $d' < c$.

Thus, if we let $v_{S_c}$ denote the subvector of $v$ corresponding
to the indices in $S_c$,  then we have just established that 
$A[S_c] v_{S_c} = \rho(A) v_{S_c}$, and thus that 
$\rho(A[S_c]) \geq \rho(A)$.  But since $A[S_c]$ is
a principal submatrix of $A$, we also know easily
(see, e.g, Corollary 8.1.20 of \cite{HornJohnson85}), that 
$\rho(A[S_c]) \leq  \rho(A)$, so $\rho(A[S_c]) = \rho(A)$.

We are almost done.   Given the original PPS, $x=P(x)$,
for any subset $S \subseteq \{1,\ldots,n\}$ of variable indices, 
let
$x_S = P_S(x_S,x_{D_S})$ 
denote  the subsystem of $x=P(x)$ associated with the vector $x_S$ of 
variables in set $S$, where $x_{D_S}$ denotes the variables not in $S$.

Now, note that  $x_{S_c} = P_{S_c}(x_{S_c},y_{D_{S_c}})$ is itself a PPS.
Furthermore, it is a {\em strongly connected} PPS, precisely
because $S_c$ is a strongly connected component of the 
dependency graph $H$, and because $y > 0$.
Moreover, the Jacobian matrix  of
$P_{S_c}(x_{S_c},y_{D_{S_c}}))$, evaluated at $y_{S_c}$, which we
denote by $P'_{S_c}(y)$, is precisely the principal submatrix $A[S_c]$
of $A$.  Since $x_{S_c} = P_{S_c}(x_{S_c},y_{D_{S_c}})$ is a strongly
connected PPS, we have already argued that it must be the case that
$\rho(P'_{S_c}(y)) < 1$.  Thus  since $P'_{S_c}(y) = A[S_c]$, we
have $\rho(A[S_c]) = \rho(A) < 1$.
This completes the proof.
\end{proof}

\section{Omitted Material from Section 3}

\subsection{Proof of Theorem \ref{mptime}}

\noindent {\bf Theorem  \ref{mptime}}
{\em Given any max/minPPS, $x=P(x)$, with
LFP $0 < q^* < 1$.  If we use the
``rounded-down-GNM'' algorithm with rounding parameter 
$h = j +2+4|P|$, then 
the iterations are all defined,
and for every $k \geq 0$ we have $0 \leq x^{(k)} \leq q^*$, and 
furthermore after $h = j + 2 + 4|P|$ iterations we
have:
$$\|q^*- x^{(j+2+4|P|)}\|_\infty \leq 2^{-j}$$}\\

We prove this using a few lemmas.

\begin{lemma}\label{round-exists-GNM}
If we run the rounded-down-GNM
starting with $x^{(0)} := \textbf{0}$
on a max/minPPS, $x=P(x)$, with LFP $q^*$,  ${\mathbf 0} < q^* < {\mathbf 1}$, 
then for all $k \geq 0$, $x^{(k)}$ is well-defined and $0 \leq x^{(k)} 
\leq q^*$.
\end{lemma}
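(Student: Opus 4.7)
The plan is to prove the statement by a straightforward induction on $k$, invoking Proposition \ref{subgoal} (which was established separately for maxPPSs and minPPSs in the preceding subsections) to handle the ``analytic'' content of the inductive step, and then using only elementary bookkeeping to handle the effect of the coordinate-wise rounding.

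For the base case $k=0$, we have $x^{(0)} := \mathbf{0}$, and since we assume $q^* > \mathbf{0}$, clearly $0 \leq x^{(0)} \leq q^*$; in particular $x^{(0)}$ is trivially well-defined.

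For the inductive step, assume $0 \leq x^{(k)} \leq q^*$. By Proposition \ref{subgoal}, part 1, the iterate $I(x^{(k)})$ is well-defined and satisfies $I(x^{(k)}) \leq q^*$. (For the LP-based computation of $I(x^{(k)})$, note also that Proposition \ref{prop:comp-I} guarantees that this computation is well-defined on a rational input $x^{(k)}$.) Now, by the definition of the rounding step in the algorithm, for each coordinate $i$, $x^{(k+1)}_i$ is the largest non-negative multiple of $2^{-h}$ with $x^{(k+1)}_i \leq \max\{0, I(x^{(k)})_i\}$. This makes $x^{(k+1)}_i$ well-defined, and immediately gives $x^{(k+1)}_i \geq 0$. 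For the upper bound, note that $x^{(k+1)}_i \leq \max\{0, I(x^{(k)})_i\}$: if $I(x^{(k)})_i \geq 0$ this is $\leq I(x^{(k)})_i \leq q^*_i$, and if $I(x^{(k)})_i < 0$ it is $0 < q^*_i$ (using the assumption $q^* > \mathbf{0}$). In either case $x^{(k+1)}_i \leq q^*_i$, completing the induction.

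There is no real obstacle here; the only subtle point is that the rounding step might push a coordinate of $I(x^{(k)})$ that is slightly negative up to $0$, so the claim $0 \leq x^{(k+1)}$ is not just monotone in the unrounded iterate but relies on the explicit ``$\max\{0,\cdot\}$'' in the algorithm's definition, together with the hypothesis $q^* > \mathbf{0}$ to ensure the upper bound survives the $\max$. All of the genuine work has already been done in establishing Proposition \ref{subgoal}.
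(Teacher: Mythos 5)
Your proof is correct and follows essentially the same route as the paper's: induction on $k$, with Proposition \ref{subgoal} supplying well-definedness and the bound $I(x^{(k)}) \leq q^*$, and the explicit $\max\{0,\cdot\}$ in the rounding step together with $q^* > \mathbf{0}$ handling the non-negativity and preservation of the upper bound. Your extra remark about slightly negative coordinates being clipped to $0$ matches the paper's own parenthetical observation.
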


\begin{proof}
The base case $x^{(0)} = 0$ is immediate for both. 

For the induction step,  suppose the claim
holds for $k$ and thus $0 \leq x^{(k)} \leq q^*$. 
From Proposition \ref{subgoal},
$I(x^{(k)})$ is well-defined and $I(x^{(k)}) \leq q^*$.
Furthermore, since $x^{(k+1)}$ is obtained from $I(x^{(k)})$ by
rounding down all coordinates, except setting to $0$ any that are negative,
and since obviously $q^* > 0$,
we have that    $0 \leq x^{(k+1)} \leq q^*$.

\end{proof}

\begin{lemma}\label{lem:explicit-bound-rounded-GNM}
For a max/minPPS, $x=P(x)$, with LFP $q^*$, such that $0 < q^* < 1$, if we
apply rounded-down-GNM with parameter $h$,
starting at $x^{(0)} := \textbf{0}$, then for all $j' \geq 0$, we have:
$$ \| q^* - x^{(j'+1)} \|_{\infty} \leq 2^{-j'} + 2^{-h+1 + 4|P|}$$
\end{lemma}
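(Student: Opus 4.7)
The plan is to prove the inequality by induction on $k := j'+1$, working not with the $\ell_\infty$-norm directly but with a relative error measure $\lambda_k \geq 0$ defined coordinatewise by $q^* - x^{(k)} \leq \lambda_k(1-q^*)$. Lemma~\ref{round-exists-GNM} guarantees $0 \leq x^{(k)} \leq q^*$, so such a $\lambda_k$ exists, and since $\|1-q^*\|_\infty \leq 1$ we get $\|q^* - x^{(k)}\|_\infty \leq \lambda_k$ for free. The base case $k=0$ uses Lemma~\ref{gap}: from $q^* - x^{(0)} = q^* \leq \mathbf{1}$ and $(1-q^*)_i \geq 2^{-4|P|}$ one has $\mathbf{1} \leq 2^{4|P|}(1-q^*)$ coordinatewise, so $\lambda_0 \leq 2^{4|P|}$.

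The key step is establishing the one-step recurrence
$$\lambda_{k+1} \leq \tfrac{1}{2}\lambda_k + 2^{-h+4|P|}.$$
Proposition~\ref{subgoal}(2), applied with $x := x^{(k)}$ and $\lambda := \lambda_k$, yields $q^* - I(x^{(k)}) \leq (\lambda_k/2)(1-q^*)$. For the rounding step, when $I(x^{(k)})_i \geq 0$ the algorithm rounds down by at most $2^{-h}$, so $x^{(k+1)}_i \geq I(x^{(k)})_i - 2^{-h}$; when $I(x^{(k)})_i < 0$ the algorithm sets $x^{(k+1)}_i = 0 \geq I(x^{(k)})_i$, so the same inequality holds a fortiori. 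Either way, $q^*_i - x^{(k+1)}_i \leq (q^*_i - I(x^{(k)})_i) + 2^{-h}$, and hence $q^* - x^{(k+1)} \leq (\lambda_k/2)(1-q^*) + 2^{-h}\mathbf{1}$. A second application of Lemma~\ref{gap} to absorb $2^{-h}\mathbf{1} \leq 2^{-h+4|P|}(1-q^*)$ then gives the recurrence.

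Solving the recurrence is then routine: a short induction gives $\lambda_k \leq 2^{4|P|-k} + 2^{-h+4|P|+1}$ for all $k \geq 0$ (the second summand being essentially twice the fixed point of the affine map $\lambda \mapsto \lambda/2 + 2^{-h+4|P|}$). Substituting $k = j'+1$ and recalling $\|q^* - x^{(j'+1)}\|_\infty \leq \lambda_{j'+1}$ produces a bound of exactly the claimed form: a first term decaying exponentially in $j'$, plus the rounding-dominated constant $2^{-h+1+4|P|}$. I expect the main delicacy to be the rounding bookkeeping, especially the case $I(x^{(k)})_i < 0$ --- one must observe that clipping at zero does not break the $2^{-h}$ additive slack (since $q^*_i \geq 0$ and $I(x^{(k)})_i < 0$ already make $q^*_i - 0 \leq q^*_i - I(x^{(k)})_i$) --- together with invoking Lemma~\ref{gap} in two distinct roles: once to bootstrap the initial ratio $\lambda_0$, and once per iteration to convert the absolute rounding error back into a multiple of $(1-q^*)$.
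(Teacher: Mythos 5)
Your proposal follows the paper's proof essentially verbatim: the same relative-error measure $\lambda_k$ with $q^*-x^{(k)}\leq \lambda_k(\mathbf{1}-q^*)$, the same base case $\lambda_0\leq 1/(\mathbf{1}-q^*)_{\min}\leq 2^{4|P|}$ from Lemma~\ref{gap}, the same invocation of Proposition~\ref{subgoal} for the halving step, the same treatment of the rounding/clipping (including the observation that clipping a negative $I(x^{(k)})_i$ to $0$ only helps), and the same absorption of the $2^{-h}\mathbf{1}$ rounding error into a multiple of $\mathbf{1}-q^*$. All of that is sound, and your solution of the affine recurrence is correct.

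The one point to flag is the very last step. Your recurrence yields $\lambda_{j'+1}\leq 2^{4|P|-(j'+1)}+2^{-h+4|P|+1}$, hence $\|q^*-x^{(j'+1)}\|_\infty\leq 2^{4|P|-j'-1}+2^{-h+1+4|P|}$. The second summand matches the statement exactly, but the first is $2^{4|P|-1}$ times larger than the stated $2^{-j'}$; declaring this to be ``exactly the claimed form'' glosses over that factor. You are in good company: the paper's own proof ends with the display $(2^{-k}+2^{-h+1})2^{4|P|}=2^{-k}+2^{-h+1+4|P|}$, whose left-hand side actually equals $2^{-k+4|P|}+2^{-h+1+4|P|}$, i.e.\ the same $2^{4|P|}$ factor is silently dropped from the first term. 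What both arguments genuinely establish is the weaker bound $\|q^*-x^{(j'+1)}\|_\infty\leq 2^{-j'+4|P|}+2^{-h+1+4|P|}$. This is harmless downstream: in Theorem~\ref{mptime} one takes $j'=j+4|P|+1$ and $h=j+2+4|P|$, and the weaker bound still gives $2^{-j-1}+2^{-j-1}=2^{-j}$. But as a freestanding claim for all $j'\geq 0$, the first summand $2^{-j'}$ is not what either derivation delivers, so to make your write-up airtight you should either carry the $2^{4|P|}$ factor in the first term or adjust the statement accordingly.
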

\begin{proof} 
Since $x^{(0)} := 0$:

\begin{equation}
\label{base-equation}
q^* - x^{(0)} = q^* \leq \textbf{1}   \leq \frac{1}{(\textbf{1} - q^*)_\text{min}} (\textbf{1} - q^*)
\end{equation}
For any $k \geq 0$, if
$q^*-x^{(k)} \leq \lambda (\textbf{1} - q^*)$, then
by Proposition   \ref{subgoal}(which was proved separately
for maxPPSs and minPPSs,  in Lemmas  \ref{maxhalf} 
and   \ref{minhalf}, respectively),
we have:

\begin{equation}
\label{eq:half-in-round}
q^*-I(x^{(k)})\leq (\frac{\lambda}{2}) (\textbf{1} - q^*)
\end{equation}
Observe that after every iteration $k > 0$,  in every coordinate $i$ we have:

\begin{equation}
\label{eq:almost-bigger-in-round}
x_i^{(k)} \geq I(x^{(k-1)})_i - 2^{-h}
\end{equation}
This holds simply because we are rounding down $I(x^{(k-1)})_i$ by
at most $2^{-h}$, unless it is negative in which case 
$x^{(k)}_i =0> I(x^{(k-1)})_i$.
Combining the two inequalities (\ref{eq:half-in-round}) and
(\ref{eq:almost-bigger-in-round})
 yields the following inequality:

$$q^*-x^{(k+1)}\leq (\frac{\lambda}{2}) (\textbf{1} - q^*) + 2^{-h} \textbf{1} \leq
(\frac{\lambda}{2} + \frac{2^{-h}}{(\textbf{1} - q^*)_\text{min}}) (\textbf{1} - q^*)$$

\noindent Taking inequality (\ref{base-equation}) as the base case 
(with $\lambda =
\frac{1}{(\textbf{1} - q^*)_\text{min}}$), by 
induction on $k$, for all 
$k \geq 0$:

$$q^*-x^{(k+1)} \leq (2^{-k} + \sum_{i = 0}^k 2^{-(h+i)}) \frac{1}{(\textbf{1} - q^*)_\text{min}} (\textbf{1} - q^*)$$

\noindent But $\sum_{i = 0}^k 2^{-(h+i)} \leq 2^{-h+1}$ and 
$\frac{\|\textbf{1} - q^*\|_\infty}{(\textbf{1} - q^*)_\text{min}} 
\leq \frac{1}{ (\textbf{1} - q^*)_\text{min}} \leq 2^{4|P|}$,
by Lemma \ref{gap}.  Thus:
$$q^*-x^{(k+1)} \leq (2^{-k} + 2^{-h+1})2^{4|P|} \textbf{1}$$
Clearly, we have $q^*-x^{(k)} \geq 0$ for all $k$. Thus we have shown
that for all $k \geq 0$:
$$\|q^*-x^{(k+1)}\|_\infty \leq (2^{-k} + 2^{-h+1})2^{4|P|} = 2^{-k} + 2^{-h + 1 + 4 |P|}.$$
\end{proof}

\begin{proof}[{\bf Proof of Theorem  \ref{mptime}.}]
In
Lemma \ref{lem:explicit-bound-rounded-GNM}
let $j' := j+4|P|+1$ and $h := j + 2 + 4|P|$.  We have:
$\| q^* - x^{(j + 2 + 4|P|)} \|_{\infty} \leq 2^{-(j+1+4|P|) } +
2^{-(j+1)} \leq 2^{-(j+1)} + 2^{-(j+1)} = 2^{-j}.$
\end{proof}

\section{Omitted Material from Section 4.}

\subsection{Bounds on the norm of $(I-P'(x))^{-1}$.}

We aim to prove Theorem \ref{normbounds}, which we re-state here.
Let us first recall some definitions related to the dependency graph
of variables in a PPS.

For a PPS, $x=P(x)$ with $n$ variables, its variable
{\em dependency graph} is defined to be the digraph $H = (V,E)$, with 
vertices $V=\{x_1,\ldots,x_n\}$, such that 
$(x_i,x_j) \in E$ iff in $P_i(x) \equiv \sum_{r \in R_i} p_r x^{v(\alpha_r)}$ 
there is a coefficient $p_r > 0$ such that $v(\alpha_r)_j > 0$.
Intuitively, $(x_i, x_j) \in E$ means that $x_i$ ``depends directly''
on $x_j$.
A MPS or PPS, $x=P(x)$, is called {\bf\em strongly connected} if its 
dependency graph $H$
is strongly connected.
\\

\noindent {\bf Theorem \ref{normbounds}.} {\em 
If $x=P(x)$ is a PPS with LFP $q^* > 0$  then
\begin{itemize}
\item[{\bf (i)}] If $q^* < 1$ and $0 \leq y < 1$, then 
 $(I-P'(\frac{1}{2}(y+q^*)))^{-1}$ exists and is non-negative, and
$$\|(I-P'(\frac{1}{2}(y+q^*)))^{-1}\|_\infty \leq  2^{10|P|} \text{max } 
\{2(1-y)_{\min}^{-1}, 2^{|P|}\}$$
\item[{\bf (ii)}] If $q^* = 1$ and $x = P(x)$ is strongly connected (i.e. every variable depends on every other) and $0 \leq y <  1 = q^*$, then
 $(I -P'(y))^{-1}$ exists and is non-negative, and 
$$\|(I -P'(y))^{-1}\|_\infty \leq 2^{4|P|} \frac{1}{(1-y)_{\min}}$$
\end{itemize}}

Before proving this Theorem, we shall need
to develop some more definitions and lemmas.

\begin{definition} A {\em path} in the dependency graph $H=(V,E)$ 
of a PPS $x = P(x)$ is a sequence of variables $x_{k_1}$, ... ,$x_{k_m}$,
with $m \geq 2$,
such that $(x_{k_{i}},x_{k_{i+1}}) \in E$, for $i \in \{1,\ldots,m-1\}$.  
In other words,  for each $i \in \{1,\ldots,m-1\}$, 
$x_{k_{i+1}}$ appears (with a non-zero coefficient) in the polynomial $P_{k_{i}}(x)$.

We say 
that $x_i$ {\bf\em depends on} $x_j$ (directly or indirectly) 
if there is a path in the dependency graph starting at $x_i$ and ending at $x_j$.
\end{definition}         

\noindent We shall need to be  more quantitative about dependency: 

\begin{lemma} \label{deplbound} Given a PPS $x=P(x)$ in SNF form, and variables $x_i$,$x_j$:
\begin{itemize}
\item[(i)] If $x_i$ depends on $x_j$ then there is a positive integer $k$, with $1 \leq k \leq n$,
such that $$(P'(1)^k)_{ij} \geq 2^{-|P|}$$
\item[(ii)] If $(P'(1)^k)_{ij} > 0$ for some positive
integer $k$, with $1 \leq k \leq n$, then $x_i$ depends on $x_j$.
\item[(iii)] If $x_i$ depends on $x_j$ "only via variables of Form L", i.e., 
if there is a path $x_{l_1},\ldots,x_{l_m}$ in the dependency graph such
that $l_1 = i$ and $l_m = j$, 
and such that for each $1 \leq h \leq m-1$,  
$x_{l_h} = P_{l_h}(x) = p_{l_h,0} + \sum_{g=1}^n p_{l_h,g}x_g$ has form L with 
$p_{l_h, l_{h+1}} > 0$, then there is a $1 \leq k \leq n$ such that, 
for any vector $x$, such that $0 \leq x \leq 1$, $$(P'(x)^k)_{ij} \geq 2^{-|P|}$$
\end{itemize}
\end{lemma}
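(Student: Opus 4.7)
The plan is to exploit the particularly simple form the Jacobian $P'(x)$ takes for a PPS in SNF. If row $i$ has Form L, with $P_i(x) = a_{i,0} + \sum_j a_{i,j} x_j$, then $P'(x)_{i,j} = a_{i,j}$, a nonnegative rational constant independent of $x$. If row $i$ has Form Q, with $P_i(x) = x_j x_k$, then $P'(x)_{i,j} = x_k$ and $P'(x)_{i,k} = x_j$, both of which equal $1$ at $x = \mathbf{1}$. In either case, $P'(\mathbf{1})_{i,j} > 0$ iff there is a direct edge $(x_i, x_j)$ in the dependency graph of $x=P(x)$. All three parts will then follow from reading walks in the dependency graph as positive entries of powers of these matrices, and vice versa.

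For part (i), I would choose a shortest (hence simple) path $x_i = x_{k_1}, x_{k_2}, \ldots, x_{k_m} = x_j$ in the dependency graph. Simplicity forces $m - 1 \leq n$, so $k := m - 1 \in \{1, \ldots, n\}$. Since every entry of $P'(\mathbf{1})$ is nonnegative, $(P'(\mathbf{1})^k)_{ij}$ is a sum over length-$k$ walks of the product of edge weights, so it is at least $\prod_{h=1}^{k} P'(\mathbf{1})_{k_h, k_{h+1}}$. Each such factor is either $1$ (from a Form Q row) or a positive rational coefficient $a_{k_h,k_{h+1}}$ (from a Form L row). A positive rational $a_{k_h,k_{h+1}}$ of bit encoding size $s_h$ satisfies $a_{k_h,k_{h+1}} \geq 2^{-s_h}$. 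Since the path is simple the $(k_h, k_{h+1})$ pairs are all distinct, so their encodings contribute disjoint pieces of the sparse encoding of $P$; hence $\sum_h s_h \leq |P|$, yielding $\prod_h a_{k_h, k_{h+1}} \geq 2^{-|P|}$ as required.

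Part (ii) is the easy converse direction: if $(P'(\mathbf{1})^k)_{ij} > 0$ then at least one summand in the walk-sum expression is positive, producing a sequence $i = l_0, l_1, \ldots, l_k = j$ with $P'(\mathbf{1})_{l_h, l_{h+1}} > 0$ for every $h$, and each such positive entry corresponds to a direct dependency edge. Concatenating these edges yields a path from $x_i$ to $x_j$ in the dependency graph.

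For part (iii), the key observation is that when the intermediate vertices $x_{l_1}, \ldots, x_{l_{m-1}}$ are all of Form L, every edge weight along the path equals $P'(x)_{l_h, l_{h+1}} = p_{l_h, l_{h+1}}$, a positive rational constant that does not depend on $x$. Thus the product bound from part (i) holds uniformly for every $x$ with $0 \leq x \leq \mathbf{1}$, giving $(P'(x)^{m-1})_{ij} \geq 2^{-|P|}$. The main (modest) obstacle is the encoding-size bookkeeping in part (i): one must verify that the coefficients along distinct edges of a simple path genuinely occupy disjoint portions of the sparse encoding of $P$, so that their individual lower bounds $2^{-s_h}$ multiply into a product no smaller than $2^{-|P|}$.
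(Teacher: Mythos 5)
Your proposal is correct and follows essentially the same route as the paper's proof: pick a shortest (hence simple) dependency path, lower-bound the corresponding entry of the matrix power by the product of the edge weights along that path, and observe that the non-unit weights are distinct coefficients occupying disjoint parts of the encoding of $P$, so their product is at least $2^{-|P|}$; parts (ii) and (iii) are likewise handled identically (walk expansion for the converse, and constancy of Form-L edge weights for the uniform bound).
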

\begin{proof}\mbox{}
\begin{itemize}
\item[(i)] 
Let the sequence of variables $x_{l_1}, \ldots, x_{l_k}$ 
constitute a shortest path from $x_i$ and $x_j$, such that $k \geq 2$.
Such a shortest path exists, since $x_i$ depends on $x_j$.
So $x_i = x_{l_1}$, and $x_j = x_{l_k}$, and
$x_{l_{h+1}}$ appears in the expression for $P_{l_h}(x)$,  and 
$1 \leq h \leq k-1$. 
Note that we must have $k \leq n$.
Thus $(P'(1))_{l_{h}l_{h+1}}>0$
for $1 \leq h \leq k-1$.  But note that since $P'(1)$ is a non-negative matrix,
$(P'(1)^{k-1})_{ij} \geq \prod_{h=1}^{k-1} (P'(1))_{l_{h}l_{h+1}}$. 
Since we have chosen a shortest (non-empty) path from $x_i$ to $x_j$,
and since $x=P(x)$ is in SNF form,
each $(P'(1))_{l_{h}l_{h+1}}$ that is not exactly $1$ must be a distinct rational coefficient 
in $P$, not appearing elsewhere along the path, 
and thus $\prod_{h=1}^{k-1} (P'(1))_{l_{h}l_{h+1}} \geq 2^{-|P|}$.
\item[(ii)] For $k \geq 1$, we can expand $(P'(1)^k)_{ij}$ into a sum of 
$n^{k-1}$ terms of the form 
$\prod_{h=1}^{k} (P'(1))_{l_{h}l_{h+1}}$ with $l_1=i$, $l_{k+1} = j$ and 
$(l_2,\ldots,l_{k}) \in \{1,...,n\}^{k-1}$. At least one of these has 
$\prod_{h=1}^{k} (P'(1))_{l_{h}l_{h+1}} > 0$. 
In that case, $x_{h_1},...,x_{h_{k+1}}$ is a path in the dependency graph starting at 
$x_i$ and ending at $x_j$. 
\item[(iii)]  Let us choose
$x_{l_1}, \ldots, x_{l_k}$ to be a shortest path from $x_i$ to $x_j$,
with $k \geq 2$, and such that every equation $x_{l_h} = P_{l_h}(x)$ along
the path, for all $h \in \{1,\ldots,k-1\}$ has form L.
Clearly, we must have $k \leq n$. 
By monotonicity of $P'(z)$ in $z \geq 0$, we have
$(P'(1)^{k-1})_{ij} \geq P'(x)^{k-1}$.
Furthermore, since $x_{l_1}, \ldots, x_{l_{k}}$ is a path
from $x_i$ to $x_j$, we have 
$(P'(x))^{k-1}_{i,j} \geq \prod_{h=1}^{k-1} (P'(x))_{l_{h}l_{h+1}}$. 
Moreover, since each equation 
$x_{l_h} = P(x)_{l_h}$ has Form L, for every $h \in \{1,\ldots,k-1\}$, we must have 
$(P'(x))_{l_{h}l_{h+1}} = (P'(1))_{l_{h}l_{h+1}}$  (because all the 
partial derivatives of linear expressions are constants). But we argued in (i) that,
when $x_{l_1}, \ldots, x_{l_k}$ constitutes a shortest path from $x_i$ to $x_j$,  
$\prod_{h=1}^{k-1} (P'(1))_{l_{h}l_{h+1}} \geq 2^{-|P|}$.
\end{itemize}\end{proof}

We need a basic result from the Perron-Frobenius theory of non-negative matrices.
We are not aware of a source that contains
a statement exactly equivalent to (or implying) the following Lemma, so we shall 
provide a proof, however it is entirely possible (and likely) that such a Lemma
has appeared elsewhere. 
Lemma 19 of \cite{EWY10} provides a similar result for the case when the matrix 
$A$ is irreducible.
\begin{lemma} \label{quantpf} If $A$ is a non-negative matrix, 
and vector $u > 0$ is such that $Au \leq u$ and $\|u\|_\infty \leq 1$,   
and $\alpha , \beta \in (0,1)$ are constants such that for every $i \in \{1,...n\}$, 
one of the following two conditions holds:
\begin{itemize}
\item[(I)]   $(Au)_i \leq (1 - \beta) u_i$

\item[(II)]  there is some $k$, $1 \leq k \leq n$, and  
some $j$, such that $(A^k)_{ij} \geq \alpha$ and $(Au)_j \leq (1 - \beta) u_j$.
\end{itemize} 
then $(I-A)$ is non-singular and
$$\|(I-A)^{-1}\|_\infty \leq \frac{n}{u_{\min}^2\alpha \beta}$$
\end{lemma}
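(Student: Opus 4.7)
The plan is to produce a strictly contracting direction for $A$ by partial Neumann summation, then convert coordinatewise contraction into an $\ell_\infty$ bound on $(I-A)^{-1}$.

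First I would set $w = u - Au \geq 0$ and split indices into $S_1 = \{i : (Au)_i \leq (1-\beta)u_i\}$ (type I) and $S_2$ (type II). Note $w_i \geq \beta u_i \geq \beta u_{\min}$ for $i \in S_1$. Then I would consider the ``partial resolvent'' vector
\[
v \;=\; \sum_{k=0}^{n} A^k u.
\]
A telescoping calculation gives $v - Av = u - A^{n+1}u = \sum_{k=0}^{n} A^k w$. Because $A \geq 0$ and $Au \leq u$, we get $A^k u \leq u$ for every $k$, so $v \geq u > 0$ (hence $v_{\min} \geq u_{\min}$) and $v \leq (n+1)u$.

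The key coordinatewise lower bound on $v-Av$ comes from the two cases. For $i \in S_1$, the $k=0$ summand already gives $(v-Av)_i \geq w_i \geq \beta u_{\min}$. For $i \in S_2$, hypothesis (II) supplies $k_i \in \{1,\dots,n\}$ and $j_i$ with $(A^{k_i})_{ij_i} \geq \alpha$ and $j_i \in S_1$ (since $(Au)_{j_i}\leq(1-\beta)u_{j_i}$); the $k=k_i$ summand then yields $(A^{k_i}w)_i \geq \alpha \cdot w_{j_i} \geq \alpha\beta u_{\min}$. Thus uniformly $(v-Av)_i \geq \alpha\beta u_{\min}$ in every coordinate. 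Combined with $v_i \leq (n+1) u_i \leq (n+1) \|u\|_\infty \leq n+1$, this gives
\[
(v-Av)_i \;\geq\; \delta\, v_i \quad \text{with}\quad \delta = \frac{\alpha\beta u_{\min}}{n+1},
\]
i.e., $Av \leq (1-\delta) v$ with $v > 0$ and $\delta \in (0,1)$.

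Finally I would convert this into the claimed norm bound. By Collatz--Wielandt (or direct induction), $Av \leq (1-\delta)v$ together with $v>0$ implies $\rho(A) \leq 1-\delta < 1$, so $(I-A)^{-1}$ exists, equals $\sum_{k\geq 0} A^k \geq 0$, and monotonicity gives $A^k v \leq (1-\delta)^k v$, hence $(I-A)^{-1} v \leq \delta^{-1} v$. For any unit vector $x$ with $|x| \leq \mathbf{1} \leq v/v_{\min}$, applying $(I-A)^{-1}$ (which is nonnegative) and taking $\|\cdot\|_\infty$ yields $\|(I-A)^{-1}\|_\infty \leq \|v\|_\infty/(\delta v_{\min}) \leq (n+1)^2/(\alpha\beta u_{\min}^2)$, which is within a small constant of the stated bound $n/(u_{\min}^2\alpha\beta)$; I expect the main technical obstacle to be shaving the $(n+1)^2$ down to $n$, which should be possible either by summing only to $n-1$ (using that the $k=0$ summand already handles $S_1$ and condition (II) uses $1 \leq k \leq n$, allowing a careful reindexing) or by bounding $v_i/u_i$ more tightly per row.
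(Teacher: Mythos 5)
Your argument is structurally very close to the paper's: you form partial Neumann sums of $w=u-Au$, use condition (II) to transport the slack $\beta u_j$ available at a type-(I) coordinate $j$ into every coordinate via $(A^k)_{ij}\ge\alpha$, and finish with a spectral-radius/Neumann-series argument. Every step you write is correct, and in particular the non-singularity and non-negativity of $(I-A)^{-1}$ are fully established. The gap is quantitative but genuine: the bound you obtain is $(n+1)^2/(u_{\min}^2\alpha\beta)$, and $(n+1)^2$ versus $n$ is a discrepancy of order $n$, not ``a small constant.'' Neither of your proposed repairs closes it: truncating the sum at $k=n-1$ only improves $(n+1)^2$ to about $n^2$ (and would require $k\le n-1$ in condition (II), which the hypothesis does not give you), and there is no uniform per-row bound on $v_i/u_i$ better than $n+1$ in general. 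The loss is built into the detour through $v=\sum_{k=0}^{n}A^k u$: you pay one factor of $n+1$ when converting the \emph{absolute} lower bound $(v-Av)_i\ge\alpha\beta u_{\min}$ into the \emph{relative} contraction $\delta=\alpha\beta u_{\min}/(n+1)$ of $v$, and a second factor of $n+1$ at the very end from $\|v\|_\infty/v_{\min}$.

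The missing idea is to dispense with $v$ and keep $u$ itself as the test vector. From $A^{n}u = u-\sum_{l=0}^{n-1}A^{l}(u-Au)$ and exactly your coordinatewise lower bound on the one relevant summand, one gets $(A^{n}u)_i\le u_i-\alpha\beta u_{\min}$; since $\|u\|_\infty\le 1$ this is $\le(1-u_{\min}\alpha\beta)\,u_i$, a multiplicative contraction of $u$ over $n$ steps. Hence $A^{d}u\le(1-u_{\min}\alpha\beta)^{\lfloor d/n\rfloor}u$, and summing the Neumann series in blocks of length $n$ gives $(I-A)^{-1}u\le\frac{n}{u_{\min}\alpha\beta}\,u$; comparing $\mathbf{1}\le u/u_{\min}$ then yields precisely $\|(I-A)^{-1}\|_\infty\le n/(u_{\min}^{2}\alpha\beta)$. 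This is what the paper does. (For the downstream applications in Theorem \ref{normbounds} your weaker bound would merely perturb the exponents, but as a proof of the lemma as stated it falls short of the claimed constant.)
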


\begin{proof}    First, suppose that some $i \in \{1,\ldots,n\}$,
satisfies condition $(I)$.  Then, we claim that it satisfies
condition (II), except that we must take $k=0$.
Specifically, if we let $k=0$, then since $A^0 = I$, and $(A^0)_{ii} = I_{ii} = 1 \geq \alpha$,
condition (II) boils down to 
$(Au)_i \leq (1 - \beta) u_i$.
So, to prove the statement, it suffices to only consider condition (II)
but to allow $k=0$ in that condition.

So, by assumption, given any $i \in \{1,...n\}$, there is some 
$0 \leq k \leq n$ and some $j$, such that
\begin{equation}\label{eq:first-for-the-perron-frob-lem}
(A^k)_{ij} \geq \alpha > 0
\end{equation}
and moreover $(Au)_j \leq (1 - \beta) u_j$, which we can rewrite as: 
\begin{equation}\label{eq:second-for-the-perron-frob-lem}
u_j - (Au)_j \geq \beta u_j   \ \  ( \ > 0  \ ) 
\end{equation}
Let $u_{\min} = \min_i u_i$.  We thus have that for every $i$:
\begin{eqnarray*} (A^n u)_i & = & (u - \sum_{l = 0}^{n-1} A^l  (u - Au))_i \\
			   & \leq & (u - A^k (u-Au))_i \quad \quad \quad \quad \mbox{(because $A^l \geq 0$ and $(u-Au) \geq 0$)}\\
& = &  (u_i -  \sum^n_{j'=1} A^k_{ij'}(u_{j'}-(Au)_{j'})\\
& \leq & (u_i -  A^k_{ij}(u_j-(Au)_j) \quad \quad \mbox{(again, because
$A^k_{i,j'} \geq 0$ and $(u_{j'}-(Au)_{j'}) \geq 0$ for every $j'$)}\\ 
& \leq & u_i - \alpha \beta u_j  \quad \quad \quad \quad \quad \quad \quad \mbox{(by
(\ref{eq:first-for-the-perron-frob-lem}) and (\ref{eq:second-for-the-perron-frob-lem}))}\\
	& \leq & u_i - \alpha \beta u_{\min} \\
	& \leq & u_i - u_{\min} \alpha \beta u_i \quad \quad \quad \quad \quad \quad
\mbox{(recalling that by assumption $\|u\|_{\infty} \leq 1$)}\end{eqnarray*}
We have that $A^n u \leq (1- u_{\min} \alpha \beta) u$. 
Of course $(1- u_{\min} \alpha \beta) < 1$. 
So we have that
$$A^{mn} u \leq (1- u_{\min} \alpha \beta)^{m} u$$
For any integer $d \geq 0$, $A^d u \leq u$. 
Thus also, for every $d \geq 0$,   
\begin{equation}\label{eq:simple-ineq-per-frob-prf} 
A^d u \leq 
 (1- u_{\min} \alpha \beta)^{\lfloor \frac{d}{n} \rfloor} u
\end{equation}
We thus have that, as $m \rightarrow \infty$,  $A^m u \rightarrow 0$.
Since $u > 0$ and $A \geq 0$, this implies that  as $m \rightarrow \infty$, 
$A^m \rightarrow 0$ (coordinate-wise), or in other words that 
$\lim_{m \rightarrow \infty} \| A^m \|_\infty  = 0$.
This is equivalent to saying that the spectral radius $\rho(A) < 1$.
Let us first recall that this implies 
that the inverse matrix $(I-A)^{-1} = \sum^{\infty}_{k=0} A^k \geq 0$ exists.

\begin{lemma}{(see, e.g., \cite{HornJohnson85}, Theorem 5.6.9 and Corollary 5.6.16)}
\label{series} If A is a square matrix with $\rho(A) < 1$ then $(I-A)$ is non-singular, 
the series $\sum_{k=0}^\infty A^k$ converges, and
	$(I - A)^{-1} = \sum_{k=0}^\infty A^k$.
\end{lemma}

\noindent We will use the following easy fact:

\begin{lemma} \label{pfnorm} If $M$ is a nonnegative $n \times n$ matrix, $u > 0$ is 
a vector with $\|u\|_\infty \leq 1$, and $\lambda > 0$ is a real number satisfying 
$M u \leq \lambda u$
then 
$$\|M\|_\infty \leq \frac{\lambda}{u_{\min}}$$\end{lemma}
\begin{proof} Since $M$ is non-negative,
$\|M\|_\infty$ is the maximum row sum of $M$. 
There is thus an $i$ such that
$$\|M\|_\infty = \sum_j m_{ij}$$
where $m_{i,j}$ are the entries of $M$. For this $i$:
\begin{eqnarray*} \lambda u_i & \geq & (M u)_i \\
						& = & \sum_j m_{ij} u_j\\
						& \geq & \sum_j m_{ij} u_{\min}\\
						& = & \|M\|_\infty u_{\min}\end{eqnarray*}
but $u_i \leq 1$ giving us
$||M||_\infty \leq \frac{\lambda}{u_{\min}}$.\end{proof}

\noindent Now we can complete the proof of Lemma \ref{quantpf}:

\begin{eqnarray*} (I-A)^{-1} u = (\sum_{k=0}^\infty A^k) u & = & \sum_{k =0}^\infty A^k u \\
					& \leq & \sum_{k =0}^\infty (1- u_{\min}  \alpha \beta)^{\lfloor \frac{k}{n} \rfloor} u  \quad \quad \quad 
\mbox{(by  (\ref{eq:simple-ineq-per-frob-prf}))}\\
		& = & (\sum_{m =0}^\infty n (1- u_{\min} \alpha \beta)^m  u \\
		& = & n \frac{1}{u_{\min} \alpha \beta} u \end{eqnarray*}
the last equality holding because  the geometric series sum gives
$\sum^\infty _{m=0} (1- u_{\min} \alpha \beta)^m = \frac{1}{u_{\min} \alpha \beta}$.
Lemma \ref{pfnorm}, with $M := (I-A)^{-1} = \sum^{\infty}_{k=0} A^k$,  and $\lambda :=
 n \frac{1}{u_{\min} \alpha \beta}$, now yields:
$$\|(I-A)^{-1}\|_\infty \leq n \frac{1}{u_{\min}^2\alpha \beta} $$
and this completes the proof of Lemma \ref{quantpf}.
\end{proof}

\begin{proof}[{\bf Proof of Theorem \ref{normbounds}.}]

Before we start to prove cases {\bf (i)} and {\bf (ii)} of the Theorem  we need to
develop some more lemmas.

\begin{proposition} \label{dependpos} 
For a PPS, $x=P(x)$, with LFP $q^* > 0$, for every variable $x_i$ either $P_i(0) > 0 $ or 
$x_i$ depends on a variable $x_j$ with $P_j(0) > 0$.\end{proposition}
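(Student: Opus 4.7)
The plan is to prove this by contrapositive: assume that $P_i(0) = 0$ and that $x_i$ does not depend on any variable $x_j$ with $P_j(0) > 0$, and derive that $q^*_i = 0$, contradicting $q^* > 0$.

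First I would define $S \subseteq \{1,\ldots,n\}$ to be the set consisting of index $i$ together with all indices $j$ such that $x_i$ depends on $x_j$ in the dependency graph. By our assumption, $P_j(0) = 0$ for every $j \in S$. The key structural observation is that $S$ is \emph{closed under direct dependency}: if $j \in S$ and $x_j$ depends directly on $x_l$ (i.e.\ $x_l$ appears with positive coefficient in $P_j(x)$), then $x_i$ depends on $x_l$ via $x_j$, so $l \in S$. Consequently, for each $j \in S$, the polynomial $P_j(x)$ involves only variables indexed by $S$.

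Next I would prove by induction on $k \geq 0$ that $(P^k(\mathbf{0}))_j = 0$ for every $j \in S$. The base case $k=0$ is immediate since $P^0(\mathbf{0}) = \mathbf{0}$. For the inductive step, fix $j \in S$ and evaluate $P_j$ at $P^k(\mathbf{0})$: because $P_j(x)$ has constant term $P_j(0) = 0$ and involves only variables indexed by $S$, and by the inductive hypothesis all those coordinates of $P^k(\mathbf{0})$ are $0$, every monomial of $P_j$ evaluates to $0$. Hence $(P^{k+1}(\mathbf{0}))_j = P_j(P^k(\mathbf{0})) = 0$.

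Finally, recall from \cite{rmc} (the Kleene iteration for monotone probabilistic systems) that $P^k(\mathbf{0}) \to q^*$ coordinatewise as $k \to \infty$. Combined with the induction above, this yields $q^*_j = 0$ for every $j \in S$; in particular $q^*_i = 0$, contradicting the hypothesis $q^* > 0$. This completes the proof. There is no serious obstacle here — the only thing to be careful about is the closure property of $S$, which ensures that the subsystem on $S$ is self-contained so that the zero-propagation induction goes through.
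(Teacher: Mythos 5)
Your proposal is correct and follows essentially the same route as the paper: the paper's proof likewise assumes $P_i(0)=0$ and that $x_i$ depends only on variables $x_j$ with $P_j(0)=0$, concludes $(P^k(0))_i=0$ for all $k$, and invokes $P^k(0)\to q^*$ to contradict $q^*>0$. Your write-up merely fills in the closure-of-$S$ and zero-propagation induction details that the paper leaves implicit.
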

\begin{proof} Suppose, for contradiction, that a variable $x_i$ has $P_i(0) = 0$ and depends only 
on variables $x_j$ which have $P_j(0) = 0$. Then $P^n_i(0) = 0$ for all $n$. But $P^n(0) \rightarrow q^*$ as $n \rightarrow \infty$ (see e.g.,. Theorem 3.1 from \cite{rmc}). So $q^*_i=0$.\end{proof}
\noindent The case when all the equations, $x_i = P_i(x)$, are linear has to be treated a little differently, 
and we tackle that first:

\begin{lemma} \label{Lbound} If $x=P(x)$ is a PPS that has no equations of form Q, and 
has LFP $q^* > 0$, then
$$\|(I - P')^{-1}\|_\infty \leq n2^{2|P|}$$
where $P'$ is the constant Jacobian matrix of $P(x)$,  (i.e., $P' = P'(x)$ for all $x$).\end{lemma}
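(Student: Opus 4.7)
The plan is to apply the quantitative Perron-Frobenius Lemma \ref{quantpf} with $A := P'$ and a carefully chosen positive vector $u$. Since this PPS has no equations of form Q, every equation $x_i = P_i(x)$ has form L, i.e., $P_i(x) = a_{i,0} + \sum_{j=1}^n a_{i,j} x_j$ with $\sum_{j=0}^n a_{i,j} \leq 1$; hence $P(x) = P(0) + P' x$ and $P'$ is a constant non-negative matrix.

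The key idea is to take $u := \mathbf{1}$ rather than $u := q^*$, to avoid paying for a $(q^*_{\min})^{-2}$ factor. We have $\|\mathbf{1}\|_\infty = 1$ and $(P'\mathbf{1})_i = \sum_{j=1}^n a_{i,j} = P_i(\mathbf{1}) - P_i(0) \leq 1 - P_i(0) \leq 1$, so $P'\mathbf{1} \leq \mathbf{1}$. With $u_{\min} = 1$, the bound produced by Lemma \ref{quantpf} will be exactly $n/(\alpha\beta)$, and our task reduces to showing we can pick $\alpha = \beta = 2^{-|P|}$.

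For the constant $\beta$: for every index $i$ with $P_i(0) > 0$, the coefficient $P_i(0)$ is a positive rational with encoding size at most $|P|$, so $P_i(0) \geq 2^{-|P|}$. Thus $(P'\mathbf{1})_i \leq 1 - 2^{-|P|} = (1-\beta)\cdot u_i$ with $\beta := 2^{-|P|}$, which is condition (I) of Lemma \ref{quantpf}. For indices $i$ with $P_i(0) = 0$, we invoke Proposition \ref{dependpos} (which uses $q^* > 0$) to find some $x_j$ with $P_j(0) > 0$ on which $x_i$ depends. Since all equations are of form L, this dependency is automatically ``only via variables of form L'', so Lemma \ref{deplbound}(iii) gives some $1 \leq k \leq n$ with $((P')^k)_{ij} \geq 2^{-|P|}$. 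Combined with the previous case for $j$, this is exactly condition (II) with $\alpha := 2^{-|P|}$ and the same $\beta$.

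Plugging $u_{\min} = 1$, $\alpha = \beta = 2^{-|P|}$ into Lemma \ref{quantpf} yields $\|(I-P')^{-1}\|_\infty \leq n/(2^{-2|P|}) = n \cdot 2^{2|P|}$, as required; the lemma also guarantees non-singularity of $I - P'$. The only mildly delicate point is the dichotomy on whether $P_i(0) = 0$, and this is precisely what Proposition \ref{dependpos} and Lemma \ref{deplbound}(iii) are tailored for, so no significant obstacle arises.
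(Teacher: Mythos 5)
Your proof is correct and follows essentially the same route as the paper: both apply Lemma \ref{quantpf} with $A:=P'$, $u:=\mathbf{1}$, and $\alpha=\beta=2^{-|P|}$, splitting the variables into those with a deficiency in their row and those that depend on such a variable via Lemma \ref{deplbound}(iii). The only cosmetic difference is that the paper's dichotomy is on whether the row sum $(P'\mathbf{1})_i$ is strictly below $1$ (``leaky'' variables), whereas yours is on whether $P_i(0)>0$; since a variable with zero constant term and full row sum forces $q^*_i=0$, the two partitions serve the same purpose and both yield the stated bound.
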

\begin{proof} First, note that $P'$ is a sub-stochastic matrix i.e. $P'1 \leq 1$. 
We will now call a variable, $x_i$, {\em leaky},
if $(P' 1)_i < 1$.  Note that since $P_i(x) \equiv \sum^n_{i=1}p_{i,j}x_j + p_{i,0}$,
this means that $(P' 1)_i = \sum^n_{j=1} \frac{\partial P_i(x)}{\partial x_j} =  \sum^n_{j=1} p_{i,j} < 1$. 

Note that since $q^* > 0$, it must be the case that for every variable $x_i$, either $x_i$
itself is leaky, or $x_i$  depends (possibly indirectly) on a leaky variable $x_j$.
This is because if a variable $x_i$ doesn't satisfy this, then $q^*_i = 0$, which can't be the case.

Since the entries of $P'$ are either $0$, $1$, or coefficients $p_{i,j}$ from $P(x)$, 
we see that for every {\em leaky} variable $x_i$, we have that 
$(P'1)_i = \sum^n_{j=1} p_{i,j} \leq (1 - 2^{-|P|})$ holds.\footnote{\label{footnote:almost-end}This inequality holds because
we assume each positive input probability $p_{i,j}$ is represented as a ratio $\frac{a_j}{b_j}$
of positive integers in the encoding of $x=P(x)$, 
and thus $1- \sum^n_{j=1} \frac{a_j}{b_j}$  can be represented as a ratio $\frac{a}{b}$ 
of two positive integers where the denominator is $b = \prod^n_{j=1} b_j$.
But then  $(1- \sum^n_{j=1} \frac{a_j}{b_j}) = \frac{a}{b} \geq 1/\prod^n_{j=1} b_j \geq \frac{1}{2^{|P|}}$.}

For any {\em non-leaky} variable $x_{r}$, there is a leaky variable $x_i$ that $x_r$ depends on. 
$x_r$ does not depend on any variables of form Q. Thus, by Lemma \ref{deplbound} (iii), there is a $k$,
$1 \leq k \leq n$, 
such that $((P')^k)_{ri} \geq 2^{-|P|}$.

We can thus apply Lemma \ref{quantpf} with matrix $A := P'$ 
and vector $u:=1$,  with $\alpha:= \beta := 2^{-|P|}$,
because we have just established that 
condition (I) of that Lemma applies to leaky variables $x_i$, and
condition (II) of that Lemma applies to non-leaky variables.
Thus Lemma \ref{quantpf} 
give us that $$\|(I -P')^{-1}\|_\infty \leq (\frac{1}{1_{\min}})^2 n2^{2|P|}$$
Of course, $1_{\min} = 1$.\end{proof}

We are now ready to prove parts {\bf (i)} and {\bf (ii)} of Theorem \ref{normbounds}.

\noindent {\bf (i)} When $q^* < 1$, we can say something stronger than Proposition \ref{dependpos}.
\begin{lemma}\label{lem:almost-at-the-end}
 For any PPS, x=P(x), with LFP $0 < q^* < 1$, for
any variable $x_i$ either
\begin{itemize}
\item[(I)]  the equation $x_i = P_i(x)$ is of form Q, or else $P_i(1) < 1$.

\item[(II)] $x_i$  depends on a variable $x_j$, such that $x_j=P_j(x)$ is of form Q,
or else $P_j(1) < 1$.
\end{itemize}\end{lemma}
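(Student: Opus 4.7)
The plan is to argue by contradiction, strengthening Proposition~\ref{dependpos} by using both bounds $q^* > \mathbf{0}$ and $q^* < \mathbf{1}$. Suppose some variable $x_i$ violates both (I) and (II). Then $x_i = P_i(x)$ is of form L with $P_i(\mathbf{1}) = 1$, and the same holds for every $x_j$ on which $x_i$ depends. I would first let $S = \{x_i\} \cup \{x_j : x_i \text{ depends on } x_j\}$ and observe, by concatenating paths, that $S$ is closed under the dependency relation: if $x_j \in S$ and $x_j$ depends on $x_k$, then $x_i$ depends on $x_k$, hence $x_k \in S$. Consequently the subsystem $x_S = P_S(x_S)$ involves only variables in $S$ on its right-hand side and is purely linear, of the form $x_S = b_S + A_S x_S$ with $A_S, b_S \geq 0$; moreover the hypothesis $P_j(\mathbf{1})=1$ for all $j \in S$ translates into the identity $A_S \mathbf{1} + b_S = \mathbf{1}$.

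Next I would pick any bottom SCC $B$ of the dependency graph restricted to $S$. Then variables in $B$ depend only on variables in $B$, the subsystem $x_B = b_B + A_B x_B$ is itself closed, $A_B \mathbf{1} + b_B = \mathbf{1}$, and $A_B$ is irreducible (because $B$ is strongly connected); furthermore the LFP of the original PPS restricted to $B$ coincides with the LFP of this subsystem. If $b_B = \mathbf{0}$, then $\mathbf{0}$ is itself a fixed point of the linear system $x_B = A_B x_B$, forcing $q^*_B = \mathbf{0}$ and contradicting $q^* > \mathbf{0}$.

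Otherwise $b_B \neq \mathbf{0}$ and $A_B \mathbf{1} \leq \mathbf{1}$ with strict inequality at every coordinate where $b_B$ is positive. The key step, which I expect to be the main obstacle, is to deduce $\rho(A_B) < 1$ from this. For this I would invoke Perron--Frobenius on the irreducible nonnegative matrix $A_B$: multiplying $A_B \mathbf{1} \leq \mathbf{1}$ on the left by its strictly positive left Perron eigenvector $w^{\top}$ and using $w^{\top} A_B = \rho(A_B)\, w^{\top}$ yields $\rho(A_B)\, w^{\top}\mathbf{1} \leq w^{\top}\mathbf{1}$, hence $\rho(A_B) \leq 1$; equality would force $A_B \mathbf{1} = \mathbf{1}$ coordinatewise (because $w > 0$ and $\mathbf{1} - A_B \mathbf{1} \geq 0$), contradicting the strict inequality. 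Hence $\rho(A_B) < 1$, so $(I - A_B)^{-1} = \sum_{k \geq 0} A_B^k \geq 0$ exists and gives the unique nonnegative fixed point $q^*_B = (I-A_B)^{-1} b_B$. But $(I-A_B)\mathbf{1} = b_B$ then yields $q^*_B = \mathbf{1}$, contradicting $q^* < \mathbf{1}$. Singleton bottom SCCs, with or without a self-loop, fall into one of these two cases by direct calculation, so no further argument is needed.
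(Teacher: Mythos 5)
Your proof is correct, and it reaches the same contradiction as the paper's proof but by a genuinely different route for the key step. Both arguments begin identically: assuming $x_i$ violates (I) and (II), the closed set of variables that $x_i$ depends on (together with $x_i$) forms a purely linear sub-PPS whose coefficient rows all sum to exactly $1$, and the goal is to force $q^* = \mathbf{1}$ on (part of) this subsystem. The paper then simply invokes Lemma~\ref{Lbound} (already proved in the same appendix via the ``leaky variable'' argument and the quantitative Perron--Frobenius bound of Lemma~\ref{quantpf}) to conclude that $I-P'_{D_i}$ is non-singular, so the linear system has a unique solution, which must be $\mathbf{1}$ since $P_{D_i}(\mathbf{1})=\mathbf{1}$. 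You instead descend to a bottom SCC $B$, split on whether $b_B = \mathbf{0}$ (where the irreducible stochastic case contradicts $q^* > \mathbf{0}$, echoing Proposition~\ref{dependpos}) or $b_B \neq \mathbf{0}$ (where the left Perron eigenvector argument gives $\rho(A_B) < 1$ and hence $q^*_B = \mathbf{1}$, contradicting $q^* < \mathbf{1}$), with singleton SCCs checked directly. Your version is self-contained and uses only qualitative spectral-radius facts, at the cost of an SCC decomposition and a case analysis; the paper's version is shorter because it reuses quantitative machinery that it needs anyway for the surrounding Theorem~\ref{normbounds}. Both are valid.
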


\begin{proof} Suppose, for contradiction, that there is a variable
  $x_i$ for which neither (I) nor (II) holds.  Let $D_i$ be the set of
  variables that $x_i$ depends on, unioned together with $\{x_i\}$ itself.  For
  any vector $x$, consider the subvector $x_{D_j}$, which consists of
  the components of $x$ with coordinates in $D_i$.  We can consider
  the subset of the equations $x_{D_i} = P_{D_i}(x)$. By transitivity
  of dependency, $P_{D_i}(x)$ contains only terms in the variables
  $x_{D_i}$. So $x_{D_i} = P_{D_i}(x) = P_{D_i}(x_{D_i})$ is itself a PPS. Since
  by assumption neither (I) nor (II) hold
  for  $x_i$, we have that $x_{D_i} = P_{D_i}(x_{D_i})$ contains no
  equations of form Q and $P_{D_i}(1) = 1$.   Since, therefore, $P_{D_i}(x_{D_i})$ is
  linear, we can rewrite $x_{D_i} = P_{D_i}(x_{D_i})$ as $x_{D_i} =
  P'_{D_i} x_{D_i} + P_{D_i}(0)$ and hence $(I-P'_{D_i}) x_{D_i} =
  P_{D_i}(0)$. Lemma \ref{Lbound} applied to the PPS $x_{D_i} =
  P_{D_i}(x_{D_i})$ gives us that, in particular, $(I-P'_{D_i})$ is
  non-singular. Consequently $x_{D_i} = P_{D_i}(x_{D_i})$ has a unique
  solution. But we already said that $1$ is a solution, $P_{D_i}(1) = 1$, and so $q^*_{D_i}
  = 1$. This contradicts $q^* < 1$. So there can be no $x_i$ 
for which neither (I) nor (II) holds.\end{proof}

To obtain the conclusion of case {\bf (i)} of Theorem \ref{normbounds}, 
assuming all of the premises of the Theorem's statement, we 
will now aim to use Lemma \ref{quantpf}, applied to $A := P'(\frac{1}{2}(y+q^*)$,
and $u := 1-q^*$. 

By Lemma \ref{lem:almost-at-the-end},
every variable $x_i$ either depends on a variable, or is itself equal to a variable, $x_j$, 
such that $x_j=P_j(x)$ is of form Q or $P_j(1) < 1$. We can 
clearly assume that such a dependence is linear in the sense of 
Lemma \ref{deplbound} (iii), and thus for any $x_i$ there is a $0 \leq k \leq n$ with 
$(P'(1)^k)_{ij} \geq 2^{-|P|}$, for some $x_j$ with either $x_j=P_j(x)$ of form Q or $P_j(1) < 1$.

We need to show and that for such an $x_j$ we have $(P'(\frac{1}{2}(y+q^*))(1-q^*) < 1-q^*$.

For any variable $x_j$ such that $x_j=P_j(x)$ has form Q, we have that $x_j = x_kx_l$ for some variables $k$ and $l$. 
Thus, since $\frac{\partial P_j(x)}{\partial x_k} = x_l$ and $\frac{\partial P_j(x)}{\partial x_l} = x_k$,
we have that:
\begin{eqnarray*} 
(P'(\frac{1}{2}(q^*+y))(1-q^*))_j	
& = & \frac{1}{2}(q^*_k + y_k)(1-q^*_l) + \frac{1}{2}(q^*_l + y_l)(1-q^*_k) \\
& = & \frac{1}{2}((q^*_k + 1) - (1- y_k))(1-q^*_l) + \frac{1}{2}((q^*_l  + 1 ) - (1- y_l))(1-q^*_k) \\
& = & \frac{1}{2} ( (q^*_k + 1)(1-q^*_l) - (1- y_k) (1-q^*_l) + (q^*_l + 1 )(1-q^*_k) - (1- y_l)(1-q^*_k) )\\
				& = & \frac{1}{2}(2 - 2q^*_kq^*_l - (1-y_l)(1-q^*_k) - (1-y_k)(1-q^*_l)) \\
			& \leq & \frac{1}{2}(2 - 2q^*_kq^*_l - (1-y)_{\min}((1-q^*)_k +(1-q^*)_l)) \\
			& \leq & \frac{1}{2}(2 - 2q^*_kq^*_l - (1-y)_{\min}((1-q^*)_k +(1-q^*)_l - (1-q^*)_k(1-q^*)_l)) \\
& = &  (1-q^*_j) -  \frac{1}{2} (1-y)_{\min}  (1-q^*_j)\\
			& = & (1-\frac{1}{2}(1-y)_{\min})(1 - q^*)_j \end{eqnarray*}
If, on the other hand, $x_j$ has $P_j(1) < 1$, then $x_j = P_j(1)$ has form L, and,
as in the proof of Lemma \ref{Lbound}, and specifically footnote (\ref{footnote:almost-end}), we 
must have 
\begin{equation}\label{eq:for-9999999}
P_j(1) \leq 1 - 2^{-|P|}
\end{equation}
We thus have that: 
\begin{eqnarray*} (P'(\frac{1}{2}(q^*+y))(1-q^*))_j	& = & \sum_{l=1}^n p_{j,l} (1-q^*)_l \\
                                                        & = & (\sum^n_{l=1} p_{j,l}) + p_{j,0} - 
(\sum^n_{l=1} p_{j,l} q^*_l) - p_{j,0}\\
                                                        & = & P_j(1) - P_j(q^*)\\
							& = & P_j(1) - q^*_j\\
                                                        & \leq & (1-2^{-|P|}) - q^*_j \quad 
\quad \quad \mbox{(by (\ref{eq:for-9999999}))}\\
							& = & (1 - q^*)_j - 2^{-|P|} \\                        
							& \leq & (1 - 2^{-|P|}) (1 - q^*)_j \end{eqnarray*}

To be able to apply
Lemma \ref{quantpf}, it only remains to show that  $P'(\frac{1}{2}(y +q^*))) (1-q^*) \leq (1-q^*)$.
But Lemma 3.5 of \cite{ESY12} established that $P'(\frac{1}{2}(1 +q^*))) (1-q^*) \leq (1-q^*)$.
Since $0 \leq y < 1$,  it follows by monotonicity of $P'(z)$ in $z$ that  
$P'(\frac{1}{2}(y +q^*))) (1-q^*) \leq (1-q^*)$.

Thus, we can apply Lemma \ref{quantpf}, by setting $A:= P'(\frac{1}{2}(y +q^*))$,  
$u:= (1-q^*)$,   $\alpha:= 2^{-|P|}$,  $\beta := \min \{ \frac{1}{2}(1-y)_{\min},  2^{-|P|} \}$,
and we obtain:
$$\|(I -P'(\frac{1}{2}(y +q^*)))^{-1}\|_\infty \leq n (1-q^*)_{\min}^{-2} \text{max } \{2(1-y)_{\min}^{-1}, 2^{|P|}\} 2^{|P|}$$
Recall that, by Lemma \ref{1-qbound}, $(1-q^*)_{\min} \geq 2^{-4|P|}$.  Thus
\begin{eqnarray*}
\|(I -P'(\frac{1}{2}(y +q^*)))^{-1}\|_\infty & \leq &  n2^{9|P|} \text{max } \{2(1-y)_{\min}^{-1}, 2^{|P|}\}\\
& \leq & 2^{10 |P|}\text{max } \{2(1-y)_{\min}^{-1}, 2^{|P|}\}
\end{eqnarray*}

\noindent We now prove part {\bf (ii)} of Theorem \ref{normbounds}.
 If $x = P(x)$ is strongly connected, then if there is an $x_i$ with $x_i = P_i(x)$ of form Q, then every variable depends on it. If there are no such variables, then Lemma \ref{Lbound} gives that, 
for any $x \in \mathbb{R}^n$, $\|I - P'(x)\|_\infty \leq n2^{2|P|}$ and we are done. So we can assume that there is an $x_i$ with $x_i = P_i(x)$ of form Q. We quote the following from \cite{rmc}:

\begin{lemma}[see proof of Theorem 8.1 in \cite{rmc}] If $x = P(x)$ is strongly connected and 
$q^* > 0$, then $q^* = 1$ iff $\rho(P'(1)) \leq 1$.\end{lemma}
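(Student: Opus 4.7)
My plan is to combine the Perron--Frobenius theorem for non-negative irreducible matrices with the mean-value identity $P(a) - P(b) = P'\!\!\left(\frac{a+b}{2}\right)(a - b)$ (Lemma 3.3 of \cite{ESY12}), evaluated at the two points $\mathbf{1}$ and $\frac{\mathbf{1} + q^*}{2}$. Since the preceding reduction in the excerpt places us in the case where $x=P(x)$ contains an equation of form Q, the Jacobian $P'(\mathbf{1})$ has exactly the sparsity pattern of the variable dependency graph, so strong connectivity makes $P'(\mathbf{1})$ irreducible. I would also invoke the following structural ``all-or-nothing'' observation: if $q^*_j = 1$ for some $j$, then $P_j(q^*) = 1$ forces $q^*_k = 1$ for every $x_k$ occurring in $P_j$ with positive coefficient, and strong connectivity propagates this throughout; so either $q^* = \mathbf{1}$ or $q^* < \mathbf{1}$ coordinate-wise.

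For the $(\Rightarrow)$ direction I would argue by contraposition: assume $\rho := \rho(P'(\mathbf{1})) > 1$, and let $v > 0$ be a Perron right eigenvector, $P'(\mathbf{1})v = \rho v$. For small $\epsilon > 0$ put $y_\epsilon := \mathbf{1} - \epsilon v \in [0,1)^n$. The mean-value identity gives $P(y_\epsilon) = P(\mathbf{1}) - \epsilon P'\!\!\left(\mathbf{1} - \tfrac{\epsilon v}{2}\right)v \le \mathbf{1} - \epsilon P'\!\!\left(\mathbf{1} - \tfrac{\epsilon v}{2}\right)v$, and continuity of $P'$ together with $P'(\mathbf{1})v = \rho v > v$ forces $P'(\mathbf{1} - \tfrac{\epsilon v}{2})v \ge v$ for all sufficiently small $\epsilon$. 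Hence $P(y_\epsilon) \le y_\epsilon$. By monotonicity of $P$ a trivial induction then gives $P^k(\mathbf{0}) \le y_\epsilon$ for every $k$, so $q^* = \lim_k P^k(\mathbf{0}) \le y_\epsilon < \mathbf{1}$, contradicting $q^* = \mathbf{1}$.

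For the $(\Leftarrow)$ direction I would again argue by contraposition, starting from $q^* < \mathbf{1}$ coordinate-wise (available from the all-or-nothing remark). Setting $u := \mathbf{1} - q^* > 0$, the mean-value identity yields $P(\mathbf{1}) - q^* = P'\!\!\left(\frac{\mathbf{1} + q^*}{2}\right) u$; using $P(\mathbf{1}) = \mathbf{1}$, which is implicit in the branching-process / 1-exit-RMC setting of \cite{rmc} in which the Lemma is applied, this simplifies to $P'\!\!\left(\frac{\mathbf{1} + q^*}{2}\right) u = u$. Since $\frac{\mathbf{1} + q^*}{2} > 0$ preserves the sparsity pattern of $P'(\mathbf{1})$, the matrix $P'\!\!\left(\frac{\mathbf{1} + q^*}{2}\right)$ is also irreducible, and $u > 0$ is a strictly positive eigenvector with eigenvalue $1$; Perron--Frobenius thus forces $\rho\!\left(P'\!\!\left(\frac{\mathbf{1} + q^*}{2}\right)\right) = 1$. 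To conclude I would invoke strict monotonicity of the spectral radius: since there is a form Q equation $x_i = x_j x_k$ with $q^*_j, q^*_k < 1$, the $(i,j)$-entry $\frac{1 + q^*_k}{2}$ of $P'\!\!\left(\frac{\mathbf{1} + q^*}{2}\right)$ is strictly less than the corresponding entry $1$ of $P'(\mathbf{1})$; for two irreducible non-negative matrices with matching sparsity this strict entry-wise domination yields $\rho(P'(\mathbf{1})) > \rho\!\left(P'\!\!\left(\frac{\mathbf{1} + q^*}{2}\right)\right) = 1$, contradicting the hypothesis.

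The hard part, I expect, is invoking the strict monotonicity of the spectral radius cleanly in the final step of the $(\Leftarrow)$ direction --- this is a classical Perron--Frobenius statement (see, e.g., Horn and Johnson) but requires one to verify carefully that both matrices are irreducible and share the same sparsity pattern before applying it. A secondary subtlety is the identity $P(\mathbf{1}) = \mathbf{1}$ in the $(\Leftarrow)$ direction: it is not an explicit hypothesis of the Lemma as stated, but is built into the probabilistic setting of \cite{rmc} (and is in any case forced by $q^* = \mathbf{1}$ itself, the conclusion we are trying to reach).
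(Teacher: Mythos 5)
The paper does not actually prove this lemma --- it is quoted from the proof of Theorem 8.1 of \cite{rmc} and used as a black box --- so your proof has to stand on its own, and for the most part it does. The direction the paper actually uses, $q^* = \mathbf{1} \Rightarrow \rho(P'(\mathbf{1})) \le 1$, is handled correctly: the perturbation $y_\epsilon = \mathbf{1} - \epsilon v$ along the Perron eigenvector, the exact mean-value identity (valid because SNF polynomials have degree at most $2$), and the conclusion $P(y_\epsilon) \le y_\epsilon \Rightarrow q^* \le y_\epsilon < \mathbf{1}$ via Kleene iteration all go through, and this direction needs only $P(\mathbf{1}) \le \mathbf{1}$, which every PPS satisfies.

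Your $(\Leftarrow)$ direction is where your ``secondary subtlety'' carries real weight: $P(\mathbf{1}) = \mathbf{1}$ is not merely convenient but \emph{necessary}, and without it the lemma as stated is false for general PPSs. For instance, $x_1 = x_2 x_2$, $x_2 = \tfrac14 + \tfrac12 x_1$ is strongly connected with $0 < q^* < \mathbf{1}$ (one computes $q^*_2 = 1 - 1/\sqrt{2}$), yet $P'(\mathbf{1}) = \bigl(\begin{smallmatrix} 0 & 2 \\ 1/2 & 0 \end{smallmatrix}\bigr)$ has spectral radius exactly $1$. So the ``if'' direction requires the additional hypothesis $P(\mathbf{1})=\mathbf{1}$, which does hold for the termination systems of \cite{rmc} from which the lemma is imported; granting it, your argument --- $P'(\frac{\mathbf{1}+q^*}{2})u = u$ with $u = \mathbf{1}-q^* > 0$ forcing $\rho = 1$ there, then strict monotonicity of the spectral radius under strict entrywise domination by an irreducible matrix --- is correct. (Your restriction to the case where a form-Q equation exists is harmless: in the all-linear case with $P(\mathbf{1})=\mathbf{1}$, strong connectivity and $q^*>0$ already force $q^*=\mathbf{1}$, so the contrapositive hypothesis cannot arise.) The one genuine logical slip is your parenthetical claim that $P(\mathbf{1})=\mathbf{1}$ is ``forced by $q^*=\mathbf{1}$ itself, the conclusion we are trying to reach'': in a proof of $(\Leftarrow)$ by contraposition you have \emph{assumed} $q^* \ne \mathbf{1}$, so you cannot recover $P(\mathbf{1})=\mathbf{1}$ from the conclusion. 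Note finally that the paper invokes only the $(\Rightarrow)$ direction (in Theorem \ref{normbounds}(ii), under the hypothesis $q^*=\mathbf{1}$, which does give $P(\mathbf{1})=P(q^*)=\mathbf{1}$), so nothing downstream is affected by the missing hypothesis.
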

$P'(1)$ is a non-negative irreducible matrix. Perron-Frobenius theory gives us that there is a 
positive eigenvector $v > 0$, with associated eigenvalue $\rho(P'(1))$, the spectral radius of $P'(1)$, 
i.e., such that $P'(1)v = \rho(P'(1))v$. But $\rho(P'(1)) \leq 1$  so $P'(1)v \leq v$.

\begin{lemma}[cf Lemma 5.9 of \cite{lfppoly}] $\frac{\|v\|_\infty}{v_{\min}} \leq 2^{|P|}$.
\end{lemma}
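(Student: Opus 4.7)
The plan is to exploit the eigenvector equation $P'(1)v = \rho v$ (where $\rho := \rho(P'(1)) \leq 1$ by the cited lemma from \cite{rmc}, together with the standing hypothesis $q^* = 1$), the strong connectivity of $x=P(x)$, and the quantitative dependency bound of Lemma \ref{deplbound}(i). Pick an index $i$ with $v_i = \|v\|_\infty$ and an index $j$ with $v_j = v_{\min}$. Since $x=P(x)$ is strongly connected, $x_j$ depends on $x_i$ in the dependency graph, and hence by Lemma \ref{deplbound}(i) there exists an integer $k$, with $1 \leq k \leq n$, such that $(P'(1)^k)_{j,i} \geq 2^{-|P|}$.

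Next, iterate the eigenvalue equation to obtain $P'(1)^k v = \rho^k v$. Reading off the $j$-th coordinate, and using non-negativity of $P'(1)^k$ and $v$ to drop all but the $i$-th term from the sum, gives
\[
  \rho^k v_j \;=\; \bigl(P'(1)^k v\bigr)_j \;=\; \sum_{l=1}^n (P'(1)^k)_{j,l}\, v_l \;\geq\; (P'(1)^k)_{j,i}\, v_i \;\geq\; 2^{-|P|}\, \|v\|_\infty .
\]
Since $\rho \leq 1$ and $v_j \geq 0$, we have $\rho^k v_j \leq v_j$, so $v_j \geq 2^{-|P|}\|v\|_\infty$. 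Rearranging,
\[
  \frac{\|v\|_\infty}{v_{\min}} \;=\; \frac{\|v\|_\infty}{v_j} \;\leq\; 2^{|P|},
\]
which is the required bound.

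There is no real obstacle here: the quantitative combinatorial content is already packaged into Lemma \ref{deplbound}(i), and the spectral hypothesis $\rho \leq 1$ (forced by $q^* = 1$) is exactly what is needed to push the factor $\rho^k$ out of the way. The only subtlety worth flagging is the edge case $\rho = 0$, but this is immediately ruled out by the display above (which would otherwise yield $0 \geq 2^{-|P|}\|v\|_\infty > 0$); equivalently, irreducibility of $P'(1)$ together with the existence of an equation of form Q established earlier in the proof of part (ii) forces $\rho > 0$ via Perron--Frobenius.
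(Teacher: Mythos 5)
Your proof is correct and is essentially the paper's own argument: both pick the coordinates achieving $\|v\|_\infty$ and $v_{\min}$, use strong connectivity plus Lemma \ref{deplbound}(i) to get an entry $(P'(1)^k)_{j,i}\geq 2^{-|P|}$ linking them, and then read off the corresponding coordinate of $P'(1)^k v=\rho^k v\leq v$. The only difference is cosmetic (your index labels are swapped relative to the paper's, and the paper's citation of Lemma \ref{deplbound}(ii) rather than (i) appears to be a typo that you have implicitly corrected).
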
             
\begin{proof} For any $x_i$, $x_j$, there is some $1 \leq k \leq n$ with $(P'(1)^k)_{ij} > 0$. We know that $P'(1)^k v \leq v$. So $(P'(1)^k)_{ij} v_j \leq (P'(1)^k v)_i = \rho(P'(1))^k v_i \leq v_i$. But by Lemma \ref{deplbound} (ii), 
$(P'(1)^k)_{ij} \geq 2^{-|P|}$. So $\frac{v_j}{v_i} \leq 2^{|P|}$. There are $v_i$,$v_j$ that achieve $v_i = v_{\min}$ and $v_j = \|v\|_\infty$, so we are done. \end{proof}
We can normalise the top eigenvector, $v$, so we can assume that $\|v\|_\infty = 1$. 
Then $v_{\min} \geq 2^{-|P|}$. 
Consider any equation $x_i = P_i(x)=x_jx_k$ of form Q (we have already dealt with the case 
where no such equation exists):
\begin{eqnarray*} (P'(y)v)_i & = & y_j v_k + y_k v_j \\
& \leq &  y_{\max} v_k + y_{\max} v_j  \quad \quad  \mbox{(where $y_{\max} := \max_r y_r$)}\\
			   & \leq & (1-(1-y)_{\min}) (v_k + v_j) \\
& = & (1-(1-y)_{\min}) (P'(1)v)_i\\
				& = & (1-(1-y)_{\min}) \rho(P'(1)) v_i \\
				& \leq & (1- (1-y)_{\min})v_i  \quad \quad  \mbox{(because  $\rho(P'(1)) \leq 1$)}
\end{eqnarray*}
Now we can apply
Lemma \ref{quantpf}, with $A := P'(y)$,  $u := v$,  $\alpha := 2^{-|P|}$, and 
$\beta :=  (1-y)_{\min}$,   to obtain that:
$$\|(I-P'(y))^{-1}\|_\infty \leq n v_{\min}^{-2} (1-y)_{\min}^{-1} 2^{|P|}$$
Inserting our bound for $v_{\min}$, namely $v_{\min} \geq 2^{-|P|}$, yields:
\begin{eqnarray*}
\|(I-P'(y))^{-1}\|_\infty & \leq & n 2^{3|P|} (1-y)_{\min}^{-1}\\
& \leq &  2^{4|P|} (1-y)_{\min}^{-1}
\end{eqnarray*}
\end{proof}

\bibliographystyle{plain}
\end{document}